\documentclass[fleqn]{CSML}

\pdfoutput=1

\usepackage{lastpage}

\lmcsheading{}{1--\pageref{LastPage}}{}{}%
{Oct.~03, 2016}{Jul.~28, 2017}{}
\usepackage[hang]{footmisc}
\setlength{\footnotemargin}{-1.5mm}

\makeatletter
\ifFN@para
\else
  \long\def\@makefntext#1{%
    \ifFN@hangfoot
      \bgroup
      \setbox\@tempboxa\hbox{%
        \ifdim\footnotemargin>0pt
          \hb@xt@\footnotemargin{\@makefnmark\hss}%
        \else
          \@makefnmark\hskip-\footnotemargin      
        \fi
      }%
      \leftmargin\wd\@tempboxa
      \rightmargin\z@
      \linewidth \columnwidth
      \advance \linewidth -\leftmargin
      \parshape \@ne \leftmargin \linewidth
      \footnotesize
      \@setpar{{\@@par}}%
      \leavevmode
      \llap{\box\@tempboxa}%
      \parskip\hangfootparskip\relax
      \parindent\hangfootparindent\relax
    \else
      \parindent1em
      \noindent
      \ifdim\footnotemargin>\z@
        \hb@xt@ \footnotemargin{\hss\@makefnmark}%
      \else
        \ifdim\footnotemargin=\z@
          \llap{\@makefnmark}%
        \else
          \llap{\hb@xt@ -\footnotemargin{\@makefnmark\hss}}%
        \fi
      \fi
    \fi
    \footnotelayout#1%
    \ifFN@hangfoot
      \par\egroup
    \fi
  }
\fi
\makeatother

\newtheorem*{claim}{Claim}

\newcommand{\orddownarrow}{\mathord{\downarrow}}

\newcommand{\textqt}[1]{``#1''}

\usepackage{amsmath}
\usepackage{amsfonts}
\usepackage{amssymb}
\usepackage{graphicx}
\usepackage{hyperref}
\hypersetup{hidelinks}

\setcounter{MaxMatrixCols}{10}

\keywords{Rank-width, quasi-tree, join-tree, ordered tree, algebra,
regular term, monadic second-order logic.}

\ACMCCS{[{\bf Theory of computation}]: Logic; Decision
problems ; [{\bf Mathematics of
computing}]: Trees; Model theory;}
\amsclass{05C05, 68R10}

\begin{document}

\title[Generalized trees]{Algebraic and logical descriptions\\
of generalized trees}
\author[B.~Courcelle]{Bruno Courcelle}
\address{LaBRI, CNRS, 351 Cours de la Lib\'{e}ration, 33405 Talence, France}
\email{courcell@labri.fr}
\thanks{This work has been supported by the French National Research Agency
(ANR) within the IdEx Bordeaux program \textqt{Investments for the future}, CPU,
ANR-10-IDEX-03-02.}

\begin{abstract}
\noindent \emph{Quasi-trees} generalize trees in that the unique ``path''
between two nodes may be infinite and have any countable order type. They
are used to define the rank-width of a countable graph in such a way that it
is equal to the least upper-bound of the rank-widths of its finite induced
subgraphs. $\emph{Join-trees}$ are the corresponding directed trees. They
are useful to define the modular decomposition of a countable graph. We also
consider \emph{ordered join-trees}, that generalize rooted trees equipped
with a linear order on the set of sons of each node. We define algebras
with finitely many operations that generate (via infinite terms) these
generalized trees. We prove that the associated regular objects (those
defined by regular terms) are exactly the ones that are the unique models
of monadic second-order sentences. These results use and generalize a
similar result by W.\ Thomas for countable linear orders.
\end{abstract}

\maketitle


\section*{Introduction}

We define and study countable generalized trees, called \emph{quasi-trees,}
such that the unique \textqt{path} between two nodes may be infinite and have any
order type, in particular that of rational numbers. Our motivation comes
from the notion of \emph{rank-width}, a complexity measure of finite graphs
investigated first in \cite{Oum} and \cite{OumSey}. Rank-width is based on
graph decompositions formalized with finite undirected trees of maximal
degree at most 3. In order to extend it to countable graphs in such a way
that \emph{the compactness property} holds, \emph{i.e.}, that the rank-width
of a countable graph is the least upper bound of those of its finite induced
subgraphs, we base decompositions on \emph{quasi-trees}\footnote{Compactness does not hold if one uses trees. For a comparison, the natural
extension of tree-width to countable graphs has the compactness property 
\cite{KriTho} without needing quasi-trees.} \cite{Cou14}. Quasi-trees arise
as least upper bounds of increasing sequences of finite trees, $%
H_{1}\subseteq H_{2}\subseteq \dots\subseteq H_{n}\subseteq \dots,$ where $%
H_{i+1}$ is obtained from $H_{i}$ by the addition of a new node, either
linked to an existing one by a new edge or inserted on an existing edge. If
one inserts infinitely many nodes on an edge of some $H_{i}$, then, the
least upper bound is not a tree but a quasi-tree.

\emph{Join-trees} can be seen as directed quasi-trees. A join-tree is a
partial order $(N,\leq )$ such that every two elements have a least upper
bound (called their \emph{join}) and each set $\{y\mid y\geq x\}$ is
linearly ordered. The modular decomposition of a countable graph is based on
an \emph{ordered join-tree} \cite{CouDel}.

Our objective is to obtain finitary descriptions (usable in algorithms) of
generalized trees that are of the following three types: join-trees,
ordered join-trees and quasi-trees. For this purpose, we will define, for
each type of generalized tree, an algebra based on finitely many operations
such that the finite and infinite terms over these operations define all
generalized trees of this type. The \emph{regular generalized trees} are
those defined by \emph{regular terms}, \emph{i.e.} that have finitely many
different subterms, equivalently, that are the unique solutions of certain
finite equation systems. We will prove that a generalized tree is regular if
and only if it is \emph{monadic second-order definable}, \emph{i.e.}, is the
unique finite or countable model (up to isomorphism) of a monadic
second-order sentence.

As a special case, we have linear orders. A countable linear order whose
elements are labelled by letters from a finite alphabet is called an \emph{%
arrangement}. The linear order of a \emph{regular arrangement} is the
left-right order of the leaves of the tree representing a \emph{regular term}%
, equivalently, the lexicographic ordering of the words of a regular
language. Regular arrangements were first defined and studied in \cite%
{Cou78} and \cite{Hei}, and their monadic second-order definability was
proved in \cite{Tho86}. We will use the latter result for proving its
extension to our generalized trees.

The study of regular linear orders has been continued by Bloom and \'{E}sik
in \cite{BE03,BE05}. They have also studied the \emph{algebraic linear
orders}, defined similarly from \emph{algebraic trees} (infinite terms that
are solutions of certain first-order equation systems, cf. \cite{Cou83}) or
equivalently, as lexicographic orderings of the words of deterministic
context-free languages \cite{BE10,BE11}.

In Sections 1 and 2, we review definitions and basic results. In Section 3,
we first study binary join-trees and then, we extend the definitions and
results concerning them to all join-trees. In Section 4, we study ordered
join-trees, and, in Section 5, we study quasi-trees. An introductory
article on these results is \cite{Cou15}.

\section{Orders, trees and terms}


All sets, trees and logical structures are finite or countably infinite. We
denote by $X\uplus Y$ the union of sets $X$ and $Y$ if they are disjoint.\
Isomorphism of ordered sets, trees and other logical structures is denoted
by $\simeq $. The restriction of a relation $R$ or a function $f$ defined on
a set $V$ to a subset $W$ of $V$ is denoted by $R\upharpoonright W$ or $%
f\upharpoonright W$ respectively.

For partial orders $\leq,\preceq,\sqsubseteq$, \dots we denote respectively by 
$<,\prec,\sqsubset$, \dots the corresponding strict orders and $X<Y$ means
that $x<y$ for every $x\in X$ and $y\in Y$.

Let $(V,\leq)$ be a partial order. The least upper bound of $x$ and $y$ is
denoted by $x\sqcup y$ if it exists and is called their \emph{join}. The
notation $x\bot y$ means that $x$ and $y$ are incomparable. A \emph{\mbox{line\/}%
\footnote{In \cite{Cou14} we call \emph{line} a linearly ordered subset, without
imposing the convexity property.} }is a subset $Y$ of $V$ that is linearly
ordered and satisfies the following \emph{convexity property}: if $x,z\in Y$%
, $y\in V $ and $x\leq y\leq z$, then $y\in Y$. Particular notations for
convex sets (not necessarly linearly ordered) are $[x,y]$ denoting $\{z\mid
x\leq z\leq y\}$, $\mathopen{]}x,y]$ denoting $\{z\mid x<z\leq y\}$, $\mathopen{]}-\infty,x]$
denoting $\{y\mid y\leq x\}$ (even if $V$ is finite), $\mathopen{]}x,+\infty\mathclose{[}$
denoting $\{y\mid x<y\}$ etc. If $X\subseteq V$, then $\mathord\downarrow(X)$ is the
union of the sets $\mathopen]-\infty,x]$ for $x$ in $X$.

The first infinite ordinal and the linear order $(\mathbb{N},\leq )$ are
denoted by $\omega $.

Let $A$ be a finite set that is linearly ordered by $\leq $, and $A^{\ast }$
be the set of finite words over $A$; the empty word is $\varepsilon $. This
set is linearly ordered by the lexicographic order $\leq _{lex}$ defined by 
$u\leq _{lex}v$ if and only if $v=uw$ or, $u=wax$ and $u=wby$ for some $%
w,x,y $ in $A^{\ast }$ and $a,b$ in $A$ such that $a<b$. Every finite or
countable linear order is isomorphic to $(L,\leq _{lex})$ for some set $%
L\subseteq \{0,1\}^{\ast }$ that is \emph{prefix-free}, which means that, if 
$u,uv\in L$ where $v\in \{0,1\}^{\ast },$ then $v=\varepsilon $ (Theorem 1.7
of \cite{Cou78}). The case where $L$ is regular has been studied in \cite%
{BE03,BE05,Cou78,Hei,Tho86}.

\subsection{Trees}

A \emph{tree} is a possibly empty, finite or countable, undirected graph
that is connected and has no cycles. Hence, it has neither loops nor
parallel edges (it has no two edges with same ends). The set of nodes of a
tree $T$ is $N_{T}$.

A \emph{rooted tree} is a nonempty tree equipped with a distinguished node
called its \emph{root}. The \emph{level }of a node $x$ is the number of
edges of the path between it and the root and $Sons(x)$ denotes the set of
its sons. We define on $N_{T}$ the partial order $\leq _{T}$ such that $%
x\leq _{T}y$  if and only if $y$ is on the unique path between $x$ and the
root. The least upper bound of $x$ and $y$, denoted by $x\sqcup _{T}y,$ is
their least common ancestor. We will specify a rooted tree $T$ by $%
(N_{T},\leq _{T})$ and we will omit the index $T$ when the considered tree
is clear. For a node $x$ of $T$, the \emph{subtree issued from} $x,$ denoted
by $T/x,$ is defined as $(N_{T/x},\leq _{T}\upharpoonright N_{T/x})$ where $%
N_{T/x}:= \mathopen]-\infty, x]$.

A partial order $(N,\leq)$ is $(N_{T},\leq_{T})$ for some rooted tree $T$ if
and only if it has a largest element $max$ and for each $x\in N$, the set $%
[x,max]$ is finite and linearly ordered. These conditions imply that any two
nodes have a join.

An \emph{ordered tree} is a rooted tree such that each set $Sons(x)$ is
linearly ordered by an order $\sqsubseteq_{x}$.

\subsection{Finite and infinite terms}


Let $F$ be a finite set of operations, each $f$ in $F$ being given with an
arity $\rho (f)$. We call $(F,\rho )$ a \emph{signature}. The maximal
arity of a symbol is denoted by $\rho (F)$. A \emph{term over} $F$ is finite
or infinite. We denote by $T^{\infty }(F)$ the set of all terms over $F$\
and by $T(F)$ the set of finite ones. A typical example of an infinite
term, easily describable linearly, is, with $f$ binary and $a$ and $b$
nullary, the term $t_{\infty }:=f(a,f(b,f(a,f(b,f(\dots))))))$ that is the
unique solution in $T^{\infty }(F)$ of the equation $t=f(a,f(b,t))$.

Positions in terms are designated by Dewey words\footnote{For the term $t=f(a,g(b,c))$ taken as an example, $\varepsilon $ denotes
the occurrence of $f$ and the unique occurrences of $a,g,b$ and $c$ are
denoted respectively by the words 1,2,21 and 22.} over $\{1,\dots,\rho (F)\}$
considered as an alphabet. The set $Pos(t)$ of positions of a term $t$ is
ordered by $\leq _{t}$, the reversal of the prefix order on words. A term $%
t$ can be seen as a labelled, ordered and rooted tree whose set of nodes is $%
Pos(t)$. We have $Pos(t_{\infty })=2^{\ast }\uplus 2^{\ast }1$, where $%
2^{\ast }$ is the set of occurrences of $f$, $(22)^{\ast }1$ is the set of
occurrences of $a$ and $(22)^{\ast }21$ is that of $b$.

There is a canonical $F$-algebra structure on $T^{\infty }(F)$, of which $%
T(F)$ is a subalgebra. If $\mathbb{M}=\left\langle M,(f_{\mathbb{M}%
})_{f\in F}\right\rangle $ is an $F$-algebra, a \emph{value mapping} is a
homomorphism $h:T^{\infty }(F)\rightarrow \mathbb{M}.$ Its restriction to
finite terms is uniquely defined. In some cases, we will use algebras with
two sorts. The corresponding modifications of the definitions are
straightforward, see \cite{Gog} for details.

\subsection*{The partial order on terms.}

Let $F$ contain a special nullary symbol $\Omega$ intended to be the least
term. We define on $T(F)$ a partial order $\underline{\ll}$ by the following induction: $\Omega  \underline{\ll } t$ for any $t\in T(F)$, and $f(t_{1},\dots,t_{k}) \underline{\ll } g(t_{1}^{\prime
},\dots,t_{k^{\prime }}^{\prime })$ if and only if $k=k^{\prime }$, $f=g$ and 
$t_{i}$ $\underline{\ll }$ $t_{i}^{\prime }$ for $i=1,..,k$.

For terms in $T^{\infty}(F)$, the definition (subsuming the previous one) is:    $t$ $\underline{\ll }$ $t^{\prime }$ if and only if $Pos(t)\subseteq
Pos(t^{\prime })$ and every occurrence in $t$ of a symbol in $F-\{\Omega \}$
is an occurrence in $t^{\prime }$ of the same symbol (an occurrence in $t$
of $\Omega $ is an occurrence in $t^{\prime }$ of any symbol).

Every increasing sequence of terms has a least upper bound. More details on
this order can be found in \cite{Cou83,Gog}.

If $\mathbb{M}=\left\langle M,(f_{\mathbb{M}})_{f\in F}\right\rangle $ is a
partially ordered $F$-algebra, whose order is $\omega $-complete (increasing
sequences have least upper bounds) and whose operations are $\omega $%
-continous (they preserve such least upper bounds), then, one can define the
value in $\mathbb{M}$ of an infinite term as the least upper bound of the
values of the finite smaller terms \cite{Cou83,Gog}. However, this approach
fails for our algebras of generalized trees, because no appropriate partial
order can be defined, as proved in Section 6 of \cite{Cou78}. Instead of
orders, this article uses category theory. This categorical setting could
be used here but direct constructions of generalized trees from terms (in
Definitions \ref{D:3.15}, \ref{D:3.28} and \ref{D:4.9}) are simpler and better formalizable in
logic.

\subsection*{Regular terms}

A term $t\in T^{\infty}(F)$ as \emph{regular} if there is a mapping $h$ from 
$Pos(t)$ into a finite set $Q$ and a mapping $\tau:Q\rightarrow F\times
Seq(Q)$ (where $Seq(Q)$ denotes the set of finite sequences of elements of $%
Q $) such that, if $u$ is an occurrence of a symbol $f$ of arity $k$, then $\tau
(h(u))=(f,(h(u_{1}),\dots,h(u_{k})))$ where $(u_{1},\dots,u_{k})$ is the
sequence of sons of $u$.

Intuitively, $\tau$ is the transition function of a top-down deterministic
automaton with set of states $Q$; $h(\varepsilon)$ is its initial (root)
state and $h$ defines its unique run. This is equivalent to requiring that $%
t$ has finitely many different subterms, or is a component of a finite
system of equations that has a unique solution in $T^{\infty}(F)$. (The set 
$Q$ can be taken as the set of unknowns of such a system, see \cite{Cou83}.)

The above term $t_{\infty}$ is regular with $Q:=\{1,2,3,4\}$, $%
\tau(1)=(f,(2,3))$, $\tau(2)=(a,()),\tau(3)$ $=(f,(4,1)),\tau(4)=(b,()).$


We associate with a term $t$ the relational structure $\left\lfloor
t\right\rfloor :=(Pos(t),\leq _{t},(br_{i})_{1\leq i\leq \rho (F)},$ $%
(lab_{f})_{f\in F})$ where $br_{i}(u)$ is true if and only if $u$ is the $%
i $-th son of his father and $lab_{f}(u)$ is true if and only if $f$ occurs
at position $u$. A term $t$ can be reconstructed in a unique way from any
relational structure isomorphic to $\left\lfloor t\right\rfloor .$

A term $t$ is regular if and only if $\left\lfloor t\right\rfloor $ is MS
definable, \emph{i.e.}, is, up to isomorphism, the unique model of a monadic
second-order sentence. This result is due to Rabin \cite{Rab}, see Thomas 
\cite{Tho90}.

\subsection{Arrangements and labelled sets}

We review a notion introduced in \cite{Cou78} and further studied in \cite%
{Hei,Tho86}. Let $X$ be a set. A linear order $(V,\leq )$ equipped with a
labelling mapping $lab:V\rightarrow X$ is called an \emph{arrangement} \emph{over} $X$. It is \emph{simple} if $lab$ is injective. We denote by $%
\mathcal{A}(X)$ the set of arrangements over $X$. We will generalize
arrangements to tree structures.

An arrangement over a finite set $X$ considered as an alphabet can be
considered as a generalized word. A linear order $(V,\leq )$ is identified
with the simple arrangement $(V,\leq ,Id_{V})$ such that $Id_{V}(v):=v$ for
each $v\in V$. In the sequel, $Id$ denotes the identity function on any set.

An \emph{isomorphism of arrangements} $i:(V,\leq,lab)\rightarrow(V^{\prime
},\leq^{\prime},lab^{\prime})$ is an order preserving bijection $%
i:V\rightarrow V^{\prime}$ such that $lab^{\prime}\circ i=lab.$ Isomorphism
is denoted by $\simeq$.

If $w=(V,\leq ,lab)\in \mathcal{A}(X)$ and $r:X\rightarrow Y$, then, $%
\overline{r}(w):=(V,\leq ,r\circ lab)$ is an arrangement over $Y$. If $r$
maps $V$ into $Y$, then $\overline{r}((V,\leq ))$ is the arrangement $%
(V,\leq ,r)$ over $Y$ since we identify $(V,\leq )$ to the simple
arrangement $(V,\leq ,Id)$.


The concatenation of linear orders yields a concatenation of arrangements
denoted by $\bullet $. We denote by $\Omega $ the empty arrangement and by $%
a$ the one reduced to a single occurrence of $a\in X$. Clearly, $w\bullet
\Omega =\Omega \bullet w=w$ for every $w\in \mathcal{A}(X)$. The infinite
word $w=a^{\omega }$ is the arrangement over $\{a\}$ with underlying order $%
\omega $; it is described by the equation $w=a\bullet w$. Similarly, the
arrangement $w=a^{\eta }$  over $\{a\}$ with underlying linear order $(%
\mathbb{Q},\leq )$ (that of rational numbers) is described by the equation $%
w=w\bullet (a\bullet w)$.

Let $X$ be a set of first-order variables (they are nullary symbols) and $%
t\in T^{\infty }(\{\bullet ,\Omega \}\cup X).$ Hence, $Pos(t)\subseteq
\{1,2\}^{\ast }$. The \emph{value }of $t$ is the arrangement $val(t):=(%
\mathrm{Occ}(t,X),\leq _{lex},lab)$ where $\mathrm{Occ}(t,X)$ is the set of
positions of the elements of $X$ and $lab(u)$ is the symbol of $X$ occurring
at position $u$. We say that $t$ \emph{denotes }$w$ if $w$ is isomorphic to 
$val(t),$ and that $w$ is the \emph{frontier} of the syntactic tree of $t$ 
\cite{Cou78}.

For an example, $t_{\bullet }:=\bullet (a,\bullet (b,\bullet (a,\bullet
(b,\bullet (\dots\dots\dots))))))$ denotes the infinite word $abab\dots$. Its
value is defined from $\mathrm{Occ}(t_{\bullet },\{a,b\})=2^{\ast }1$,
lexicographically ordered (we have $1<21<221<\dots)$ by taking $lab(2^{i}1):=a$
if $i$ is even and $lab(2^{i}1):=b$ if $i$ is odd. The arrangements $%
a^{\omega }$ and $a^{\eta }$ are denoted respectively by $t_{1}$ and $t_{2}$
that are the unique solutions in $T^{\infty }(\{\bullet ,\Omega ,a\})$ of
the equations $t_{1}=a\bullet t_{1}$ and $t_{2}=t_{2}\bullet (a\bullet
t_{2}) $.


An arrangement is \emph{regular} if it is denoted by a regular term. The
term $t_{\bullet }$ is regular. The arrangements $a^{\omega }$ and $%
a^{\eta }$ are regular\footnote{The subalgebra of regular arrangements is characterized by equational axioms
in \cite{BE05}.}.

An arrangement is regular if and only if it is a component of the \emph{%
\mbox{initial\/}\footnote{in the sense of category theory, see \cite{Cou78}.} solution of a regular
system of equations} over $F$, or also, the value of a \emph{regular
expression} in the sense of \cite{Hei}. We will use the result of \cite%
{Tho86} that an arrangement over a finite alphabet is regular if and only if
is \emph{monadic second-order \mbox{definable\/}\footnote{The article \cite{Car} establishes that a set of arrangements is
recognizable if and only if it is monadic second-order definable.}}. (We
review monadic second-order logic in the next section). For this result, we
represent an arrangement $w=(V,\leq ,lab)$ over a finite set $X$ by the
relational structure $\left\lfloor w\right\rfloor :=(V,\leq ,(lab_{a})_{a\in
X})$ where $lab_{a}(u)$ is true if and only if $lab(u)=a$.

If $r$ maps $X$ to $Y$ and $w\in \mathcal{A}(X)$ is regular, then $%
\overline{r}(w)$ is regular. This is clear from the definitions because the
substitution of $r(x)$ for $x\in X$ in a regular term in $T^{\infty
}(\{\bullet ,\Omega \}\cup X)$ yields a regular term \cite{Cou83}.


An $X$-\emph{labelled set} is a pair $m=(V,lab)$ where $lab:V\rightarrow X$%
, or, equivalently, a relational structure $(V,(lab_{a})_{a\in X})$ where
each element of $V$ belongs to a unique set $lab_{a}$. We denote by $set(w)$
the $X$-labelled set obtained by forgetting the linear order of an
arrangement $w$ over $X$. Up to isomorphism, an $X$-labelled set $m$ is
defined by the cardinalities in $\mathbb{N}\cup \{\omega \}$ of the sets $%
lab_{a}$, hence is a finite or countable \emph{multiset of elements of} $X$,
in other words, a mapping that indicates for each $a\in X$ the number, in $%
\mathbb{N}\cup \{\omega \},$ of its occurrences in $m$.

If $X$ is finite, each $X$-labelled set is MS$_{\mathit{fin}}$-definable, 
\emph{i.e.}, is the unique, finite or countably infinite model up to
isomorphism of a sentence of \emph{monadic second-order logic} extended with
a set predicate $Fin(U)$ expressing that a set $U$ is finite. It is also 
\emph{regular}, \emph{i.e.}, is $set(val(t))$ for some regular term in $%
T^{\infty }(\{\bullet ,\Omega \}\cup X).$ The notion of regularity is thus
trivial for $X$-labelled sets when $X$ is finite.

\section{Monadic second-order logic and related notions.}

\emph{Monadic second-order \mbox{logic\/}\footnote{\emph{MS} will abreviate\emph{\
monadic second-order} in the sequel.}} extends first-order logic by the use
of \emph{set variables} $X,Y,Z$ \dots denoting subsets of the domain of the
considered logical structure, and the atomic formulas $x\in X$ expressing
membership of $x$ in $X$. We call \emph{first-order} a formula where set
variables are not quantified. For example, a first-order formula can express
that $X\subseteq Y$. A \emph{sentence} is a formula without free variables.

Let $\mathcal{R}$ be a finite set of relation symbols, each symbol $R$
being given with an arity $\rho (R)$. We call it a \emph{relational signature%
}. For every set of variables $\mathcal{W}$, we denote by $MS(\mathcal{R},%
\mathcal{W})$ the set of MS\ formulas written with $\mathcal{R}$ and free
variables in $\mathcal{W}$. An $\mathcal{R}$-\emph{structure} is a tuple $%
S=(D_{S},(R_{S})_{R\in \mathcal{R}})$ where $D_{S}$ is a finite or countably
infinite set, called its \emph{domain,} and each $R_{S}$ is a relation on $%
D_{S}$ of arity $\rho (R)$. A property $P$ of $\mathcal{R}$-structures is 
\emph{monadic second-order expressible} if it is equivalent to the validity,
in every $\mathcal{R}$-structure $S$, of a monadic second-order sentence $%
\varphi $, which we denote by $S\models \varphi $.

For example, a graph $G$ without parallel edges can be identified with the
\{$edg$\}-structure $(V_{G},edg_{G})$ where $V_{G}$ is its vertex set and $%
edg_{G}(x,y)$ means that there is an edge from $x$ to $y,$ or between $x$
and $y$ if $G$ is undirected. To take an example, 3-colorability is
expressed by the MS sentence:
\begin{multline*}
\exists X,Y\big[X\cap Y=\emptyset\wedge\lnot\exists u,v(edg(u,v)\wedge
[(u\in X\wedge v\in X)\vee \\
(u\in Y\wedge v\in Y)\wedge(u\notin X\cup Y\wedge v\notin
X\cup Y)])\big].
\end{multline*}

Many properties of partial orders $(N,\leq )$ can also be expressed by MS\
formulas. Here are examples that will be useful in our proofs.

\begin{enumerate}[label=(\alph*)]
\item The formula $Lin(X)$ defined as $\forall x,y.[(x\in X\wedge y\in
X)\Longrightarrow (x\leq y\vee y\leq x)]$ expresses that a subset $X$ of $N$,
partially ordered by $\leq$, is linearly ordered.

\item The formula $Lin(X)\wedge \exists a,b.[\min (X,a)\wedge \max
(X,b)\wedge \theta (X,a,b)]$ expresses that $X$ is linearly ordered and
finite, where $\min (X,a)$ and $\max (X,b)$ are first-order formulas
expressing respectively that $X$ has a least element $a$ and a largest one $%
b$, and $\theta (X,a,b)$ is an MS formula expressing, (1) that each element $x$ of $X$ except $b$ has a successor $c$ in $X$ (i.e., $c$
is the least element of $\{y\in X\mid y>x\}$), and (2), that  $(a,b)\in Suc^{\ast },$ where $Suc$ is the above defined
successor relation (depending on $X$) and $Suc^{\ast }$ is its reflexive
and transitive closure.
\end{enumerate}

\noindent Property (b) is expressed by the MS\ formula:
\[
\forall U\left[U\subseteq X\wedge a\in U\wedge\forall x,y((x\in U\wedge(x,y)\in
Suc)\Longrightarrow y\in U) \Longrightarrow b\in U\right].
\]
First-order formulas expressing $U\subseteq X$, $(x,y)\in Suc$ and Property (a) are easy to build. 
Without a linear order, the finiteness of a set $X$\
is not MS expressible. It is thus useful, in some cases, to enrich MS\
logic with a \emph{finiteness predicate} $Fin(X)$ expressing that the set $X$
is finite. We denote by MS$_{\mathit{fin}}$ the corresponding extension of
MS logic.


If $S$ is a relational structure $(N,\leq ,(br_{i})_{1\leq i\leq \rho
(F)},(lab_{f})_{f\in F})$ isomorphic to the structure $\left\lfloor
t\right\rfloor $ representing a term $t\in T^{\infty }(F)$, then a linear
order $\sqsubseteq $ on $N$ is definable by a first-order formula as follows:
\begin{multline*}
x\sqsubseteq y:\Longleftrightarrow x\leq y\vee \text{($x\bot y$ and $x$ is below
the $i$-th son of } x\sqcup y \\
\text{ and $y$ is below the $j$-th son of $x\sqcup y$ where $ i<j$)}.
\end{multline*}
The definability of linear orders by MS formulas is studied in \cite{BluCou}.


\emph{Monadic second-order transductions} are transformations of logical
structures specified by MS or MS$_{\mathit{fin}}$ formulas. We will use them
in the proofs of Theorems \ref{T:3.21}, \ref{T:3.30}, \ref{T:4.11}, \ref{T:5.7} and \ref{T:5.10}. For
these proofs, we will only need very simple transductions, said to be \emph{%
noncopying and parameterless} in \cite{CouEng}. We call them \emph{%
MS-transductions}.

Let $\mathcal{R}$ and $\mathcal{R}^{\prime}$ be two relational signatures.
A \emph{definition scheme} of type $\mathcal{R}\rightarrow\mathcal{R}%
^{\prime}$ is a tuple of formulas of the form $\mathcal{D}%
=\langle\chi,\delta,(\theta _{R})_{R\in\mathcal{R}^{\prime}}\rangle$ such
that $\chi\in MS(\mathcal{R})$, $\delta\in MS(\mathcal{R},\{x\})$ and $%
\theta_{R}\in MS(\mathcal{R},\{x_{1},\dots,x_{\rho(R)}\})$ for each $R$ in $%
\mathcal{R}^{\prime}$. We define $\widehat{\mathcal{D}}(S):=S^{%
\prime}=(D_{S^{\prime}},(R_{S^{\prime}})_{R\in\mathcal{R}^{\prime}})$  as
follows:

\begin{itemize}
\item $S^{\prime}$ is defined if and only if $S\models\chi,$

\item $D_{S^{\prime }}$ is the set of elements $d$ of $D_{S}$ such that $S\models \delta (d),$

\item $R_{S^{\prime }}$ is the set of tuples $(d_{1},\dots ,d_{\rho(R)}) $ of elements of $D_{S}$ such that 
$S\models \theta _{R}(d_{1},\dots ,d_{\rho (R)})$.
\end{itemize}
Our main tool is the following (well-known) result:

\begin{thm}\label{T:2.1} 
Let $\mathcal{D}$ be a definition scheme as above
and $\varphi \in MS_{\mathit{fin}}(\mathcal{R}^{\prime },\mathcal{W})$.\
There exists a formula $\varphi ^{\mathcal{D}}\in MS_{\mathit{fin}}(\mathcal{%
R},\mathcal{W})$ such that, for every $\mathcal{R}$-structure $S$, for every 
$\mathcal{W}$-assignment $\nu $ in $D_{S}$, we have $(S,\nu )\models \varphi
^{\mathcal{D}}$ if and only if 
\begin{enumerate}
\item
$S\models\chi$ (so that $\widehat{\mathcal{D}}(S)=S^{\prime}$ is
well-defined), 

\item
$\nu $ is an $\mathcal{W}$-assignment in $D_{S^{\prime }}$ (\emph{i.e.}, $\nu (x)\in D_{S^{\prime }}$ and $\nu (X)\subseteq D_{S^{\prime}}$ for $x,X\in \mathcal{W}$) and 

\item $(S^{\prime },\nu )\models \varphi .$
\end{enumerate}

\end{thm}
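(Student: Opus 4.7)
The plan is to define $\varphi^{\mathcal{D}}$ by structural induction on $\varphi$, using the standard relativization technique for noncopying parameterless MS-transductions. First I would introduce an auxiliary translation $\psi \mapsto \widetilde{\psi}$ on subformulas, and then set
\[
\varphi^{\mathcal{D}} \;:=\; \chi \wedge \alpha_{\mathcal{W}} \wedge \widetilde{\varphi},
\]
where $\alpha_{\mathcal{W}}$ is the first-order formula expressing condition (2), namely $\bigwedge_{x\in\mathcal{W}}\delta(x) \wedge \bigwedge_{X\in\mathcal{W}}\forall y.(y\in X \Rightarrow \delta(y))$. This packages conditions (1) and (2) at the top level, so that the inductive invariant only needs to discuss condition (3).

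For the inductive definition of $\widetilde{\cdot}$: atomic formulas $R(x_1,\dots,x_{\rho(R)})$ with $R\in\mathcal{R}'$ are replaced by $\theta_R(x_1,\dots,x_{\rho(R)})$, while equalities and membership atoms $x\in X$ are left unchanged. Boolean connectives are translated componentwise. First-order quantifiers are relativized using $\delta$: rewrite $\exists x.\psi$ as $\exists x.(\delta(x) \wedge \widetilde{\psi})$ and $\forall x.\psi$ as $\forall x.(\delta(x) \Rightarrow \widetilde{\psi})$. Set quantifiers are relativized by requiring that witnesses lie entirely in the definable domain: rewrite $\exists X.\psi$ as $\exists X.\bigl(\forall y.(y\in X \Rightarrow \delta(y)) \wedge \widetilde{\psi}\bigr)$, and dually for $\forall X$. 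Finally, the finiteness predicate is handled trivially: translate $Fin(X)$ as $Fin(X)$ itself, since once $X$ is constrained to be a subset of $D_{S'} \subseteq D_S$, it is finite in $S'$ if and only if it is finite in $S$.

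Correctness is verified by a routine induction on $\psi$: assuming $S\models\chi$, so that $S' := \widehat{\mathcal{D}}(S)$ is well-defined, and assuming $\nu$ is a $\mathcal{W}$-assignment into $D_{S'}$, one checks that $(S,\nu)\models\widetilde{\psi}$ if and only if $(S',\nu)\models\psi$. The atomic cases follow from the definition $R_{S'} := \{(d_1,\dots,d_{\rho(R)}) \mid S\models\theta_R(d_1,\dots,d_{\rho(R)})\}$; the Boolean cases are immediate; the quantifier cases use $D_{S'} = \{d\in D_S \mid S\models\delta(d)\}$ together with the observation that the relativization clauses restrict witnesses exactly to $D_{S'}$ (and its powerset); and the $Fin$ case follows from preservation of finiteness under the inclusion $D_{S'}\subseteq D_S$.

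There is no genuine obstacle here; the only subtlety worth flagging is the need to relativize \emph{set} quantifiers so that quantified monadic variables range only over subsets of $D_{S'}$ rather than over arbitrary subsets of $D_S$, and to check that the MS$_{\mathit{fin}}$ extension is closed under this translation. Both points are handled by the two clauses for $\exists X$ and for $Fin(X)$ described above.
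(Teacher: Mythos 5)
Your proposal is correct and follows essentially the same route as the paper, which simply invokes the Backwards Translation Theorem of Courcelle and Engelfriet and sketches the substitution of $\theta_R$ for atomic formulas, the conjunction with $\chi$ and the formula for Property (2), and the fact that $Fin(X)$ translates to itself. You have merely spelled out the relativization of first-order and set quantifiers to $\delta$, which is exactly the content of the cited theorem.
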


\begin{proof} 
The proof is given in \cite{CouEng} (\emph{Backwards
Translation Theorem}, Theorem 7.10) for finite structures, hence the
finiteness predicate $\mathit{Fin}(X)$ is of no use. However, it works as
well for infinite structures and formulas written with the predicate $\mathit{Fin}(X)$ that translates back to itself (under the assumption that $%
\nu (X)\subseteq D_{S^{\prime }}$).

The formula $\varphi ^{\mathcal{D}}$ is the conjunction of $\chi $, a
formula expressing Property (2) and a formula $\varphi ^{\prime }$ obtained from $%
\varphi $ by replacing each atomic formula $R(x_{1},\dots,x_{r})$ by $\theta
_{R}(x_{1},\dots ,x_{\rho (R)}),$ that is, by its definition given by $%
\mathcal{D}$. If $\varphi $ is a sentence, then $\mathcal{W}=\emptyset $
and Property (2) is trivially true. 
\end{proof}


It follows that, if the monadic theory of a class of structures $\mathcal{S}$
is decidable (which means that one can decide whether a given sentence is
true in all structures of $\mathcal{S}$) and $\mathcal{S}^{\prime }=\widehat{%
\mathcal{D}}(\mathcal{S})$ for some definition scheme $\mathcal{D}$, then
the monadic theory of $\mathcal{S}^{\prime }$ is decidable, because $%
S^{\prime }\models \varphi $ for all $S^{\prime }$ in $\mathcal{S}^{\prime
} $ if and only if $S\models \chi \Longrightarrow \varphi ^{\mathcal{D}}$
for all $S$ in $\mathcal{S}$.

\section{Join-trees}

Join-trees have been used in \cite{CouDel} for defining the modular
decomposition of countable graphs.

\subsection{Join-trees, join-forests and their structurings}

Join-trees are defined as particular partial-orders. Finite nonempty
join-trees correspond to finite rooted trees.

\begin{defi}[Join-tree]\label{D:3.1} \ %
\begin{enumerate}[label=(\alph*),midpenalty=99,beginpenalty=0]
\item A \emph{\mbox{join-tree\/}\footnote{The article \cite{LS} defines a \emph{tree} as a partial order of any
cardinality that satisfies Condition (2). A join-forest is a tree in that
sense.}} is a pair $J=(N,\leq )$ such that:
\begin{enumerate}[label=(\arabic*)]
\item

$N$ is a possibly empty, finite or countable set called the set of \emph{%
nodes},

\item

$\leq$ is a partial order on $N$ such that, for every node $x,$
the set $[x,+\infty\lbrack$ (the set of nodes $y\geq x$) is linearly ordered,

\item

every two nodes $x$ and $y$ have a join $x\sqcup y$.
\end{enumerate}

A minimal node is a \emph{leaf}. If $N$ has a largest element, we call it
the \emph{root} of $J$. The set of strict upper bounds of a nonempty set $%
X\subseteq N$ is a line $L$; if $L$ has a smallest element, we denote it by $%
\widehat{X}$ and we say that $\widehat{X}$ is \emph{the top} of $X$. Note
that $\widehat{X}\notin X.$


\item A \emph{join-forest} is a pair $J=(N,\leq)$ that satisfies Conditions
(1), (2) and the following weakening of (3):
\begin{enumerate}
\item[(3')] if two nodes have an upper bound, they have a join.
\end{enumerate}


The relation that two nodes have a join is an equivalence. Let $N_{s}$ for $%
s\in S$ be its equivalence classes and $J_{s}:=(N_{s},\leq\upharpoonright
N_{s}),$ more simply denoted by $(N_{s},\leq)$ by leaving implicit the
restriction to $N_{s}$. Then each $J_{s}$ is a join-tree, and $J$ is the
union of these pairwise disjoint join-trees, called its \emph{components.}


\item A join-forest $J=(N,\leq )$ is \emph{included in} a join-forest $%
J^{\prime }=(N^{\prime },\leq ^{\prime }),$ denoted by $J\subseteq J^{\prime
},$ if $N\subseteq N^{\prime }$, $\leq $ is $\leq ^{\prime }\upharpoonright
N $ and $\sqcup $ is $\sqcup ^{\prime }\upharpoonright N$; if $J$ and $%
J^{\prime }$ are join-trees, we say also that $J$ is a \emph{subjoin-tree of}
$J^{\prime }$.

\end{enumerate}
\end{defi}

\begin{defi}[Direction and degree]\label{D:3.2} 

Let $J=(N,\leq )$ be a join-forest, and $x$ be one of its nodes. Let $\sim $%
 be the equivalence relation on $\mathopen]-\infty ,x[$ such that $z\sim y$ if and
only if $z\sqcup y<x$. Each equivalence class $C$ is called a\emph{direction of}
$J$ \emph{relative to} $x,$ and we have $\widehat{C}=x$. The
set of directions relative to $x$ is denoted by $Dir(x)$ and the \emph{degree%
} of $x$ is the number of its directions. The leaves are the nodes of degree   0.
A join-tree is \emph{binary} if its nodes have degree at most 2. We call it
a \emph{BJ-tree} for short. 
\end{defi}

For concatenating vertically two join-trees, we need that every join-tree
has a distinguished \textqt{branch}, a line, that we call an axis. As we want to
construct join-trees with operations including concatenation, all subtrees
must be of the same type, hence must have axes. It follows that a join-tree
will be partionned into lines, one of them being its axis, the others being
the axes of its subtrees. We call such a partition a structuring.

\begin{defi}[Structured join-trees and join-forests]\label{D:3.3}
  \leavevmode
\begin{enumerate}[label=(\alph*)]
\item Let $J=(N,\leq)$ be a join-tree. A \emph{structuring} of $J$ is a set $%
\mathcal{U}$ of nonempty lines forming a partition of $N$ that satisfies
some conditions, stated with the following notation~: if $x\in N$, then $%
U(x) $ denotes the line of $\mathcal{U}$ containing $x$, $U_{-}(x):=U(x)\cap
\mathopen]-\infty,x\mathclose[$ and  $U_{+}(x):=U(x)\cap\lbrack x,+\infty\mathclose\lbrack$. (The set $%
[x,+\infty\mathclose\lbrack$  has no top but it can have a greatest element). The
conditions are~:

\begin{enumerate}[label=(\arabic*)]
\item
exactly one line $A$ of $\mathcal{U}$ is \emph{upwards closed} (\emph{i.e.%
}, $[x,+\infty\mathclose{\lbrack} \subseteq A$ if $x\in A$), hence, has no strict
upper bound and no top; we call it the \emph{axis}; each other line $U$ has
a top $\widehat{U}$,

\item
 for each $x$ in $N$, the sequence $y_{0},y_{1},y_{2},\dots$ such
that $y_{0}=x,$ $y_{i+1}=\widehat{U(y_{i})}$ is finite; its last element is $%
y_{k}\in A$ ($y_{k+1}$ is undefined) and we call $k$ the $depth$ of $x$.
\end{enumerate}

The nodes on the axis are those at depth 0. The lines $[y_{i},y_{i+1}[$\
for $i\in \lbrack 0,k-1]$ and $[y_{k},+\infty \lbrack $ are convex subsets
of pairwise distinct lines of $\mathcal{U}$. We have
\[
[x,+\infty \mathclose{\lbrack}
=[y_{0},y_{1}\mathclose{[}\uplus \lbrack y_{1},y_{2}\mathclose{[}\uplus \dots\uplus \lbrack
y_{k},+\infty \mathclose\lbrack,
\]
where $[y_{i},y_{i+1}\mathclose[=U_{+}(y_{i})$ for each $i<k$%
, $[y_{k},+\infty \mathclose\lbrack =U_{+}(y_{k})$ $\subseteq A$ and the depth of $%
y_{i}$ is $k-i$.


We call such a triple $(N,\leq,\mathcal{U})$ a \emph{structured join-tree},
an \emph{SJ-tree} for short.
%
%
Every linear order is an SJ-tree whose elements are all of depth 0.

\bigskip

\noindent   \textbf{Remark}. If $x<A$ for some $x$, then $A$ has a smallest element,
which is the node $y_{k}$ of Condition 2) (because if $z\in A$ is smaller
than $y_{k}$, then $x<z$, which contradicts the observation that $%
[y_{k-1},y_{k}\mathclose[\subseteq U(y_{k-1})$ and $U(y_{k-1})\cap A=\emptyset ).$

\bigskip

\item   Let $J=(N,\leq)$ be a join-forest whose components are $J_{s}$, $s\in S$%
. A \emph{structuring} of $J$ is a set $\mathcal{U}$ of nonempty lines
forming a partition of $N$ such that, if $\mathcal{U}_{s}$ is the set of
lines of $\mathcal{U}$ included in $N_{s}$ (every line of $\mathcal{U}$ is
included in some $N_{s}$), then each triple $(N_{s},\leq,\mathcal{U}_{s})$
is a structuring of $J_{s}$.
\end{enumerate}

\end{defi}

\begin{exa}\label{E:3.4} 
Figure 1 shows a structuring
 $\{U_{0},\dots,U_{5}\}$ of a binary join-tree. The axis is $U_{0}$. The
directions relative to $x_{2}$ are $U_{0-}(x_{2})\cup U_{1}$  and $%
U_{2}\cup U_{3}$. The maximal depth of a node is 2.
\end{exa}

\begin{figure}[ptb]
\begin{center}
\includegraphics[
width=2.9706in
]%
{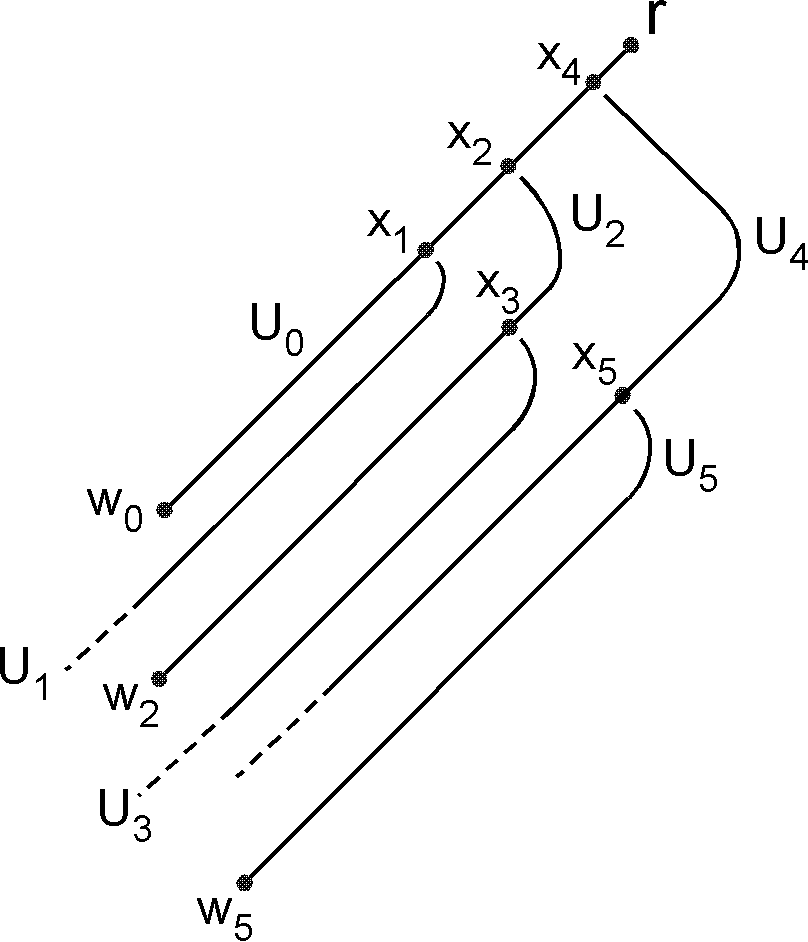}%
\caption{A structured binary join-tree.}%
\end{center}
\end{figure}

\begin{prop}\label{P:3.5} 
Every join-tree and, more generally,
every join-forest has a structuring.
\end{prop}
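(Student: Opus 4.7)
The plan is to handle join-forests component-by-component, so it suffices to treat a single join-tree $J=(N,\leq)$ (with $N$ nonempty; the empty case is trivial). I would enumerate $N=\{v_{0},v_{1},v_{2},\dots\}$ and build the structuring greedily along this enumeration. The axis will be $A:=[v_{0},+\infty\mathclose{[}$, which is a line by condition~(2) of Definition~\ref{D:3.1} and is manifestly upwards-closed (so has no top).

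At stage $n\geq 1$, I let $\mathcal{A}_{n-1}$ denote the union of all lines constructed so far. If $v_{n}\in \mathcal{A}_{n-1}$, I do nothing; otherwise I introduce a new line $L_{n}:=[v_{n},p_{n}\mathclose{[}$, where $p_{n}$ is the least element of $\mathcal{A}_{n-1}\cap [v_{n},+\infty\mathclose{[}$. Since $v_{n}$ itself is covered by the end of stage $n$, the resulting collection $\mathcal{U}$ will partition $N$.

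The main obstacle, and the step that needs the most care, is showing that the least element $p_{n}$ always exists. My approach is to maintain as an invariant that $\mathcal{A}_{n-1}$ is a \emph{finite} union $A\cup L_{n_{1}}\cup \cdots \cup L_{n_{k}}$ with $n_{j}<n$. Since $[v_{n},+\infty\mathclose{[}$ is linearly ordered by condition~(2), I intersect each summand with it: the axis contributes $[v_{n}\sqcup v_{0},+\infty\mathclose{[}$ (with minimum $v_{n}\sqcup v_{0}$), and each $L_{n_{j}}=[v_{n_{j}},p_{n_{j}}\mathclose{[}$ contributes either $\emptyset$ or the interval $[v_{n}\sqcup v_{n_{j}},p_{n_{j}}\mathclose{[}$. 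Each nonempty piece has a minimum; since $v_{n}\sqcup v_{0}\in A$ the whole intersection is nonempty, so the minimum of the finitely many piecewise minima yields $p_{n}$.

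It then remains to check the axioms of Definition~\ref{D:3.3}. The axis $A$ is the unique upwards-closed element of $\mathcal{U}$, because every other $L_{n}$ has $p_{n}$ as its top and hence admits a strict upper bound. For finiteness of depth, I would observe that $p_{n}$ lies in some line created at a strictly earlier stage $n^{\prime}<n$ (or in $A$), so for any node $x$ the depth sequence $y_{0}=x,\ y_{i+1}=\widehat{U(y_{i})}$ corresponds to a strictly decreasing sequence of construction indices, which must terminate in $A$ after finitely many steps. The join-forest case is then immediate by running this construction independently on each component.
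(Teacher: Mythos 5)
Your proof is correct and follows essentially the same greedy construction as the paper: enumerate the nodes and, at each stage, attach a new line through the first node not yet covered, cut off where it meets the already-covered region. The only difference is that the paper takes each new line to be a \emph{maximal} line minus the previously covered nodes (so it also extends downward below $v_{n}$), whereas yours is the upset $[v_{n},p_{n}\mathclose{[}$; this is immaterial for the present proposition (though the paper later exploits the maximality, in the remark following its proof and in Proposition~\ref{P:3.9}), and you supply the verification details --- existence of $p_{n}$ and finiteness of the depth sequence --- that the paper leaves implicit.
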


\begin{proof}
Let $J=(N,\leq)$ be a join-tree. Let us choose an
enumeration of $N$ and a maximal line $B_{0}$ ; it is upwards closed. For
each $i>0$, we choose a maximal line $B_{i}$ containing the first node not
in $B_{i-1}\cup\dots\cup B_{0}$. We define $U_{0}:=B_{0}$ and, for $i>0$, $%
U_{i}:=B_{i}-(U_{i-1}\uplus\dots\uplus U_{0})=B_{i}-(B_{i-1}\cup\dots\cup B_{0})$. We define $\mathcal{U}$ as the set of lines $U_{i}$. It is a structuring
of $J$. The axis is $U_{0}$. If $J$ is a join-forest, it has a structuring that is the union of
structurings of its components.
\end{proof}
\noindent   \textbf{Remark.} Since each line $B_{i}$ is maximal, if $U_{i}$ has a smallest
element, this element is a node of degree 0 in $J$.

In view of our use of monadic second-order logic, we give a
description of SJ-trees by relational structures.

\begin{defi}[SJ-trees as relational structures]\label{D:3.6}

If $J=(N,\leq ,\mathcal{U})$ is an SJ-tree, we define $S(J)$ as the
relational structure $(N,\leq ,N_{0},N_{1})$ such that $N_{0}$ is the set
of nodes at even depth and $N_{1}:=N-N_{0}$ is the set of those at odd
depth. ($N_{0}$ and $N_{1}$ are sets but we will also consider them as unary
relations).
\end{defi}

\begin{prop}\label{P:3.7} 
  \leavevmode
  \begin{enumerate}
  \item There is an MS\ formula $\varphi (N_{0},N_{1})$ expressing that a
    relational structure $(N,\leq ,N_{0},N_{1}) $ is $S(J)$ for some SJ-tree
    $J=(N,\leq ,\mathcal{U})$.

  \item There exist MS\ formulas $\theta _{Ax}(X,N_{0},N_{1})$  and $\theta
    (u,U,N_{0},N_{1})$  expressing, respectively, in a structure $(N,\leq
    ,N_{0},N_{1})=S((N,\leq ,\mathcal{U})),$ that $X$ is the axis and that $U\in
    \mathcal{U}\wedge u=\widehat{U}$.
  \end{enumerate}
\end{prop}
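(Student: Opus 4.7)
The plan is to express both parts of the proposition using a single first-order auxiliary predicate that recovers the line-partition $\mathcal{U}$ from the labels $N_0,N_1$. I would begin by defining $\mathrm{SameLine}(x,y)$ to hold iff $x,y$ are comparable and the interval between them is monochromatic with respect to the partition $\{N_0,N_1\}$. Along the increasing chain $y_0,y_1,\dots,y_k$ of Condition~(2) in Definition~\ref{D:3.3}, the labels $N_0/N_1$ alternate strictly (the depth drops by one at each step), so two comparable nodes $x\leq y$ lie in a common line of $\mathcal{U}$ precisely when no top $\widehat{U}$ occurs in $[x,y]$, equivalently when $[x,y]$ carries a single label. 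Consequently the structuring $\mathcal{U}$, when it exists, is uniquely determined as the collection of $\mathrm{SameLine}$-equivalence classes, and derived first-order notions such as the class $[x]$ of $x$ and its top $\widehat{U(x)}$---which is the $\leq$-least $z>x$ with $\lnot\mathrm{SameLine}(x,z)$---become immediately available.

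Given this, Part~(2) is direct: $\theta_{Ax}(X,N_0,N_1)$ asserts that $X$ is a $\mathrm{SameLine}$-class and is upwards closed, while $\theta(u,U,N_0,N_1)$ asserts that $U$ is a $\mathrm{SameLine}$-class with top $u$. Both are expressible without set quantifiers once $\mathrm{SameLine}$ is in hand.

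For Part~(1), I would let $\varphi(N_0,N_1)$ be the conjunction of: (i) $(N,\leq)$ is a join-tree (partial order, every $[x,+\infty\mathclose{[}$ linearly ordered, every two nodes admit a join); (ii) $\{N_0,N_1\}$ partitions $N$; (iii) exactly one $\mathrm{SameLine}$-class $A$ is upwards closed, and every other class admits a top; (iv) $A\subseteq N_0$ and, for each $x\notin A$, the parities of $x$ and $\widehat{U(x)}$ differ; (v) every node has \emph{finite depth}, meaning that for each $x$ there exists a finite linearly ordered set $Y\subseteq[x,+\infty\mathclose{[}$ with $\min Y=x$, $\max Y\in A$, on which the successor of each non-maximal $y\in Y$ is $\widehat{U(y)}$. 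Clauses (i)--(iv) are first-order.

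The main obstacle will be clause~(v). Without it one can, for instance on the ordinal $\omega^2+1$, declare $\{\omega^2\}$ the axis and take each row $\{\omega k+j:j<\omega\}$ as a line coloured by $k\bmod 2$: all of (i)--(iv) hold, yet the chain of tops starting from any non-axis node never reaches the axis, so this is not an $S(J)$. Expressing (v) in MS relies on the MS-formula for ``$Y$ is a finite linearly ordered set'' supplied by Property~(b) of Section~2; once that formula is used to pin down the chain $x,\widehat{U(x)},\widehat{U(\widehat{U(x)})},\dots$ as a genuine finite set reaching $A$, (v) becomes a direct formalisation and both parts of the proposition follow.
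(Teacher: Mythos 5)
Your overall strategy is essentially the paper's: recover $\mathcal{U}$ from the colouring by a first-order definable relation, then formalise the clauses of Definition \ref{D:3.3}, reserving the MS trick for finiteness of a linearly ordered set for the finite-depth condition (your clause (v), the paper's condition (ii)). Your monochromatic-interval characterisation of \textqt{same line} for \emph{comparable} nodes is correct, and your $\omega^{2}+1$ example rightly shows (v) is independent of the rest. But there is a genuine gap in Part (1): nothing in (i)--(v) guarantees that the maximal pairwise-$\mathrm{SameLine}$ sets are pairwise \emph{disjoint}, i.e.\ that $\mathrm{SameLine}$ is transitive. It need not be: if $a\perp c$, $b=a\sqcup c$, and $[a,b]$, $[c,b]$ are both monochromatic, then $\mathrm{SameLine}(a,b)$ and $\mathrm{SameLine}(c,b)$ hold while $\mathrm{SameLine}(a,c)$ fails, so the two \textqt{classes} through $b$ overlap. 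Concretely, take $N=\{a,b,c,t\}$ with $a<b<t$, $c<b<t$, $a\perp c$, $N_{0}=\{t\}$, $N_{1}=\{a,b,c\}$. The maximal pairwise-$\mathrm{SameLine}$ sets are $\{a,b\}$, $\{c,b\}$ and $\{t\}$; exactly one of them ($\{t\}$) is upwards closed, each of the others has top $t$ of opposite parity, and every chain of tops reaches $\{t\}$ in one step, so (i)--(v) all hold. Yet this structure is $S(J)$ for no structuring: the axis must be contained in $N_{0}$ and contain $t$, hence equals $\{t\}$, and in every partition of $\{a,b,c\}$ into lines one of $a,c$ lands at depth $2$ and would have to lie in $N_{0}$ --- a contradiction.

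This is precisely what the paper's \emph{laminarity} condition on $N_{0}$ and $N_{1}$ is for: $X$ is laminar when, for $x<z$, $y<z$ with $[x,z]\cup[y,z]\subseteq X$, one of $[x,z]$, $[y,z]$ contains the other; this first-order condition is exactly what makes the maximal lines contained in $X$ into a partition of $X$. In the example above $N_{1}$ is not laminar. The repair is cheap --- add to $\varphi$ the first-order statement that $\mathrm{SameLine}$ is transitive (equivalently, that $N_{0}$ and $N_{1}$ are laminar) --- but as written your $\varphi$ accepts structures that are not of the form $S(J)$, so the sufficiency direction of Part (1) fails. Part (2) is unaffected, since there the structure is assumed to be a genuine $S(J)$ and your classes do coincide with the lines of $\mathcal{U}$.
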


\begin{proof}
Let $J=(N,\leq )$ be a join-tree and $X\subseteq N$. We
say that $X$ is \emph{laminar} if, for all $x,y,z\in X$, if $[x,z]\cup
\lbrack y,z]\subseteq X$ (where $x<z$ and $y<z),$  then $[x,z]\subseteq
\lbrack y,z]$ or $[y,z]\subseteq \lbrack x,z]$ (the intervals $[x,z]$ and $%
[y,z]$ are relative to $J$). This condition implies that the lines of $J$
that are included in $X$ and are maximal with this condition form a
partition of $X$ whose parts will be called its \emph{components}.

It is clear from the definitions that, if $J=(N,\leq ,\mathcal{U})$ is an
SJ-tree and $S(J)=(N,\leq ,$ $N_{0},N_{1})$, then the sets $N_{0}$ and $%
N_{1}$ are laminar, $\mathcal{U}$ is the set of their components and the
axis $A$ is a component of $N_{0}$.

\begin{enumerate}
\item That a partial order $(N,\leq )$ is a join-tree is first-order
expressible. The formula $\varphi (N_{0},N_{1})$ will include this
condition.

Let $J=(N,\leq )$ be a join-tree, $N$ the union of two disjoint laminar sets 
$N_{0}$ and $N_{1},$ and $\mathcal{U}$ the set of their components. Then, $%
J=(N,\leq ,\mathcal{U})$ is an SJ-tree such that $S(J)=(N,\leq ,N_{0},N_{1})$
if and only if:
\begin{enumerate}[label=(\roman*)]

\item every component of $N_{1}$ has a top in $N_{0}$ and every component of $N_{0}$ except one has a top in $N_{1}$,

\item  for each $U$ in $\mathcal{U}$, the sequence $U_{0},U_{1},\dots$  of lines
of $\mathcal{U}$ such that $U_{0}=U$, $\widehat{U_{0}}\in U_{1},\dots,%
\widehat {U_{i}}\in U_{i+1}$ terminates at some $U_{k}$ that has no top,
hence is included in $N_{0}$.
\end{enumerate}

\noindent These conditions are necessary. As they rephrase Definition
\ref{D:3.3}, they are also sufficient. The integer $k$ in Condition (ii) is the
common depth of all nodes in $U$.

That a set $X$ is laminar is first-order expressible, and one can build an
MS formula $\psi (U,X)$ expressing that $U$ is a component of $X$ assumed to
be laminar. This formula can be used to express that $N$ is the union of
two disjoint laminar sets $N_{0}$ and $N_{1}$ that satisfy Conditions (i) and
(ii). For expressing Condition (ii), we define, for each $U$ in $\mathcal{U}$, a set of nodes $W$ as follows: it is the least set such that $\widehat{U}%
\in W,$ and, for each $w\in W$, the top of $U(w)$ belongs to $W$ if it is
defined (where $U(w)$ is the unique set in $\mathcal{U}$ that contains $w$%
). The set $W$ is linearly ordered (it consists of $\widehat{U_{0}}<\dots<%
\widehat{U_{i}}\dots$) and Condition (ii) says that it must be finite. To
write the formula, we use the observation made in Section 2 that the
finiteness of a linearly ordered set is MS expressible.

\item The construction of $\varphi $ actually uses the MS formulas $\theta
_{Ax}(X,N_{0},N_{1})$ and $\theta (u,U,N_{0},N_{1})$.  \qedhere
\end{enumerate}
\end{proof}

\subsection{Description schemes of structured binary join-trees}

In order to introduce technicalities step by step, we first consider binary
join-trees. They are actually sufficient for defining the rank-width of a
countable graph. See Section 5.

\begin{defi}[Structured binary join-trees]\label{D:3.8}

Let $J=(N,\leq )$ be a binary join-tree, a \emph{BJ-tree} in short. A \emph{%
structuring} of $J$ is a set $\mathcal{U}$ of lines satisfying the
conditions of Definition \ref{D:3.3} and, furthermore:
\begin{enumerate}[label=(\roman*)]
\item 
if the axis $A$ has a smallest element, then its degree is 0 or 1,

\item
each $x\in N$ is the top of at most one set $U\in \mathcal{U}$, denoted by $U^{x}$, and $U^{x}:=\emptyset $ if $x$ is the top of no $U\in 
\mathcal{U}$.
\end{enumerate}

We call $(N,\leq,\mathcal{U})$ a \emph{structured binary join-tree}, an 
\emph{SBJ-tree} in short.
\end{defi}

\begin{prop}\label{P:3.9}  
  \leavevmode
  \begin{enumerate}
  \item Every BJ-tree $J$ has a structuring.
  \item The class of stuctures $S(J)$ for SBJ-trees $J$ is monadic second-order
    definable.
  \end{enumerate}
\end{prop}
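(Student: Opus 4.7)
My plan is to derive Proposition~\ref{P:3.9} from the already-established results for arbitrary SJ-trees (Propositions~\ref{P:3.5} and~\ref{P:3.7}) by tightening them with the two extra conditions of Definition~\ref{D:3.8}.

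For Part~(1), I would start from the construction of Proposition~\ref{P:3.5}, where one chooses a sequence of maximal lines $B_0, B_1, \dots$ (with $B_0$ upward-closed, hence the axis) and sets $U_i := B_i \setminus \bigcup_{j<i} B_j$. Condition~(i) of Definition~\ref{D:3.8} comes for free from the Remark after Proposition~\ref{P:3.5}, which actually delivers the stronger property that the smallest element of the axis (if it exists) is a leaf and hence of degree~$0$. The real content of Part~(1) therefore lies in arranging Condition~(ii). Here the key observation is that, for any structuring of a join-tree, its restriction to each direction $D$ of a node $x$ is itself a structuring of the subjoin-tree on $D$, so $D$ has a unique upward-closed line, and this line either extends through $x$ into $U(x)$ (when $U_-(x) \cap D \neq \emptyset$) or has $x$ as its top. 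Consequently, in a BJ-tree a degree-$2$ node $x$ is the top of two lines exactly when $U_-(x) = \emptyset$, and of at most one otherwise. To secure~(ii), it therefore suffices to refine the choice of the $B_i$'s so that, through every degree-$2$ node it meets, each $B_i$ extends downward into a direction that is not already entirely contained in $\bigcup_{j<i} B_j$. In a binary tree such a direction always exists, because if both downward directions at a node $x \in B_i$ were already fully covered, then an earlier $B_j$ passing through some node in one of those directions could be shown to extend upward through $x$, forcing $x \in B_j$ and contradicting $x \in U_i$.

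For Part~(2), I would take the MS formula $\varphi(N_0, N_1)$ delivered by Proposition~\ref{P:3.7}(1) --- which captures precisely the structures $S(J)$ for SJ-trees $J$ --- and strengthen it by the conjunction of three additional formulas: (a) a first-order sentence asserting that $(N, \leq)$ is binary, which is routine since the direction equivalence relative to a node is first-order definable from $\leq$; (b) a formula for Condition~(i) of the form $\forall X\,\forall a\,[\theta_{Ax}(X, N_0, N_1) \wedge \min(X, a) \Rightarrow \deg(a) \leq 1]$, where $\deg(a) \leq 1$ is first-order expressible; and (c) a formula for Condition~(ii) of the form $\forall u\,\forall U, U'\,[\theta(u, U, N_0, N_1) \wedge \theta(u, U', N_0, N_1) \Rightarrow U = U']$, using the formula $\theta$ from Proposition~\ref{P:3.7}(2). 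The main obstacle I foresee is the line-surgery argument in Part~(1) for Condition~(ii): rigorously showing that the greedy refinement of the $B_i$'s can always find an uncovered downward direction while still producing a valid structuring of the whole tree. Once Part~(1) is in place, Part~(2) reduces to a routine conjunction of MS formulas.
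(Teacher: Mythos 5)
Your Part (2) matches the paper's (very terse) argument and is fine: one conjoins the formula of Proposition~\ref{P:3.7} with first-order binarity and MS renderings of Conditions (i) and (ii) of Definition~\ref{D:3.8}, exactly as you describe. The problem is Part (1), where you explicitly leave the central step open: you propose to refine the choice of the $B_{i}$'s so that each one descends into directions not yet covered, and you yourself flag the rigorous execution of this surgery as the main obstacle. That is a genuine gap as written --- and, more importantly, it is a self-inflicted one, because no refinement of the construction of Proposition~\ref{P:3.5} is needed. The Remark following that proposition (which you already invoke for Condition (i)) says that if $U_{i}$ has a smallest element, that element has degree $0$ in $J$; hence in the unmodified greedy construction no node of degree $2$ can have $U_{-}(x)=\emptyset$. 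Combined with your own (correct) observation that a node $x$ with $U_{-}(x)\neq \emptyset$ is the top of at most one line of the structuring --- each direction of $x$ contains at most one line with top $x$, and the direction containing $U_{-}(x)$ contains none --- Condition (ii) follows outright for the construction as it stands. So the two facts you assembled already close the argument; the deferred ``line surgery'' is not only unproved but unnecessary. It is also aimed at the wrong target: steering where $B_{i}$ descends at the degree-$2$ nodes it passes through does not control the tops $\widehat{U_{i}}$, which are determined by where the (upward-closed, maximal) line $B_{i}$ first meets $B_{0}\cup \dots \cup B_{i-1}$, and Condition (ii) is precisely a statement about those tops.

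For comparison, the paper argues directly on the unmodified construction: if $x=\widehat{U_{i}}$, then $x$ has degree exactly $1$ in the sub-join-tree induced by $U_{0}\uplus \dots \uplus U_{i-1}$ (not $0$ by maximality of the $B_{j}$'s together with the Remark, not $2$ or more because $J$ is binary and $U_{i}$ occupies a fresh direction of $x$), so $x$ cannot already be the top of some earlier $U_{j}$. Either your route (via $U_{-}(x)\neq\emptyset$) or the paper's (via the degree of $x$ in the partial tree) works, but you must actually carry one of them out rather than appealing to an unspecified refinement of the $B_{i}$'s.
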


\begin{proof} 
  \leavevmode
  \begin{enumerate}
  \item We use the construction of Proposition \ref{P:3.5} for $J=(N,\leq )$.\
    The remark following it implies that, if the axis $A=U_{0}$ has a smallest
    element, this element has degree 0. It implies also that, if $%
    \widehat{U_{i}}=x$, then $x$ cannot have degree 0 in the BJ-tree $J_{i-1}$
    induced by $U_{i-1}\uplus \dots\uplus U_{0}$ because each line $B_{i}$ is
    chosen maximal; furthermore, it cannot have degree 2 or more in $J_{i-1}$
    because $J$ is binary. Hence it has degree 1 in $J_{i-1}$. It follows that
    $x$ is the top of no line $U_{j}$ for $j<i$. Hence (ii) holds and the
    construction yields an SBJ-tree $(N,\leq ,\mathcal{U})$.

  \item The formula $\varphi$ of Proposition \ref{P:3.7} can be modified
    so as to express that ${(N,\leq,} N_{0},N_{1})$ is $S(J)$ for some SBJ-tree
    $J.$ \qedhere
  \end{enumerate}
\end{proof}

\begin{defi}[Description schemes for SBJ-trees]\label{D:3.10}
\leavevmode
\begin{enumerate}[label=(\alph*)]
\item A \emph{description scheme for an SBJ-tree}, in short an \emph{%
    SBJ-scheme,} is a triple $\Delta =(Q,w_{Ax},(w_{q})_{q\in Q})$ such that $Q$
  is a set called the set of \emph{states}, $w_{Ax}\in \mathcal{A}(Q)$ (is an
  arrangement over $Q$) and $w_{q}\in \mathcal{A}(Q)$ for each $q$.
  It is \emph{regular} if $Q$ is finite and the arrangements $w_{Ax}$ and $%
  w_{q}$ are regular.

\item We recall that a linear order $(V,\leq)$ is identified with the
  arrangement $(V,\leq,Id)$. If $W\subseteq V$ and $r:V\rightarrow Q$, then $%
  \overline{r}((W,\leq))$ is the arrangement $(W,\leq\upharpoonright
  W,r\upharpoonright W)\in\mathcal{A}(Q)$ that we will denote more simply by $%
  (W,\leq,r)$ leaving implicit the restrictions of $\leq$ and $r$ to $W$.

  An SBJ-scheme $\Delta $ \emph{describes} an SBJ-tree $J=(N,\leq ,\mathcal{U})$
  whose axis is $A$ if there exists a mapping $r:N\rightarrow Q$ that we call a
  \emph{run}, such that:
  \[
    \overline{r}((A,\leq))\simeq w_{Ax}\text{ and }\overline{r}((U^{x},\leq))\simeq
    w_{r(x)}\text{ for every }x\in N.
  \]
  We will also say that $\Delta $ \emph{describes} the BJ-tree $\mathit{fgs}%
  (J):=(N,\leq )$, where $\mathit{fgs}$ makes an SBJ-tree into a BJ-tree by
  \emph{forgetting its structuring}.
  The mapping $r$ need not be surjective, this means that some elements of $Q$
  and the corresponding arrangements may be useless, and thus can be removed
  from $\Delta $.
\end{enumerate}
\end{defi}

\begin{figure}
[ptb]
\begin{center}
\includegraphics[
width=2.3402in
]%
{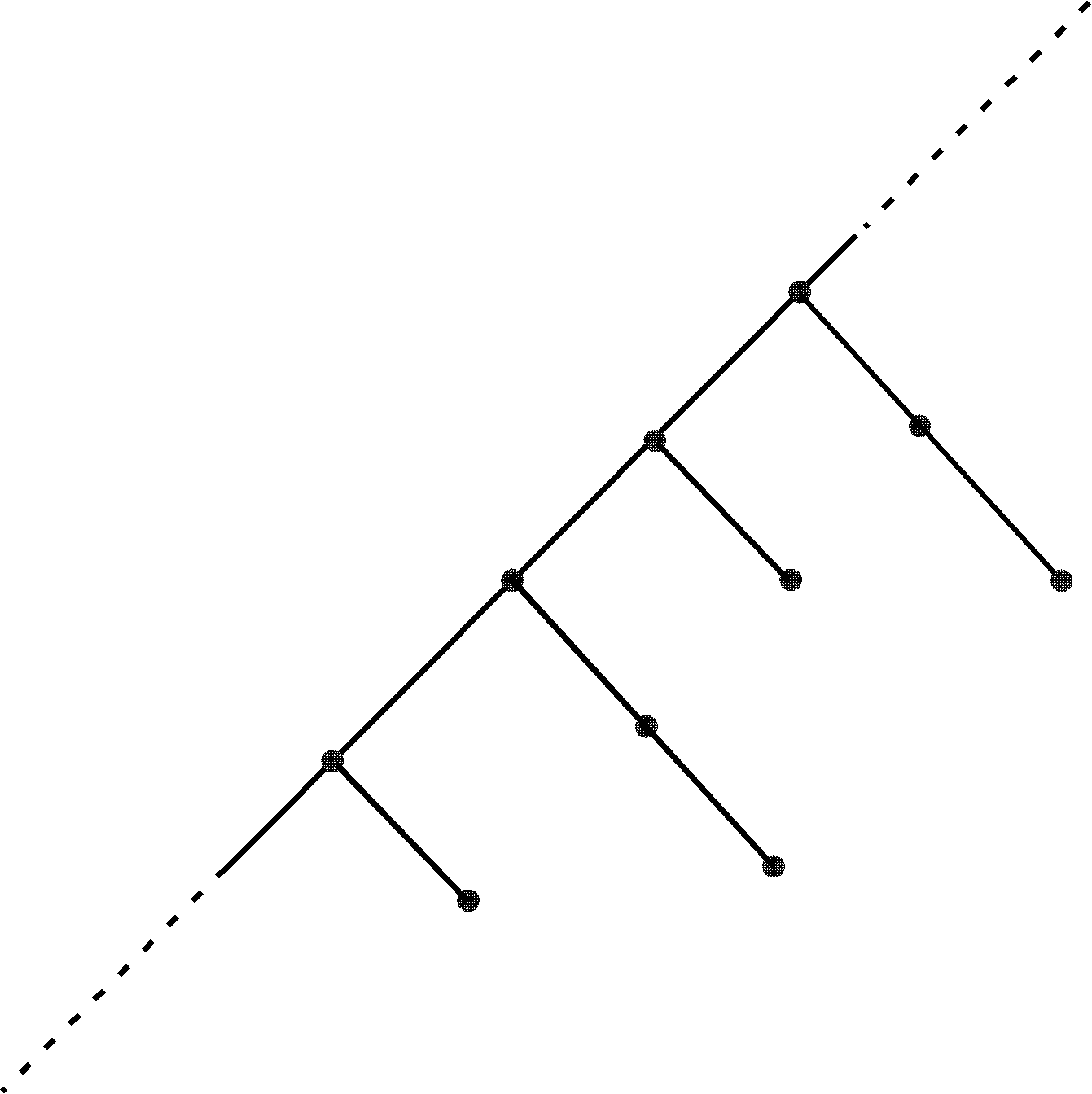}%
\caption{A binary join-tree.}%
\end{center}
\end{figure}

For an example, let $\Delta =(Q,w_{Ax},(w_{q})_{q\in Q})$ be the SBJ-scheme
such that $Q=\{a,b,c\}$, $w_{Ax}:=(\mathbb{Z},\leq ,\ell )$ where $\ell
(i)=a $ if $i$ is even and $\ell (i)=b$ if $i$ is odd, $w_{a}:=\{c%
\},w_{b}:=cc$ (two nodes labelled by $c$) and $w_{c}=\Omega .$ It describes
the BJ-tree of Figure 2. 

\begin{prop}\label{P:3.11}\leavevmode
 \begin{enumerate}
 \item Every SBJ--tree is descibed by some
SBJ-scheme.

\item Every SBJ-scheme $\Delta $ describes a unique SBJ-tree, where unicity is
up to isomorphism.
\end{enumerate}
\end{prop}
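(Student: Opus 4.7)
For part (1), I would exhibit a canonical self-describing scheme. Given an SBJ-tree $J=(N,\leq,\mathcal{U})$ with axis $A$, take $Q := N$, $w_{Ax} := (A,\leq,\mathrm{Id}_A)$, and for each $q \in Q$ set $w_q := (U^q,\leq,\mathrm{Id}_{U^q})$ when $U^q$ is non-empty and $w_q := \Omega$ otherwise. All three pieces are arrangements over $Q$, so $\Delta := (Q,w_{Ax},(w_q)_{q\in Q})$ is an SBJ-scheme, and the identity $r := \mathrm{Id}_N : N \to Q$ is a run since $\overline{r}((A,\leq)) = w_{Ax}$ and $\overline{r}((U^x,\leq)) = w_{r(x)}$ on the nose.

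For the existence part of (2), I would construct $J_\Delta = (N,\leq,\mathcal{U})$ level by level from $\Delta$. Start with $N_0$ being (a disjoint copy of) the underlying set of $w_{Ax}$, linearly ordered as in $w_{Ax}$, declared to be the axis $A$; assign to each $x \in N_0$ the state $r(x) := \mathrm{lab}_{w_{Ax}}(x)$. Having built the nodes of depth $\leq k$, for every node $x$ of depth $k$ with $w_{r(x)} \neq \Omega$ adjoin a fresh disjoint copy $L_x$ of the underlying set of $w_{r(x)}$ to form the depth-$(k+1)$ nodes; the elements of $L_x$ are ordered among themselves as in $w_{r(x)}$, each is declared strictly below $x$, and $x$ is their top. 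Between two elements of different lines, $y \leq z$ is set to hold iff the line-top chain above $y$ eventually passes through $z$. Take $\mathcal{U}$ to be $\{A\}$ together with all non-empty $L_x$, and take $r(y)$ to be the label of $y$ in its home arrangement.

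The verification that $J_\Delta$ is really an SBJ-tree is the main obstacle, and I expect it to split cleanly into checks along the structuring skeleton. By construction every node has finite depth, so for any $x$ the set $[x,+\infty[$ is the concatenation of the upward tails of the finitely many lines encountered in the line-top chain from $x$ to $A$, then of an upward-closed segment of $A$ — hence linearly ordered. Two nodes $x,y$ climb by their respective chains until both reach $A$; on $A$ they share a join, which is their join in $J_\Delta$. Binarity follows because each node has at most two directions below it, one from the downward tail of its own line in $\mathcal{U}$ and one from its at-most-one adjoined line $L_x$; and if $A$ has a smallest element it has no axis direction below, so its degree is at most $1$. Condition (ii) of Definition \ref{D:3.8} holds tautologically since each $x$ is the top of at most the one line $L_x$. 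The run $r$ witnesses that $\Delta$ describes $J_\Delta$.

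For uniqueness, let $J=(N,\leq,\mathcal{U})$ and $J'=(N',\leq',\mathcal{U}')$ both be described by $\Delta$ via runs $r,r'$. I would build an isomorphism $\varphi : J \to J'$ by induction on depth. The axes $A$ and $A'$ both satisfy $\overline{r}((A,\leq)) \simeq w_{Ax} \simeq \overline{r'}((A',\leq'))$, so composition of these arrangement isomorphisms yields an order isomorphism $\varphi_0 : A \to A'$ with $r'(\varphi_0(x)) = r(x)$. Assuming $\varphi_k$ has been defined on all nodes of depth $\leq k$, for each such $x$ the arrangements $(U^x,\leq,r\!\upharpoonright\! U^x)$ and $(U'^{\varphi_k(x)},\leq',r'\!\upharpoonright\! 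U'^{\varphi_k(x)})$ are both isomorphic to $w_{r(x)} = w_{r'(\varphi_k(x))}$, giving a unique-state-preserving extension $\varphi_{k+1}$. Since every node has finite depth, $\varphi := \bigcup_k \varphi_k$ is defined on all of $N$, preserves $\leq$ and $\mathcal{U}$, and is therefore an isomorphism of SBJ-trees.
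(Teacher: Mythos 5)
Your proposal is correct and follows essentially the same route as the paper: part (1) via the standard scheme $\Delta(J)=(N,(A,\leq),((U^{x},\leq))_{x\in N})$ with the identity run, existence in (2) via a depth-by-depth unfolding (the paper packages the same construction as the set of finite sequences $(v_{0},\dots,v_{k})$), and uniqueness via an isomorphism built by induction on depth. The only cosmetic difference is that you spell out the verification that the unfolding is an SBJ-tree (linearity of $[x,+\infty\mathclose{[}$, binarity, condition (ii) of Definition \ref{D:3.8}), which the paper leaves implicit.
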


\begin{proof} \leavevmode
  \begin{enumerate}
\item Each SBJ-tree $J=(N,\leq ,\mathcal{U})$ has a \emph{%
standard} description scheme \[\Delta (J):=(N, (A,\leq), ((U^{x},\leq))_{x\in N}).\] The run is the identity mapping $r:N\rightarrow N$ showing
that $\Delta (J)$ describes $J$.

\item We will denote by $\mathit{Unf}(\Delta )$ the SBJ-tree described by $%
\Delta $ and call it called the \emph{unfolding of} $\Delta $ (see the
remark following the proof about terminology).
Let $\Delta =(Q,w_{Ax},(w_{q})_{q\in Q})$ be an SBJ-scheme, defined with
arrangements $w_{Ax}=(V_{Ax},$ $\preceq ,lab_{Ax})$ and $%
w_{q}=(V_{q},\preceq ,lab_{q})$ such that, without loss of generality, the
sets $V_{Ax}$ and $V_{q}$ are pairwise disjoint and the same symbol $\preceq 
$ denotes their orders.
We construct $(N,\leq,\mathcal{U})=\mathit{Unf}(\Delta)$ as follows.

\begin{enumerate}
\item $N$ is the set of finite nonempty sequences $(v_{0},v_{1},\dots,v_{k})$
such that $v_{0}\in V_{Ax},v_{i}\in V_{q_{i}}$ for 
$1\leq i\leq k$, where  $q_{1}=lab_{Ax}(v_{0})$, $q_{2}=lab_{q_{1}}(v_{1})$, \dots, $q_{k}=lab_{q_{k-1}}(v_{k-1})$.

\item  $(v_{0},v_{1},\dots,v_{k})\leq(v_{0}^{\prime},
v_{1}^{\prime},\dots,v_{j}^{%
\prime})$ if and only if $k\geq j$, $(v_{0},v_{1},\dots,v_{j-1})=(v_{0}^{%
\prime},v_{1}^{\prime},\dots,$ $v_{j-1}^{\prime})$ and $v_{j}\preceq
v_{j}^{\prime}.$

\item The axis $A$ is the set of one-element sequences $(v)$ for $v\in V_{Ax}$;
for $x=(v_{0},v_{1},\dots,v_{k}),$ we define $U(x)$ as the set of sequences $%
(v_{0},v_{1},\dots,v_{k-1},v)$ such that $v\in V_{q_{k}}$, hence, we have $%
\widehat{U(x)}=(v_{0},v_{1},\dots,v_{k-1}).$
\end{enumerate}
Note that $(v_{0},\dots,v_{k})<(v_{0},\dots,v_{j})$ if  $j<k$ and that $%
(v_{0},\dots,v_{k-1},v_{k})\leq(v_{0},\dots,v_{k-1},v)$ if and only if $%
v_{k}\preceq v$.
We claim that $\Delta$ describes $(N,\leq,\mathcal{U})$. For proving that, we define a run $r:N\rightarrow Q$ as follows:

\begin{itemize}
\item if $x\in A$, then $x=(v)$ for some $v\in V_{Ax}$ and $r(x):=lab_{Ax}(v);$

\item if $x\in N$ has depth $k\geq 1$, then $x=(v_{0},v_{1},\dots,v_{k})$
for some $v_{0},v_{1},\dots,v_{k}$ as in (a) and $r(x):=lab_{q_{k}}(v_{k})$.
\end{itemize}
It follows that $\overline{r}((A,\leq))\simeq w_{Ax}$ and that, for $%
x=(v_{0},v_{1},\dots,v_{k})$ (of depth $k$), we have $\overline{r}%
((U^{x},\leq))\simeq w_{q_{k}}=w_{r(x)}$, which proves the claim.

We now prove unicity. Assume that $\Delta $ describes $J=(N,\leq ,\mathcal{U}%
)$ with axis $A$ and also $J^{\prime }=(N^{\prime },\leq ^{\prime },\mathcal{%
U}^{\prime })$ with axis $A^{\prime }$, by means of runs $r:N\rightarrow Q$
and $r^{\prime }:N^{\prime }\rightarrow Q$. We construct an isomorphism $%
h:J\rightarrow J^{\prime }$ as the common extension of bijections $%
h_{k}:N_{k}\rightarrow N_{k}^{\prime }$, where $N_{k}$ (resp. $N_{k}^{\prime
}$) is the set of nodes of $J$ (resp. of $J^{\prime }$) of depth at most $k,$
and such that they map $\leq $ to $\leq ^{\prime },$ and the lines of $%
\mathcal{U}$ to those of $\mathcal{U}^{\prime }$ of same depth, and finally, 
$r^{\prime }\circ h_{k}=r\upharpoonright N_{k}.$

\begin{itemize}
\item \emph{Case} $k=0$. We have:
  \[
\overline{r}((A,\leq))=(A,\leq,r)\simeq w_{Ax}\simeq\overline{r^{\prime}}%
((A^{\prime},\leq))=(A^{\prime}, \leq^{\prime},r^{\prime})
\]

which gives the order preserving bijection $h_{0}:N_{0}=A\rightarrow
N_{0}^{\prime}=A^{\prime}$ such that $r^{\prime}\circ h_{0}=r\upharpoonright
N_{0}.$

\item \emph{Case} $k>0$. We assume inductively that $h_{k-1}$ has been
constructed.

Let $U\in\mathcal{U}$ be such that $x=\widehat{U}$ has depth $k-1$; hence, $%
U\cap N_{k-1}=\emptyset.$ Then $(U,\leq,r)\simeq w_{r(x)}.$ Let $x^{\prime
}=h_{k-1}(x)$; we have $r^{\prime}(x^{\prime})=r(x)$. Hence there is $%
U^{\prime}\in\mathcal{U}^{\prime}$ such that $x^{\prime}=\widehat{U^{\prime}}
$, $U^{\prime}\cap N_{k-1}^{\prime}=\emptyset$ and$ (U^{\prime},\leq
^{\prime},r^{\prime})\simeq w_{r^{\prime}(x^{\prime})}=w_{r(x)}$. Hence,
there is an order preserving bijection $h_{U}:U\rightarrow U^{\prime}$ such
that $r^{\prime}\circ h_{U}=r\upharpoonright U.$

We define $h_{k}$ as the common extension of the injective mappings $h_{k-1}$
and $h_{U}$ such that $U\in \mathcal{U}$ and the depth of $\widehat{U}$ is $%
k-1$. These mappings have pairwise disjoint domains whose union is $N_{k}$.
\end{itemize}
The extension to $N$ of all these mappings $h_{k}$ is the desired
isomorphism $h$. 
\qedhere
\end{enumerate}
\end{proof}

\noindent
\textbf{Remarks.}
\begin{enumerate}
\item We call \emph{unfolding} the transformation of $\Delta $
into $\mathit{Unf}(\Delta )$ because it generalizes the unfolding of a
directed graph $G$ into a finite or countable rooted tree. The unfolding is
done from a particular vertex $s$ of $G$, and the nodes of the tree are the
sequences of the form $(x_{0},\dots,x_{k})$ such that $s=x_{0}$ and there is a
directed edge in $G$ from $x_{i}$ to $x_{i+1}$, for each $i<k$. If $\Delta $
is such that the arrangements $w_{Ax}$ and $w_{q}$ are reduced to a single
element, the corresponding directed graph has all its vertices of outdegree
one and the tree resulting from the unfolding consists of one infinite path:
the SBJ-tree $\mathit{Unf}(\Delta )$ is the order type $\omega ^{-}$ of
negative integers and the sets in $\mathcal{U}$ are singletons.

\bigskip

\item An SBJ-scheme $\Delta $ describing an SBJ-tree $J$ can be seen as a 
\emph{quotient of} $\Delta (J).$ We define quotients in terms of surjective
mappings.

Let $\Delta =(Q,w_{Ax},(w_{q})_{q\in Q})$ and $s:Q\rightarrow Q^{\prime }$
be surjective such that, if $s(q)=s(q^{\prime })$, then $\overline{s}%
(w_{q})\simeq \overline{s}(w_{q^{\prime }})$. We define $\Delta ^{\prime
}:=(Q^{\prime },\overline{s}(w_{Ax}),(w_{p}^{\prime })_{p\in Q^{\prime }})$
where $w_{s(q)}^{\prime }\simeq \overline{s}(w_{q})$ for each $q$ in $Q$. If 
$\Delta $ describes $J$ via a run $r:N\rightarrow Q$, then $\Delta ^{\prime
}$ describes $J$ via $s\circ r:N\rightarrow Q^{\prime }$. We say that $%
\Delta ^{\prime }$ is \emph{the quotient of} $\Delta $ by the equivalence $%
\thickapprox $ on $Q$ such that $q\approx q^{\prime }$ if and only if $%
s(q)=s(q^{\prime })$, and we denote it by $\Delta /\thickapprox $.
If $\Delta $ is regular, then $\Delta /\thickapprox $ is regular and its
set of sates is smaller than that of $\Delta $ unless $s$ is a bijection.

Let us now start from an SBJ-tree $J=(N,\leq ,\mathcal{U})$ with axis $A$.
For $x\in N$, let $J_{x}$ be the SBJ-tree $(N_{x},\leq _{x},\mathcal{U}_{x})$
such that $N_{x}:=\mathord\downarrow (U^{x}),\leq _{x}$ is the restriction of $\leq $
to $N_{x}$, and $\mathcal{U}_{x}:=\{U^{x}\}\cup \{U^{y}\mid y\in
N_{x}\}-\{\emptyset \}$. Its axis is $U^{x}$. For the example of Figure 1,
we have $\mathcal{U}_{x_{2}}=\{U_{2},U_{3}\}$ and $U_{2}$ is the axis.

From \emph{J}, we define as follows a canonical SBJ-scheme $\Delta
(J)/\thickapprox $ based on the equivalence $\thickapprox $ on $N$ such that 
$x\approx x^{\prime }$ if and only if $J_{x}\simeq J_{x^{\prime }}$. Let $s$
be the surjective mapping~: $N\rightarrow N^{\prime }:=N/\approx $. If $%
J_{x}\simeq J_{y}$ by an isomorphism $h:N_{x}\rightarrow N_{y}$, then $%
(U^{x},\leq )\simeq (U^{y},\leq )$ by $h\upharpoonright
U^{x}:U^{x}\rightarrow U^{y}$ and furthermore, if $w\in N_{x}$, then $%
J_{w}\simeq J_{h(w)}$ by $h\upharpoonright N_{w}:N_{w}\rightarrow N_{h(w)}$%
. It follows that $\overline{s}((U^{x},\leq ))\simeq \overline{s}%
((U^{y},\leq )),$ hence, the quotient SBJ-scheme $\Delta (J)/\thickapprox
:=(N^{\prime },\overline{s}((A,\leq )),(w_{p}^{\prime })_{p\in N^{\prime }})$
such that $w_{p}^{\prime }\simeq $ $\overline{s}((U^{x},\leq ))$ if $s(x)=p$
is well-defined and describes $J$.

Let $\Delta =(Q,w_{Ax},(w_{q})_{q\in Q})$ describe $J$ via a surjective run $%
r:N\rightarrow Q$ and consider the equivalence relation on $Q$ such that $%
q\equiv q^{\prime }$ if and only if there exist $x,y$ such that $r(x)=q$,
$r(y)=q^{\prime }$ and $J_{x}\simeq J_{y}$. It is well-defined because if $%
r(x)=r(y)=q$, then $\overline{r}((U^{x},\leq ))\simeq \overline{r}%
((U^{y},\leq ))\simeq w_{q},$ and furthermore $J_{x}\simeq J_{y}$: one
constructs an isomorphism $h$ that extends the one between $\overline{r}%
((U^{x},\leq ))$ and $\overline{r}((U^{y},\leq ))$ (this construction uses
an induction on the depth of $u$ in $J_{x}$ for defining $h(u)$). Hence, $%
\Delta /\equiv $ is well-defined and describe $J$ via the run $r^{\prime }$
such that $r^{\prime }(x)$ is the equivalence class of $r(x)$ with respect
to $\equiv .$ It follows that $\Delta /\equiv $ is isomorphic to $\Delta
(J)/\thickapprox .$ If $J$ is regular, then $\Delta (J)/\thickapprox $ is
the unique regular SBJ-scheme describing $J$ that has a minimum number of
states. As usual, unicity is up to isomorphism. This construction is
similar to that of the minimal deterministic automaton of a regular
language, defined from its quotients (see \emph{e.g.,} \cite{Saka}, chapter
I.3.3). 
\end{enumerate}

\begin{prop}\label{P:3.12} A BJ-tree is monadic second-order definable if
it is described by a regular SBJ-scheme.
\end{prop}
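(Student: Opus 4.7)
The plan is to construct an MS sentence $\varphi_\Delta$ over the signature $\{\leq\}$ that has $J$ as its unique finite or countable model up to isomorphism, given that $\Delta = (Q, w_{Ax}, (w_q)_{q \in Q})$ is a regular SBJ-scheme describing $J$. The guiding idea is that $\varphi_\Delta$ should express: \emph{``$(N,\leq)$ is a BJ-tree, and there exist a structuring of it and a labeling $r : N \to Q$ that together make it an SBJ-tree described by $\Delta$''}. Uniqueness of the model then follows from Proposition~\ref{P:3.11}(2).

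To assemble $\varphi_\Delta$ I combine three ingredients. First, Proposition~\ref{P:3.9}(2) supplies an MS formula $\varphi_{SBJ}(N_0, N_1)$ asserting that $(N, \leq, N_0, N_1)$ equals $S(J')$ for some SBJ-tree $J'$, hence implicitly encoding a structuring $\mathcal{U}$. Second, Proposition~\ref{P:3.7}(2) provides MS formulas defining the axis $A$ and, for each node $x$, the line $U^x$ of which $x$ is the top (if any). Third, I existentially quantify over a partition $(N_q)_{q \in Q}$ of $N$ to encode the candidate run $r$ with $r(x) = q \iff x \in N_q$.

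The delicate ingredient is the encoding of the isomorphism conditions
\[
(A,\leq,r) \simeq w_{Ax} \quad\text{and}\quad (U^x,\leq,r) \simeq w_{r(x)} \text{ for every } x \in N.
\]
Because $\Delta$ is regular, $w_{Ax}$ and each $w_q$ are regular arrangements over the finite alphabet $Q$. By the theorem of Thomas cited in the introduction, there exist MS sentences $\psi_{Ax}$ and $\psi_q$ in the signature $(\leq, (lab_p)_{p \in Q})$ whose unique finite or countable models, up to isomorphism, are $w_{Ax}$ and $w_q$ respectively. I relativize each $\psi_q$ to the MS-definable set $U^x$, depending on a free variable $x$, by restricting every first-order quantifier $\forall y$ to $y \in U^x$, restricting every set quantifier $\forall Z$ to $Z \subseteq U^x$, and replacing every atom $lab_p(y)$ by $y \in N_p$; this yields a formula $\psi_q^{\upharpoonright U^x}(x)$ of MS logic in $(N, \leq, N_0, N_1, (N_p)_{p \in Q})$. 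Similarly $\psi_{Ax}^{\upharpoonright A}$ is obtained by relativizing $\psi_{Ax}$ to the axis. I then take $\varphi_\Delta$ to be the sentence existentially quantifying $N_0, N_1, (N_q)_{q \in Q}$ and asserting that $(N, \leq)$ is a BJ-tree, that $\varphi_{SBJ}(N_0, N_1)$ holds, that the $N_q$ partition $N$, that $\psi_{Ax}^{\upharpoonright A}$ holds, and that $\forall x \bigwedge_{q \in Q}\bigl(x \in N_q \wedge U^x \neq \emptyset \Rightarrow \psi_q^{\upharpoonright U^x}(x)\bigr)$.

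For soundness, if $\Delta$ describes $J$ via a structuring $\mathcal{U}$ and a run $r$, I take $N_0, N_1$ from $\mathcal{U}$ as in Definition~\ref{D:3.6} and $N_q := r^{-1}(q)$; the isomorphism clauses then hold by the very definition of \emph{describes}. For completeness, if $J'$ is a finite or countable BJ-tree with $J' \models \varphi_\Delta$, the witnesses yield a structuring $\mathcal{U}'$ and a run $r'$ showing that $\Delta$ describes the SBJ-tree $(N', \leq', \mathcal{U}')$; Proposition~\ref{P:3.11}(2) then gives $(N', \leq', \mathcal{U}') \simeq \mathit{Unf}(\Delta)$, hence $J' \simeq \mathit{fgs}(\mathit{Unf}(\Delta)) \simeq J$. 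The only real subtlety is bookkeeping for the relativization, in particular ensuring that set quantifiers inside $\psi_q$ are restricted to \emph{subsets} of $U^x$, so that the relativized formula speaks about the sub-arrangement $(U^x, \leq, r)$ rather than the ambient BJ-tree.
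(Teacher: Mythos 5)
Your proof follows the paper's own argument essentially step for step: existentially quantify over $N_0,N_1$ encoding a structuring (Propositions~\ref{P:3.7} and~\ref{P:3.9}), over a $Q$-indexed partition encoding the run, use Thomas's theorem to obtain MS sentences characterizing the regular arrangements $w_{Ax}$ and $w_q$, relativize them to the MS-definable sets $A$ and $U^{x}$, and conclude uniqueness of the model from Proposition~\ref{P:3.11}(2); the paper leaves the relativization as a ``suitable adaptation'' of $\psi_{Ax}$ and $\psi_q$, which you spell out. One small but real slip: the guard $U^{x}\neq\emptyset$ in your final clause must be dropped. With it, a node $x\in N_q$ with $U^{x}=\emptyset$ passes vacuously even when $w_q\neq\Omega$, so $\varphi_\Delta$ acquires spurious models (for instance, with $Q=\{q\}$ and $w_{Ax}=w_q=q$ a single letter, every finite descending chain satisfies your sentence alongside the intended infinite one); without the guard, $\psi_q$ relativized to the empty set is false exactly when $w_q\neq\Omega$, which is precisely what the definition of \emph{describes} requires.
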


\begin{proof}
 That $J=(N,\leq)$ is a BJ-tree is first-order expressible.
Assume that $J=\mathit{fgs}(J^{\prime})$  where $J^{\prime}=(N,\leq ,%
\mathcal{U})\simeq\mathit{Unf}(\Delta)$ for some regular SBJ-scheme $%
\Delta=(Q,w_{Ax},(w_{q})_{q\in Q})$ such that $Q=\{1,\dots,m\}$. Let $r$ be
the corresponding mapping: $N\rightarrow Q$ (cf. Definition \ref{D:3.10}(b)). For
each $q\in Q$, let $\psi_{q}$ be an MS sentence that characterizes $w_{q}$,
up to isomorphism, by the main result of \cite{Tho86}. Similarly, $\psi_{Ax}$
characterizes $w_{Ax}$. We claim that a relational structure $(X,\leq)$ is
isomorphic to $J$ if and only if there exist subsets $%
N_{0},N_{1},M_{1},\dots,M_{m}$ of $X$ such that:

\begin{enumerate}[label=(\roman*)]

\item $(X,\leq)$ is a BJ-tree and $(X,\leq,N_{0},N_{1})=S(J^{\prime\prime}) $
for some SBJ-tree $J^{\prime\prime}=(X,\leq,\mathcal{U}^{\prime})$,

\item $(M_{1},\dots,M_{m})$ is a partition of $X$; we let $r^{\prime}$
maps each $x\in X$ to the unique $q\in Q$ such that $x\in M_{q},$

\item for every $q\in Q$ and $x\in M_{q}$, the arrangement $%
\overline{r^{\prime }}((U^{x},\leq ))$ over $Q$ (where $U^{x}\in \mathcal{U}%
^{\prime })$  is isomorphic to $w_{q},$

\item the arrangement $\overline{r^{\prime}}((A^{\prime},\leq))$
over $Q$ where $A^{\prime}$ is the axis of $J^{\prime\prime}$ is isomorphic
to $w_{Ax}$.

\end{enumerate}
Conditions (ii)-(iv) express that $\Delta$ describes $J^{\prime\prime}$,
hence that $J^{\prime\prime}$ is isomorphic to $J^{\prime},$ and so that $%
(X,\leq)\simeq\mathit{fgs}(J^{\prime})=J$.

By Proposition \ref{P:3.9}, Condition (i) is expressed by an MS formula $\varphi
(N_{0},N_{1})$, and the property $U\in \mathcal{U}\wedge x=\widehat{U}$ is
expressed in terms of $N_{0},N_{1}$ by an MS formula $\theta
(x,U,N_{0},N_{1}).$ Conditions (iii) and (iv) are expressed by means of the
MS sentences $\psi _{Ax}$ and $\psi _{q}$ suitably adapted to take $%
N_{0},N_{1},M_{1},\dots,M_{m}$ as arguments. Hence, $J$ is (up to isomorphism)
the unique model of an MS sentence of the form:
\[
\exists N_{0},N_{1}.\left[\varphi(N_{0},N_{1})\wedge\exists
M_{1},\dots,M_{m}.\varphi^{\prime}(N_{0},N_{1},M_{1},\dots,M_{m})\right]
\]
where $\varphi^{\prime}$ expresses conditions (ii)-(iv). \end{proof}
\noindent
Theorem \ref{T:3.21}  will establish a converse.

\subsection{The algebra of binary join-trees}

We define three operations on structured binary join-trees (SBJ-trees in
short). The finite and infinite terms over these operations will define all
SBJ-trees.

\begin{defi}[Operations on SBJ-trees]\label{D:3.13}
 \leavevmode
\begin{itemize}
\item \emph{Concatenation along axes.}

\noindent Let $J=(N,\leq,\mathcal{U})$ and $J^{\prime}=(N^{\prime},\leq^{\prime },%
\mathcal{U}^{\prime})$ be disjoint SBJ-trees, with respective axes $A$ and $%
A^{\prime}$. We define:
\begin{align*}
J\bullet J^{\prime}&:=(N\uplus N^{\prime},\leq^{\prime\prime},\mathcal{U}%
^{\prime\prime}) \text{ where } \\
x\leq^{\prime\prime}y&:\Longleftrightarrow x\leq y\vee
x\leq^{\prime}y\vee(x\in N\wedge y\in A^{\prime}), \\
\mathcal{U}^{\prime\prime}&:=\{A\uplus A^{\prime}\}\uplus(\mathcal{%
U-}\{A\}\mathcal{)}\uplus(\mathcal{U}^{\prime}\mathcal{-}\{A^{\prime }\}%
\mathcal{)}.
\end{align*}
It follows that $J\bullet J^{\prime}$ is an SBJ-tree with axis $A\uplus A^{\prime}$; the
depth of a node in $J\bullet J^{\prime}$  is the same as in $J$ or $%
J^{\prime}$.
This operation generalizes the concatenation of linear orders: if $(N,\leq )$
and $(N^{\prime },\leq ^{\prime })$ are disjoint linear orders, then the
SBJ-tree $(N,\leq ,\{N\})\bullet\left(N^{\prime},\leq^{\prime},\{N^{\prime
}\}\right)$ corresponds to the concatenation of $(N,\leq )$ and $\left(N^{\prime},\leq
^{\prime}\right)$ usually denoted by $(N,\leq )+\left(N^{\prime},\leq^{\prime}\right)$,
cf. \cite{LS}.
If $K=(M,\leq,\mathcal{V})$ is an SBJ-tree with axis $B$, and $B=A\uplus
A^{\prime}$ such that $A<A^{\prime},$ then $K=J\bullet J^{\prime}$ where:
\begin{itemize}[label=,leftmargin=8mm,topsep=2mm]
\item $N:=\downarrow(A),N^{\prime}:=M-N,$

\item $\mathcal{U}$ is the set of lines of $\mathcal{V}$ included in $%
N-A $ together with $A,$

\item $\mathcal{U}^{\prime}$ is the set of lines of $\mathcal{V}$
included in $N^{\prime}-A^{\prime}$ together with $A^{\prime}$ and

\item the orders of $J$ and $J^{\prime}$ are the restrictions of $\leq$\
to $N$ and $N^{\prime}$.
\end{itemize}

\item \emph{The empty SBJ-tree.}

\noindent The nullary symbol $\Omega$ denotes the empty SBJ-tree.

\item \emph{Extension.}

\noindent Let $J=(N,\leq,\mathcal{U})$ be an SBJ-tree, and $u\notin N$. Then:
\begin{align*}
ext_{u}(J)&:=(N\uplus\{u\},\leq^{\prime},\{\{u\}\}\uplus\mathcal{U}) \text{ where } \\
x\leq^{\prime}y &:\Longleftrightarrow x\leq y\vee y=u,
\end{align*}
the axis is $\{u\}$. Clearly, $ext_{u}(J)$ is an SBJ-tree. The depth of $v\in N$
is its depth in $J$ plus 1. The axis of $J$ is turned into an \textqt{ordinary
  line} of the structuring of $ext_{u}(J)$ with top equal to $u$. When handling
SBJ-trees up to isomorphism, we use the notation $ext(J)$ instead of
$ext_{u}(J).$

\item \emph{Forgetting structuring.}

\noindent If $J$ is an SBJ-tree as above, $\mathit{fgs}(J):=(N,\leq )$ is the
underlying BJ-tree (binary join-tree), where $\mathit{fgs}$ \emph{forgets
the structuring}.
\end{itemize}
\end{defi}

\noindent Anticipating the sequel, we observe that a linear order $a_{1}<\dots<a_{n}$,
identified with the SBJ-tree $(\{a_{1},\dots,a_{n}\},\leq,\{\{a_{1},\dots,a_{n}%
\}\})$ is defined by the term $t:=ext_{a_{1}}(\Omega)\bullet
ext_{a_{2}}(\Omega)\bullet\dots\bullet ext_{a_{n}}(\Omega).$ The binary (it
is even \textqt{unary}) join-tree $(\{a_{1},\dots,a_{n}\},\leq)$ is defined by the
term $\mathit{fgs}(t)$ and also, in a different way, by the term $\mathit{fgs%
}(ext_{a_{n}}(ext_{a_{n-1}}(\dots(ext_{a_{1}}(\Omega)))..)))$.

\bigskip

\begin{defi}[The algebra $\mathbb{SBJT}$]\label{D:3.14}

We let $F$ be the signature $\{\bullet,ext,\Omega\}$. We obtain an algebra $%
\mathbb{SBJT}$ whose domain is the set of isomorphism classes of SBJ-trees.
Concatenation is associative with neutral element $\Omega$.
\end{defi}

\begin{defi}[The value of a term]\label{D:3.15}
  \leavevmode
\begin{enumerate}[label=(\alph*),beginpenalty=99,midpenalty=0,endpenalty=0]
\item In order to define the value of a term $t$ in $T^{\infty}(F)$, we
compare its positions by the following equivalence relation:
\begin{quote}
  $u\approx v$ if and only if every position $w$ such that
  $u<_{t}w\leq_{t}u\sqcup_{t}v$ or $v<_{t}w\leq_{t}u\sqcup_{t}v$ is an
  occurrence of $\bullet$.
\end{quote}
We will also use the lexicographic order $%
\leq_{lex}$ (positions are Dewey words). If $w$ is an occurrence of a binary
symbol, then $s_{1}(w)$ is its first (left) son and $s_{2}(w)$ its second
(right) one.
\item We define the \emph{value} $val(t):=(N,\leq,\mathcal{U})$ of $t$ in $%
T^{\infty}(F)$ as follows:
\begin{itemize}
\item $N:=\mathrm{Occ}(t,ext)$, the set of occurences in $t$ of $ext$,
\item $u\leq v:\Longleftrightarrow u\leq_{t}w\leq_{lex}v$ for some $w\in
N$ such that $w\approx v,$
\item $\mathcal{U}$  is the set of equivalence classes of $\approx.$
\end{itemize}
Equivalently, we have:
\begin{quote}
$u\leq v:\Longleftrightarrow u\leq_{t}v$ or $u\leq_{t}s_{1}(u\sqcup_{t}v), $ 
$v\leq_{t}s_{2}(u\sqcup_{t}v)$ and $v\approx u\sqcup_{t}v$ (the position $%
u\sqcup_{t}v$ is an occurrence of $\bullet$),
\end{quote}
and so (we recall that $\bot$ denotes incomparability):
\begin{quote}
$u\bot v:\Longleftrightarrow u\leq_{t}s_{1}(u\sqcup_{t}v),$ $%
v\leq_{t}s_{2}(u\sqcup_{t}v)$ and there is an occurrence of $ext$ between $%
v $ and $u\sqcup_{t}v$  or vice-versa by exchanging $u$ and $v$.
\end{quote}

\item We now consider terms $t$ written with the operations $ext_{a}$ (such
that $a$ is the node created by applying this operation). For each $a$, the
operation $ext_{a}$ must have at most one occurrence in $t$. Assuming this
condition satisfied, then $val(t):=(N,\leq,\mathcal{U})$, where
\begin{itemize}
\item $N$ is the set of nodes $a$ such that $ext_{a}$ has an occurence in $t$ that
we will denote by $u_{a}$,

\item $a\approx b:\Longleftrightarrow u_{a}\approx u_{b}$, with $\approx$
 as in (a),

\item $a\leq b:\Longleftrightarrow u_{a}\leq u_{b}$, with $\leq$ as in
(b),

\item $\mathcal{U}$ is the set of equivalence classes of $\approx.$
\end{itemize}
Clearly, the mapping $val$ in (b) is a value mapping 
$T^{\infty }(F)\rightarrow\mathbb{SBJT}$.
\end{enumerate}
We say that $t$ \emph{denotes} an SBJ-tree $J$ if $J$ is isomorphic to $%
val(t)$, and, in this case, we also say that $\mathit{fgs}(t)$ denotes the
BJ-tree $\mathit{fgs}(J)$. 
\end{defi}

\begin{figure}
[ptb]
\begin{center}
\includegraphics[
width=2.6342in
]%
{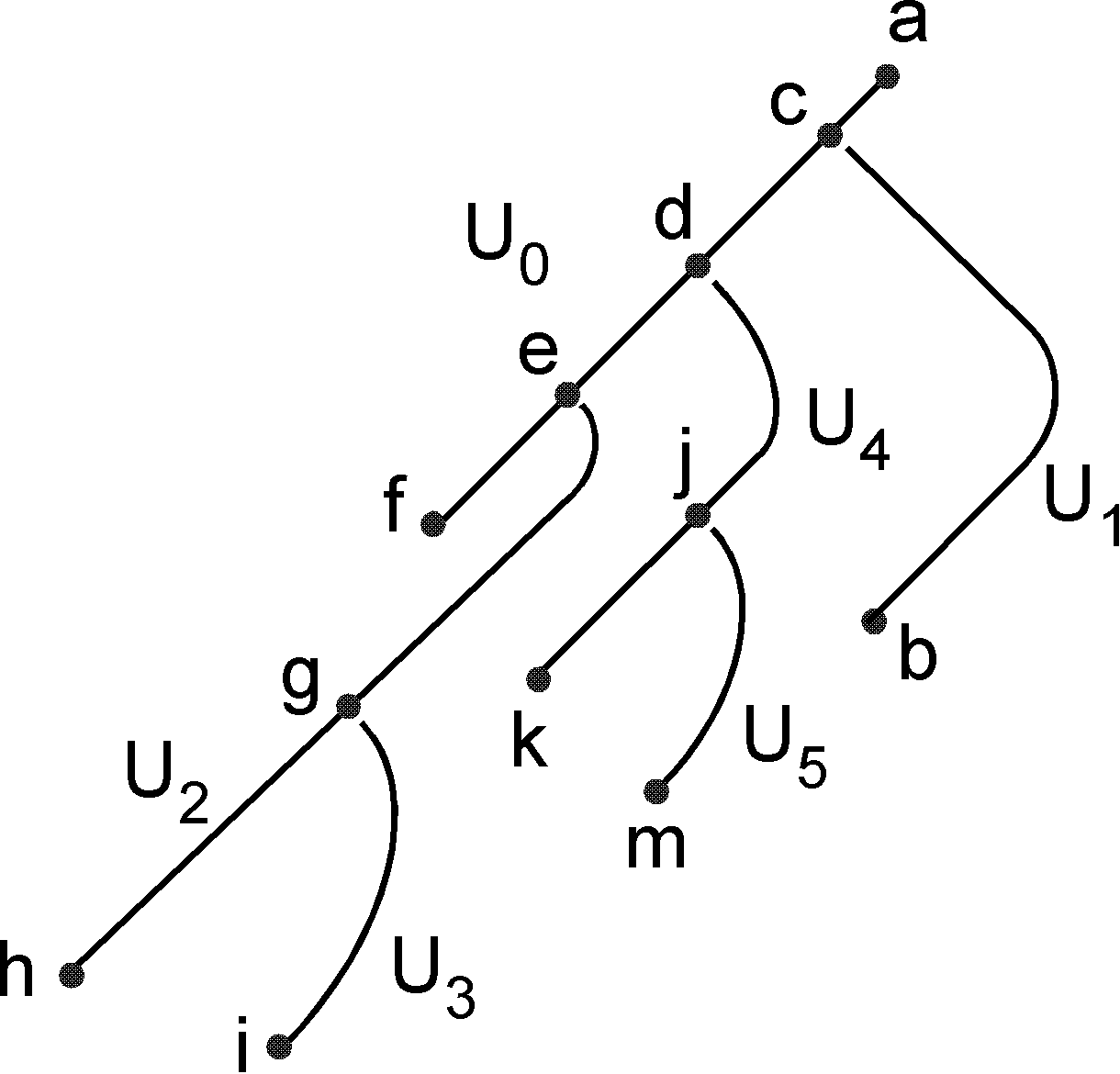}%
\caption{A finite SBJ-tree $J$.}%
\end{center}
\end{figure}

Note that we do not define the value of term as the least upper bound of the
values of its finite subterms. We could use a notion of least upper bound
based on category theory as in \cite{Cou78}, at the cost of heavy
definitions. Our simpler definition shows furthermore that the mapping
associating the join-tree $(N,\leq )$ with $\left\lfloor t\right\rfloor $\
for $t\in T^{\infty }(F)$ is an MS-transduction (cf.~Section 2) defined by $%
\mathcal{D}=\langle \chi ,\delta ,\theta _{\leq }\rangle $ where $\chi $
expresses that the considered input structure $S$ is isomorphic to $%
\left\lfloor t\right\rfloor $ for some $t\in T^{\infty }(F),$ $\delta (x)$
is $lab_{ext}(x)$ (expressing that $x\in N)$ and $\theta _{\leq }(x,y)$
expresses that $x\leq y$, cf.~Definition \ref{D:3.15}(b).

\bigskip

\begin{figure}
[ptb]
\begin{center}
\includegraphics[
width=2.6584in
]%
{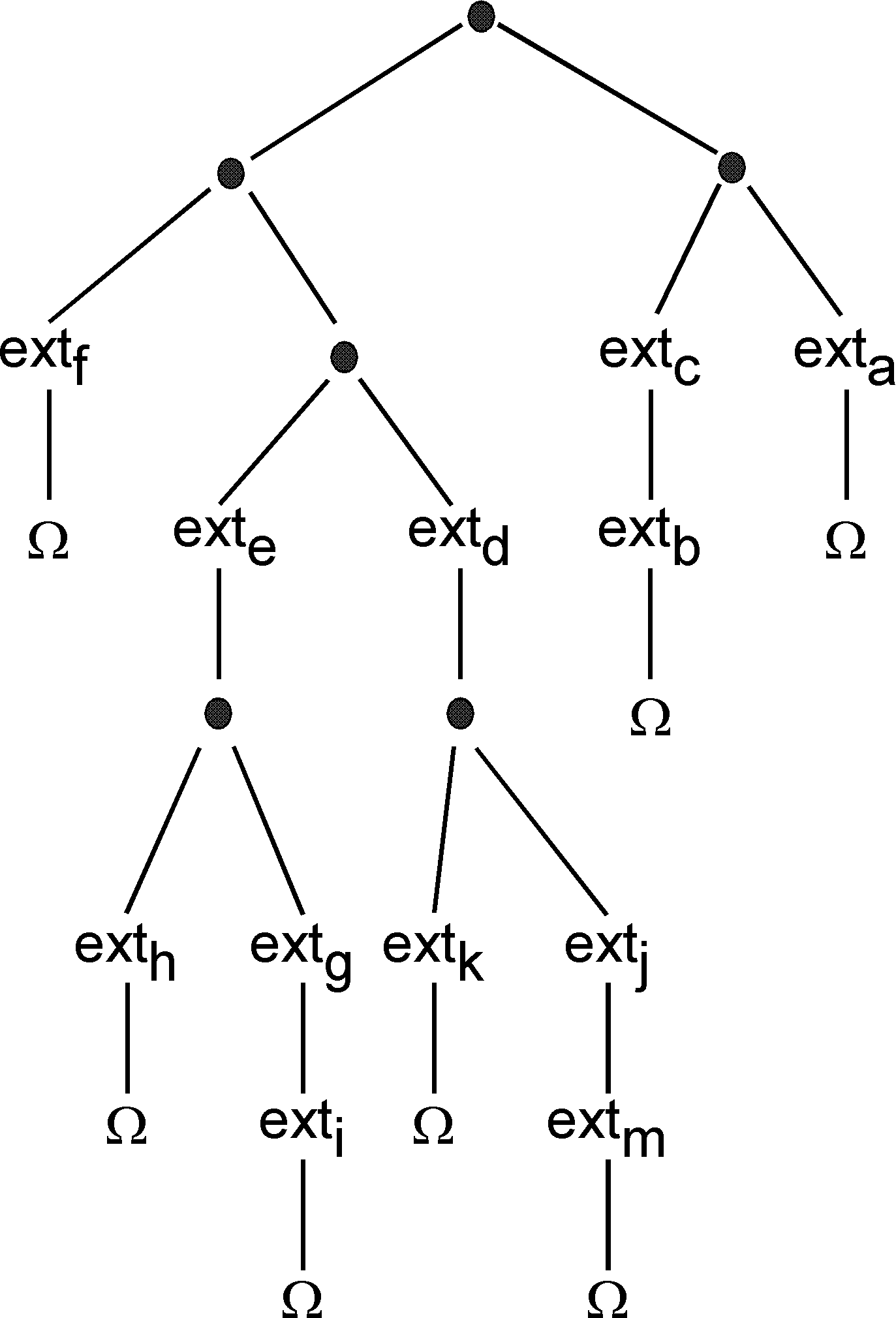}%
\caption{A term $t$ denoting $J$.}%
\end{center}
\end{figure}

\begin{exas}\label{E:3.16}
  \leavevmode
  \begin{enumerate}
\item The term $t_{0}$ that is the unique solution in $T^{\infty}(F)$ of the
equation $t_{0}=t_{0}\bullet t_{0}$ denotes the empty SBJ-tree $\Omega$.

\item Figure 3 shows a finite SBJ-tree $J$ whose structuring consists of $%
U_{0},\dots,U_{5}$, and $U_{0}$ is the axis. The linear order on $U_{0}$ can
be described by the word $fedca$ (with $f<e<d<\dots).$ Similarly, $%
U_{1}=b,U_{2}=hg,U_{3}=i$ , $U_{4}=kj$  and $U_{5}=m$.

Let us examine the term $t$ of Figure 4 that denotes $J$. A function symbol 
$ext_{u}$ specifies the node $u$ of $J$, and we also denote by $u$ its
occurrence, a position of $t$ (hence \emph{b} denotes position 21). The
occurrences of $\bullet $ and $\Omega $ are denoted by Dewey words. For
example, the occurrences of $\bullet $ above the symbols $ext$ are the
words $\varepsilon ,1,2,12$. The set $\{\varepsilon ,1,2,12,f,e,d,c,a\}$ is
an equivalence class of $\approx $. Another one is $\{1221,k,j\}$. Each
line $U_{i}$ is the set of positions of the $ext$ symbols in some
equivalence class of $\approx $. Let us now examine how each line is
ordered.

The case where $u<v$ holds because $u<_{t}v$ is illustrated, to take a few
cases, by $i<g,g<e,m<j$ and $j<d$.
The case where $u<v$ holds because $u\bot_{t}v$, $u\leq_{t}s_{1}(u\sqcup
_{t}v),$ $v\leq_{t}s_{2}(u\sqcup_{t}v)$ and $v\approx u\sqcup_{t}v$ is
illustrated by $f<e,e<d,d<c$ and $i<d$. We have $i<d$ because $i\sqcup
_{t}d=12,$ $i<_{t}s_{1}(12),$ $d\leq_{t}s_{2}(12)$ and $d\approx12$. We do
not have $i<j$ because $j$ is not $\approx$-equivalent to 12, whereas $%
i\sqcup_{t}j=12,$ $i<_{t}s_{1}(12)$ and $ j\leq_{t}s_{2}(12)$. This case
illustrates the characterization of $\bot$ Definition \ref{D:3.15}(c).

\begin{figure}
[ptb]
\begin{center}
\includegraphics[
height=2.3in,
]%
{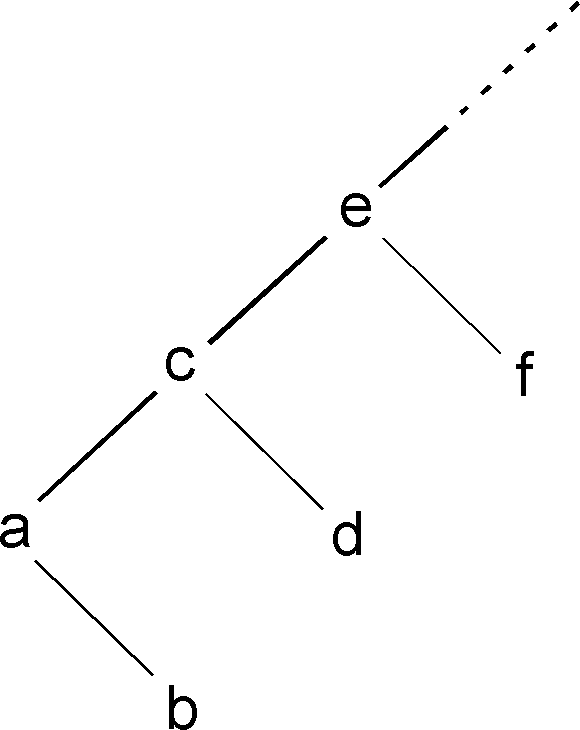}%
\caption{The SBJ-tree $val(t_{1})$.}%
\end{center}
\end{figure}

\item Let $t_{1}$ be the solution in $T^{\infty }(F)$ of the equation $%
t_{1}=ext(ext(\Omega ))\bullet t_{1}$. We write it by naming $a,a^{\prime
},b,b^{\prime },c,c^{\prime }\dots$ the nodes created by the operations $ext$,
hence, \emph{t}$_{1}=ext_{a}(ext_{a^{\prime }}(\Omega ))\bullet
(ext_{b}(ext_{b^{\prime }}(\Omega ))\bullet (ext_{c}(ext_{c^{\prime
}}(\Omega ))\bullet \dots))).$
This term and its value are shown in Figure 5. The bold edges link nodes in
the axis. The nodes $a^{\prime }$ and $c^{\prime }$ are incomparable
because the corresponding occurrences of $ext,$ that are 111 and 2211, have
least common ancestor $\varepsilon $ and 221 is an occurrence of $ext$
between 2211 and $\varepsilon $.

\item The following BJ-tree is defined by Fra\"{\i}ss\'{e} in \cite{Fra} (Section 10.5.3).
We let $W:=(Seq_{+}(\mathbb{Q}),\preceq)$ where
$Seq_{+}(\mathbb{Q})$ is the set of nonempty sequences of rational numbers, partially ordered as follows:$(x_{n},\dots,x_{0})\preceq(y_{m},\dots,y_{0})$ if and only if 
$n\geq m$, $(x_{m-1},\dots,x_{0})=(y_{m-1},\dots,y_{0})$ and $x_{m}\leq y_{m}$. In particular, $\prec$ is the transitive closure of $\prec_{0}\cup\prec_{1}$ where
$(x_{p+1},x_{p},\dots,x_{0})\prec_{0}(x_{p},\dots,x_{0})$ and
$(y,x_{p-1},\dots,x_{0})\prec_{1}(z,x_{p-1},\dots,x_{0})$ if $y<z.$
It is easy to check that $W$ is a BJ-tree.  In particular, two nodes $(x_{n},\dots,x_{0})$ and $(y_{m},\dots,y_{0})$ are
incomparable if and only if 
$(y_{m},\dots,y_{0})=(y_{m},\dots,y_{p+1},x_{p},\dots,x_{0})$ and 
$y_{p+1}\neq x_{p+1}$ for some $p<n,m.$ In this case,
their join is $(\min\{y_{p+1},x_{p+1}\},x_{p},\dots,x_{0})$. 
The two directions
relative to a node $x=(x_{p},\dots,x_{0})$ are:
\begin{quote}
$\partial_{0}(x):=\{(y_{m},\dots,y_{p+1},x_{p},\dots,x_{0})\mid n>p,y_{m},\dots,y_{p+1}\in\mathbb{Q}\}$ and,

\noindent $\partial_{1}(x):=\{(y_{m},\dots,y_{p+1},x_{p}^{\prime},\dots,x_{0})\mid n\geq
p,x_{p}^{\prime},y_{m},\dots,y_{p+1}\in\mathbb{Q}$,  $x_{p}^{\prime}<x_{p}\}$.
\end{quote}
A structuring $\mathcal{U}$ of $W$ consists of the sets $\{(x_{n}%
,\dots,x_{0})\mid x_{n}\in\mathbb{Q}\}$ for each (possibly empty) sequence
$(x_{n-1},\dots,x_{0})$. The set of one element sequences $(r)$ for
$r\in\mathbb{Q}$ is the axis, and $U_{-}(x)\subseteq\partial_{1}(x)$ for all
$x\in Seq_{+}(\mathbb{Q}).$

The proof in \cite{Fra} that every finite or countable generalized tree in
the sense of \cite{LS} (\emph{i.e.}, partial order satisfying Condition 2) of
Definition 3.1(a)) is isomorphic to $(X,\preceq\upharpoonright X)$ for some
subset $X$ of $Seq_{+}(\mathbb{Q})$ uses implicitely the structuring
$\mathcal{U}$. Our description of the two directions of a node shows that
$W=W\bullet(ext(W)\bullet W),$ hence, that $W$ is denoted by the regular term
$t\in T^{\infty}(F)$ such that $t=t\bullet(ext(t)\bullet t)$.
\end{enumerate}\end{exas}

\begin{defi}[The description scheme associated with a term]\label{D:3.17}
\leavevmode\begin{enumerate}
\item Let $t\in T^{\infty}(F)$ and $u\in Pos(t)$. We denote by $\mathrm{Max}%
(t,ext,u)$ the set of maximal occurrences of $ext$ in $t$ that are below $u$
or equal to it. Positions are denoted by Dewey words, hence, these sets are
linearly ordered by $\leq_{lex}$. We denote by $W(t,u)$ the simple
arrangement  $(\mathrm{Max}(t,ext,u),\leq_{lex}).$
Let $J=(N,\leq,\mathcal{U})$ be the value of $t$ (cf. Definition  \ref{D:3.15})
and $x$ be an occurrence of $ext$ with son $u$. We have $(U^{x},\leq )$ $=(
\mathrm{Max}(t,ext,u),\leq_{lex}).$
For the term $t$ in Example \ref{E:3.16}(2), see Figure 4, we have
$W(t,\varepsilon)=fedca, W(t,1)=fed, W(t,1211)=hg.$ For $t_{1}$ in Example
\ref{E:3.16}(3), we have $W(t_{1},\varepsilon )=abc\dots, W(t_{1},1)=a,
  W(t_{1},11)=a^{\prime} \text{ and } W(t_{1},111)=\Omega.$

\item We define $\Delta(t)$ as the SBJ-scheme
$(\mathrm{Occ}(t,ext),W(t,\varepsilon),(W(t,s(x)))_{x\in\mathrm{Occ}(t,ext)}) $    where $s(x)$ is the unique son of an occurrence $x$ of $ext$.
We obtain \[\Delta (t_{1})=(\mathrm{2}^{\ast }1\uplus 2^{\ast
}11,abc\dots,(w_{x})_{x\in \mathrm{Occ}(t_{1},ext)})\] with $w_{1}=a^{\prime }$%
, $w_{21}=b^{\prime }$, \dots, $w_{11}=\Omega $, $w_{211}=\Omega $, \dots for
the term $t_{1}$ of Example \ref{E:3.16}(3).
\end{enumerate}
\end{defi}

\begin{lem}\label{L:3.18} If $t\in T^{\infty}(F),$ then $val(t)$ is described
by $\Delta(t)$.
\end{lem}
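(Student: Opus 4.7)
The strategy is to exhibit the identity map $r = \mathrm{Id}_N : N \to Q$ as the run witnessing that $\Delta(t) = (Q, W(t,\varepsilon), (W(t, s(x)))_{x \in Q})$ describes $val(t) = (N, \leq, \mathcal{U})$. Since $Q = N = \mathrm{Occ}(t, ext)$, the identity is available, and Definition \ref{D:3.10}(b) reduces to two assertions: the axis $A$ of $val(t)$ coincides, as an arrangement over $Q$, with $W(t, \varepsilon)$; and for every $x \in N$, the line $U^x$ coincides with $W(t, s(x))$. Because $W(t, u) = (\mathrm{Max}(t, ext, u), \leq_{lex})$ is a simple arrangement, these isomorphisms reduce to the set identities $A = \mathrm{Max}(t, ext, \varepsilon)$ and $U^x = \mathrm{Max}(t, ext, s(x))$, together with the equalities $\leq \upharpoonright A = \leq_{lex} \upharpoonright A$ and $\leq \upharpoonright U^x = \leq_{lex} \upharpoonright U^x$.

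For the set identities, set $C_u := \mathrm{Max}(t, ext, u)$. The structural claim I would establish is that every $\approx$-class on $N$ is of the form $C_v$ for a suitable position $v$; more precisely, $C_\varepsilon$ is the unique class with no top (hence is the axis), while $C_{s(x)}$ is the class with top $x$ (hence equals $U^x$). The proof is a direct check from the definition of $\approx$: any two $y, z \in C_u$ have least common ancestor $y \sqcup_t z \leq_t u$, and every position strictly between $y$ (resp.\ $z$) and $y \sqcup_t z$ must be an occurrence of $\bullet$ (not $ext$, by maximality of $y$ or $z$ within the subterm at $u$; not $\Omega$, since it has a successor on the path), so $y \approx z$. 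Conversely, if $y \in C_{s(x)}$ and $y \approx z$, then the path in $t$ from $y$ up to $y \sqcup_t z$ avoids $ext$, which forces $y \sqcup_t z \leq_t s(x)$---otherwise $x$ itself, an $ext$ occurrence, would lie on that path---and a symmetric argument then puts $z$ in $C_{s(x)}$; the case $y \in C_\varepsilon$ is analogous.

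For the order identity, take $y \neq z$ in the same line. They are both maximal $ext$-occurrences in a common subterm, so neither is an ancestor of the other in $t$; that is, they are $\leq_t$-incomparable. In the characterization of $\leq$ from Definition \ref{D:3.15}(b), the first disjunct $u \leq_t v$ therefore cannot apply, and the $\approx$-hypothesis in the second disjunct holds automatically by the structural claim. What remains is $y \leq_t s_1(y \sqcup_t z)$ and $z \leq_t s_2(y \sqcup_t z)$, which on Dewey words is exactly $y <_{lex} z$. Upwards-closure of $A = C_\varepsilon$ in $val(t)$ is verified by the same reasoning: if $y \in C_\varepsilon$ and $y <_{val(t)} z$, then any $ext$ occurrence $\geq_t y$ and $\leq_t y \sqcup_t z$ would contradict either the $\approx$-hypothesis for $z$ or the maximality of $y$, so $z \in C_\varepsilon$ as well. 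Putting these pieces together, the identity run $r$ witnesses that $\Delta(t)$ describes $val(t)$. The main technical obstacle is the structural claim: extracting from the compact definitions of $\approx$ and of $\mathrm{Max}(t, ext, \cdot)$ the precise correspondence between $\approx$-classes on $N$ and the subtrees of $t$ cut off at $ext$-occurrences, including the identification of the unique topless class as the axis.
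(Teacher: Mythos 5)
Your proposal is correct and follows the same route as the paper: the identity on $\mathrm{Occ}(t,ext)$ serves as the run, and everything reduces to the equalities $(A,\leq)=(\mathrm{Max}(t,ext,\varepsilon),\leq_{lex})$ and $(U^{x},\leq)=(\mathrm{Max}(t,ext,s(x)),\leq_{lex})$, which the paper merely records as an observation in Definition \ref{D:3.17}(a) while you verify them in detail (correctly) via the correspondence between $\approx$-classes and the sets $\mathrm{Max}(t,ext,\cdot)$.
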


\begin{proof}
Let $val(t)=(N,\leq,\mathcal{U}).$ The conditions
of Definition \ref{D:3.10}(b) hold with the identity on $\mathrm{Occ}(t,ext)$ as
mapping $r$ because $(U^{x},\leq)=(\mathrm{Max}(t,ext,$ $s(x)),\leq_{lex})$
as observed in Definition \ref{D:3.17}(a). 
\end{proof}

\begin{prop}\label{P:3.19} Every SBJ-tree is the value of a term.
\end{prop}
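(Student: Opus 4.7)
The plan is, given an SBJ-tree $J=(N,\leq ,\mathcal{U})$ with axis $A$, to exhibit a term $t\in T^{\infty }(F)$ whose description scheme $\Delta (t)$ of Definition~\ref{D:3.17} is isomorphic to the standard description scheme $\Delta (J)=(N,(A,\leq ),((U^{x},\leq ))_{x\in N})$ from Proposition~\ref{P:3.11}(1). Once this is achieved, Lemma~\ref{L:3.18} yields that $val(t)$ is described by $\Delta (J)$, and the uniqueness part of Proposition~\ref{P:3.11}(2) forces $val(t)\simeq J$.

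The first ingredient is the observation that every countable linear order, viewed as a simple arrangement $(V,\leq ,Id)$, is $val(\tau )$ for some $\tau \in T^{\infty }(\{\bullet ,\Omega \}\cup V)$ in which each element of $V$ occurs exactly once. This follows from the representation of countable linear orders by prefix-free subsets of $\{0,1\}^{\ast }$ recalled from~\cite{Cou78} in Section~1: a prefix-free $L\subseteq \{0,1\}^{\ast }$ isomorphic to $(V,\leq )$ yields a $\bullet $-skeleton on $\{1,2\}^{\ast }$ whose leaves at the positions of $L$ carry the elements of $V$, while the missing branches are filled with $\Omega $. Accordingly, for each line $U\in \mathcal{U}$ I will fix a term $\tau _{U}\in T^{\infty }(\{\bullet ,\Omega \}\cup U)$ satisfying $val(\tau _{U})\simeq (U,\leq ,Id)$ with each element of $U$ occurring exactly once in it.

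Then I will obtain $t$ by an iterated infinite substitution: starting with $\tau _{A}$, replace each nullary occurrence $a\in A$ by the subterm $ext_{a}(s_{a})$, where $s_{a}$ is itself obtained from $\tau _{U^{a}}$ by the analogous substitution if $a$ is the top of some line $U^{a}$, and $s_{a}:=\Omega $ otherwise. Since every node of $J$ has finite depth in the structuring (Condition~(2) of Definition~\ref{D:3.3}), every fixed position of the resulting term is determined after finitely many substitution steps, so $t$ is a well-defined element of $T^{\infty }(F)$ and $\mathrm{Occ}(t,ext)$ is in canonical bijection with $N$ via $a\leftrightarrow ext_{a}$.

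It remains to verify that $\Delta (t)\simeq \Delta (J)$. By construction, the maximal occurrences of $ext$ at or below $\varepsilon $ in $t$ are exactly the $ext_{a}$ for $a\in A$, situated at the positions of the nullary $a$'s in $\tau _{A}$; their $\leq _{lex}$-order therefore coincides with the order on $A$ by the choice of $\tau _{A}$, giving $W(t,\varepsilon )\simeq (A,\leq ,Id)$. The analogous argument, applied to the son of each $ext_{x}$ and the term $\tau _{U^{x}}$, yields $W(t,s(ext_{x}))\simeq (U^{x},\leq ,Id)$ for every $x\in N$. The main point that needs care is the well-definedness of the iterated infinite substitution, which reduces to the finite-depth condition on structurings; the rest of the verification is bookkeeping on Dewey positions.
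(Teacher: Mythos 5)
Your construction is essentially the paper's: both represent each line of the structuring as a term over $\{\bullet,\Omega\}$ together with nullary symbols, via the prefix-free representation of countable linear orders, and then attach $ext_a(\cdots)$ at the leaf positions level by level down the structuring (the paper realizes the limit as the least upper bound, for $\underline{\ll}$, of an increasing chain of finite-depth approximants $t_k$ rather than as an iterated infinite substitution, but this is the same object). The only difference is in the final verification, which is cosmetic: the paper asserts directly that the limit term denotes $J$, whereas you route the check through $\Delta(t)\simeq\Delta(J)$, Lemma~\ref{L:3.18} and the unicity part of Proposition~\ref{P:3.11}; both are valid.
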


\begin{proof}
Let $J=(N,\leq ,\mathcal{U})$ be an SBJ-tree. For each $k$,
we let $J_{k}$ be the SBJ-tree $(N_{k},\leq ,\mathcal{U}_{k})$ where $N_{k}$
is the set of nodes of depth at most $k$ and $\mathcal{U}_{k}$ is the set of
lines $U\in \mathcal{U}$ of depth at most $k$. By induction on $k$, we
define for each $k$ a term $t_{k}$ that defines $J_{k}$ such that $t_{k}$ $%
\underline{\ll }$ $t_{k^{\prime }}$ if $k<k^{\prime }$, and then, the
least upper bound of the terms $t_{k}$ is the desired term $t$ whose value
is $J$.
We define terms using the symbols $ext_{a}$ where $a$ names the node created
by the corresponding occurrence of the extension operation.

If $k=0$, then $J_{0}=(A,\leq ,\{A\})$. There exists a term $t\in T^{\infty
}(\{\bullet \},Ext_{A})$ whose value is $J_{0}$, where $Ext_{A}$ is the set
of terms $ext_{a}(\Omega )$ for $a\in A$ (we use $Ext_{A}$ as a set of
nullary symbols). We use here Theorem 2.3 of \cite{Cou78}, that follows
immediately from the representation of a linear order by the lexicographic
order on a prefix-free language\footnote{Also used in the related paper \cite{Cou14}.} recalled in Section 1.

Let $k\geq1$, where $t_{k-1}$ defines $J_{k-1}$. Then $J_{k}$ is obtained
from $J_{k-1}$ by adding below some nodes $x$ at depth $k-1$ the line $U^{x}$
(if $U^{x}=\emptyset$, there is nothing to add below $x$). Let $t_{x}\in
T^{\infty}(\{\bullet\},Ext_{U^{x}})$ whose value is $(U^{x},\leq)$. We
obtain $t_{k}$ by replacing in $t_{k-1}$ each subterm $ext_{x}(\Omega)$ by $%
ext_{x}(t_{x})$, for $x$ at depth $k-1$ such that $U^{x}\neq\emptyset$.
It is clear that $t_{k-1}$ $\underline{\ll}$ $t_{k}$ and that the least
upperbound of the terms $t_{k}$ defines $J$. 
\end{proof}

\noindent For an example, we apply this construction to the SBJ-tree $J$ of Figure
3. For defining $J_{0}$, we can take:
\[
t_{0}=((ext_{f}(\Omega)\bullet ext_{e}(\Omega))\bullet ext_{d}(\Omega
))\bullet(ext_{c}(\Omega)\bullet ext_{a}(\Omega)).
\]
To obtain $t_{1}$, we replace $ext_{e}(\Omega )$ by $ext_{e}(ext_{h}(\Omega
)\bullet ext_{g}(\Omega ))$, $ext_{d}(\Omega )$ by $ext_{d}(ext_{k}(\Omega
)\bullet ext_{j}(\Omega ))$ and $ext_{c}(\Omega )$ by $ext_{c}(ext_{b}(%
\Omega )),$ which gives:
\[
t_{1}=((ext_{f}(\Omega )\bullet ext_{e}(ext_{h}(\Omega )\bullet
ext_{g}(\Omega )))\bullet ext_{d}(ext_{k}(\Omega )\bullet ext_{j}(\Omega
))\bullet (ext_{c}(ext_{b}(\Omega ))\bullet ext_{a}(\Omega )).
\]
Then, we obtain $t_{2}$ that defines $J$ by replacing $ext_{g}(\Omega )$ by 
$ext_{g}(ext_{i}(\Omega ))$ and $ext_{j}(\Omega )$ by $ext_{j}(ext_{m}(%
\Omega )).$

\subsection{Regular binary join-trees}

As said in the introduction, the regular objects are those defined by
regular terms. We apply this meta-definition to binary join-trees and their
structurings.

\begin{defi}[Regular BJ- and SBJ-trees]\label{D:3.20}

A BJ-tree (resp. an SBJ-tree) $T$ is \emph{regular} if it is denoted by $%
\mathit{fgs}(t)$ (resp. by $t$) where $t$ is a regular term in $T^{\infty}(F)$.

\end{defi}

\begin{thm}\label{T:3.21} The following properties of a BJ-tree $J$ are
equivalent:

\begin{enumerate}
\item   $J$ is regular,

\item $J$ is described by a regular scheme,

\item $J$ is MS definable.
\end{enumerate}
\end{thm}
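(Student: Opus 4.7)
The plan is to prove the cycle $(2)\Rightarrow(3)\Rightarrow(1)\Rightarrow(2)$. The implication $(2)\Rightarrow(3)$ is exactly Proposition~\ref{P:3.12}, and Proposition~\ref{P:3.9} allows me to pass freely between BJ-trees and SBJ-trees by choosing a structuring and then applying $\mathit{fgs}$, so I work at the SBJ-level throughout.

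For $(1)\Rightarrow(2)$, I start from a regular term $t\in T^{\infty}(F)$ with $\mathit{fgs}(val(t))\simeq J$. By Lemma~\ref{L:3.18}, $\Delta(t)$ describes $val(t)$. Its state set $\mathrm{Occ}(t,ext)$ is in general infinite, but the equivalence ``$u$ and $v$ root the same subterm of $t$'' has finite index because $t$ is regular, and the arrangement $W(t,s(x))$ depends only on the class of $x$. Quotienting $\Delta(t)$ by this equivalence as in the second remark after Proposition~\ref{P:3.11} produces a scheme with a finite state set $Q$, and each of its arrangements is the value of the regular term in $T^{\infty}(\{\bullet,\Omega\}\cup Q)$ obtained from the relevant subterm of $t$ by truncating at every maximal occurrence of $ext$ and relabelling the truncation by its class; this truncation preserves regularity, so each arrangement is a regular arrangement in the sense of \cite{Tho86}.

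For $(2)\Rightarrow(1)$, each $w_{Ax}$ and each $w_q$ of a regular SBJ-scheme $\Delta=(Q,w_{Ax},(w_q)_{q\in Q})$ is denoted by a regular term $\tau_{Ax},\tau_q\in T^{\infty}(\{\bullet,\Omega\}\cup Q)$ in which the elements of $Q$ are treated as nullary symbols. Substituting every occurrence of each $p\in Q$ by $ext(T_p)$ gives a finite equation system in the unknowns $(T_q)_{q\in Q}$ in which every occurrence of $T_p$ on a right-hand side sits inside an $ext(\cdot)$, hence the system has a unique solution in $T^{\infty}(F)$ and this solution is regular; setting $T_{Ax}:=\tau_{Ax}[p\leftarrow ext(T_p)]$ then yields a regular term in $T^{\infty}(F)$. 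A comparison of depth-$k$ truncations in the spirit of Proposition~\ref{P:3.19} shows $val(T_{Ax})\simeq\mathit{Unf}(\Delta)$, and then $\mathit{fgs}(val(T_{Ax}))\simeq J$ by Proposition~\ref{P:3.11}(2).

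The remaining $(3)\Rightarrow(2)$ is the hard direction, and this is where Thomas' theorem is brought in. Given an MS sentence $\varphi$ whose unique model (up to isomorphism) is $J$, I fix any structuring $\mathcal{U}$ of $J$ via Proposition~\ref{P:3.9} to obtain an SBJ-tree $J^{\ast}=(N,\leq,\mathcal{U})$, and I attach to each node $x$ the MS-isomorphism type, at a quantifier rank bounded in terms of that of $\varphi$, of the hanging SBJ-subtree $J^{\ast}_x$. The MS-definability of $J$, combined with the formulas of Proposition~\ref{P:3.7} that make $S(J^{\ast})$ MS-expressible and the backwards translation of Theorem~\ref{T:2.1}, allows a standard compositionality argument to show that only finitely many such types occur. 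These types become the states $Q$ of a quotient scheme $\Delta$ of $\Delta(J^{\ast})$ (as in the second remark after Proposition~\ref{P:3.11}), and each arrangement $w_q$ and $w_{Ax}$ of $\Delta$ is an MS-definable $Q$-labelled linear order, hence regular by Thomas' theorem. Therefore $\Delta$ is a regular SBJ-scheme describing $J^{\ast}$, which witnesses~(2). The main obstacle is precisely this compositionality step: the ``hanging-subtree type'' equivalence must be shown to be of finite index, and each resulting $Q$-labelled line must inherit MS definability from that of $J$, so that Thomas' theorem is applicable to all of the arrangements at once.
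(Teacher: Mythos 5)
Your implications $(1)\Rightarrow(2)$ and $(2)\Rightarrow(3)$ follow the paper ($(2)\Rightarrow(3)$ is Proposition~\ref{P:3.12}, and your quotient of $\Delta(t)$ by the ``same subterm'' equivalence is exactly the paper's construction of a regular scheme from a regular term), and your extra direct construction for $(2)\Rightarrow(1)$ by substituting regular terms $\tau_q$ into a guarded equation system is plausible, if redundant. The problem is the exit from~(3), which is where the entire weight of the theorem lies, and your argument for $(3)\Rightarrow(2)$ has a genuine gap. The quotient construction in the second remark after Proposition~\ref{P:3.11} is only well-defined when $s(q)=s(q')$ implies $\overline{s}(w_{q})\simeq\overline{s}(w_{q'})$, i.e.\ when nodes identified by your map have \emph{isomorphic} (relabelled) hanging lines. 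Agreement of MS-types up to a fixed quantifier rank does not give isomorphism --- there are uncountably many countable linear orders but only finitely many rank-$k$ types --- so identifying nodes by the type of $J^{\ast}_{x}$ does not yield a description scheme at all, let alone one describing $J^{\ast}$. Proving that, for a uniquely MS-definable $J$, same-type nodes do have isomorphic hanging subtrees is essentially the theorem itself, not a ``standard compositionality argument.'' A second unresolved point is your appeal to \cite{Tho86}: Thomas' theorem applies to an arrangement that is the \emph{unique} countable model of an MS sentence, and you never exhibit such a sentence for the type-labelled axis or for the lines $U^{x}$; a sentence characterizing $J$ does not automatically restrict to one characterizing a labelled line inside it. You flag both difficulties yourself as ``the main obstacle,'' but flagging them does not discharge them.

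The paper avoids all of this by proving $(3)\Rightarrow(1)$ on the term side rather than inside $J$: the map $\alpha$ sending $\left\lfloor t\right\rfloor$ to $\mathit{fgs}(val(t))$ is an MS-transduction, so by the Backwards Translation Theorem~\ref{T:2.1} the set of terms $t\in T^{\infty}(F)$ with $\alpha(\left\lfloor t\right\rfloor)\models\beta$ (where $\beta$ uniquely axiomatizes $J$) is an MS-definable, nonempty set of infinite terms, and by Rabin's basis theorem \cite{Rab,Tho90} such a set contains a \emph{regular} term; that term denotes $J$. This is the key idea your proof is missing: Rabin's theorem does for you exactly the finiteness-and-reconstruction work that your compositionality step would otherwise have to supply. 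To repair your proof, replace your $(3)\Rightarrow(2)$ by this $(3)\Rightarrow(1)$ argument (your $(1)\Rightarrow(2)$ then closes the cycle).
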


\begin{proof}
  \leavevmode
  \begin{enumerate}
\item $\!\!\Longrightarrow$(2) Let $J=\mathit{fgs}(J^{\prime})$
with $J^{\prime}$ denoted by a regular term $t$ in $T^{\infty}(F)$. Let $%
h:Pos(t)\rightarrow Q$ and $\tau$ be as in the definition of a regular term
in Section 1. Without loss of generality, we can assume that $h(Pos(t))=Q$.\
If this is not the case, we replace $Q$ by $h(Pos(t))$ and $\tau$ by its
restriction to this set.

\noindent \textbf{Claim.}
\begin{enumerate}
\item For each $u\in Pos(t)$, the arrangement $\overline {h}%
(W(t,u))=(\mathrm{Max}(t,ext,u),$ $\leq_{lex},h)$ over $Q$ is regular.

\item If $u^{\prime}$ is another position in $t$ and $h(u^{\prime})=h(u)$,
then $t/u^{\prime}=t/u$ and furthermore\footnote{Unless $u=u^{\prime}$, the sets $\mathrm{Max}(t,ext,u)$ and $\mathrm{Max}%
(t,ext,u^{\prime})$ are not equal, so that the arrangements $\overline{h}%
(W(t,u))$ and  $\overline{h}(W(t,u^{\prime}))$  are isomorphic but not
equal.} $\overline{h}(W(t,u^{\prime}))\simeq\overline{h}(W(t,u)).$
\end{enumerate}
Leaving its routine proof, we define $\Delta:=(Q,w_{Ax},(w_{q})_{q\in Q})$
as follows:

\begin{enumerate}[label=(\roman*)]
\item $w_{Ax}:=$ $\overline{h}(W(t,\varepsilon))$,
\item if $q\in Q,$ then $w_{q}:=\overline{h}(W(t,s(u)))$ where $s(u)$ is the
unique son of an occurrence $u$ of $ext$ such that $h(u)=q$; if $v$ is
another occurrence of $ext$ such that $h(v)=q$, then $h(s(v))=h(s(u))$ and
so by the claim, $\overline{h}(W(t,s(v)))\simeq\overline{h}(W(t,s(u)))$.\
Hence, $w_{q}$ is well-defined up to isomorphism.
\end{enumerate}

Informally, $\Delta$ is obtained from $\Delta(t)$ by replacing the labelling
mapping $Id$ of the arrangements $W(t,u)$ by $h$, so that these arrangements
are turned into arrangements $\overline{h}(W(t,u))$ over $Q$. Clearly, $%
\Delta$ is a regular scheme. As mapping $r$ showing that it describes $%
J^{\prime}$ (cf.~Definition 3.10), hence also $J$, we take the resriction
of $h$ to $\mathrm{Occ}(t,ext)$ that is the set of nodes of $J^{\prime}=val(t)$.

\item $\!\!\Longrightarrow$(3) is proved in Proposition \ref{P:3.12}.

\item $\!\!\Longrightarrow $(1) By Definition \ref{D:3.15}, the mapping 
$\alpha $ that
transforms the relational structure $\left\lfloor t\right\rfloor $ for $t$
in $T^{\infty }(F)$ into the BJ-tree $J=(N,\leq )=\mathit{fgs}(val(t))$ is
an MS-transduction because an MS\ formula can identify the nodes of $J$
among the positions of $t$ and another one can define 
$\leq $.

Let $J=(N,\leq )$ be an MS definable BJ-tree. It is, up to isomorphism, the
unique model of an MS sentence $\beta $. It follows by a standard argument%
\footnote{If $\alpha $ is an MS-transduction and $\beta $  is an MS sentence, then
the set of structures $S$ such that $\alpha (S)\models \beta $  is
MS-definable (Theorem \ref{T:2.1}).} that the set of terms $t$ in $T^{\infty }(F)$
such that $\alpha (\left\lfloor t\right\rfloor )\models \beta $ is MS
definable and thus, contains a regular term, a result by Rabin \cite%
{Tho90,Rab}. This term denotes $J$, hence $J$ is regular. 
\qedhere
\end{enumerate}
\end{proof}

\begin{cor}\label{C:3.22} The isomorphism problem for regular BJ-trees is
decidable.
\end{cor}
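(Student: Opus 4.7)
The plan is to combine the constructive content of Theorem \ref{T:3.21} with the decidability of the monadic theory of regular term structures. Assume $J_{1},J_{2}$ are regular BJ-trees, each presented by a regular term $t_{1},t_{2}\in T^{\infty }(F)$, or equivalently, by a regular SBJ-scheme (the two presentations are interconvertible by Lemma \ref{L:3.18} and the proof of (1)$\Rightarrow$(2) of Theorem \ref{T:3.21}, which is effective).

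First, I would walk through the constructive chain (1)$\Rightarrow$(2)$\Rightarrow$(3) of Theorem \ref{T:3.21} applied to $t_{1}$, producing an MS sentence $\beta _{1}$ whose unique model up to isomorphism is $J_{1}$. The step to (2) replaces the identity labelling in $\Delta (t_{1})$ by the automaton mapping $h$ witnessing regularity of $t_{1}$, yielding a regular SBJ-scheme $\Delta $ whose arrangements $w_{Ax}$ and $w_{q}$ are effectively computable regular expressions. The step to (3) then follows Proposition \ref{P:3.12}: plug into the template sentence the MS sentences $\psi _{Ax},\psi _{q}$ characterizing the regular arrangements $w_{Ax}$ and $w_{q}$ up to isomorphism. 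Effectivity here rests on the fact that the construction behind Thomas's theorem in \cite{Tho86}, invoked in Proposition \ref{P:3.12}, actually produces the sentences $\psi _{q}$ from regular expressions.

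Second, I would check whether $J_{2}\models \beta _{1}$; by Theorem \ref{T:3.21} this is equivalent to $J_{1}\simeq J_{2}$. The mapping $\alpha :\lfloor t\rfloor \mapsto \mathit{fgs}(val(t))$ is an MS-transduction (noted after Definition \ref{D:3.15} and used in the proof of (3)$\Rightarrow$(1) of Theorem \ref{T:3.21}). Hence Theorem \ref{T:2.1} yields an effectively computable MS sentence $\beta _{1}^{\mathcal{D}}$ with
\[
\lfloor t_{2}\rfloor \models \beta _{1}^{\mathcal{D}}\iff J_{2}\models \beta _{1}.
\]
Since $t_{2}$ is regular, $\lfloor t_{2}\rfloor $ is MS definable by the result of Rabin recalled in Section 1.2, so its monadic theory is decidable (as recorded after Theorem \ref{T:2.1}); this lets us decide $\lfloor t_{2}\rfloor \models \beta _{1}^{\mathcal{D}}$, and hence the isomorphism question.

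The main obstacle is ensuring that every step above is genuinely effective, not just existential: that Thomas's MS characterization of regular arrangements yields computable sentences from regular descriptions, and that the decidability of the MS theory of a regular term structure (ultimately a consequence of Rabin's tree theorem together with the MS-interpretability of $\lfloor t_{2}\rfloor $ in the full binary tree) is turned into an algorithm. Both are standard, but they are where the content of the corollary really lives.
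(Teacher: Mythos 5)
Your proposal is correct and follows essentially the same route as the paper: it exploits the effectivity of the conversions in Theorem \ref{T:3.21} to obtain an MS sentence for one tree, then uses the backwards translation of Theorem \ref{T:2.1} along the MS-transduction $\alpha$ to reduce the question to deciding whether the regular term structure $\lfloor t_{2}\rfloor$ satisfies an MS sentence, which is decidable by Rabin's theorem. Your closing remarks on where the effectivity really lives match the paper's (terser) appeal to the effectiveness of the proof of Theorem \ref{T:3.21}.
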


\begin{proof}
 A regular BJ-tree can be given, either by a regular term, a
regular scheme or an MS sentence. The proof of Theorem \ref{T:3.21} is effective~:
algorithms can convert any of these specifications into another one. Hence,
two regular BJ-trees can be given, one by an MS\ sentence $\beta $, the
other by a regular term $t$. They are isomorphic if and only if $\alpha
(\left\lfloor t\right\rfloor )\models \beta $ (cf.~the proof of (3)$%
\Longrightarrow $(1) of Theorem \ref{T:3.21}) if and only if $\left\lfloor
t\right\rfloor \models \beta ^{\prime }$ where $\beta ^{\prime }$ obtained
by applying Theorem \ref{T:2.1} to the sentence $\beta $ and the transduction $%
\alpha $. This is decidable \cite{Tho90,Rab}. 
\end{proof}

\subsection{Logical and algebraic descriptions of join-trees}

We now extend to join-trees the definitions and results of the previous
sections. Structured join-trees are defined in Section 3.1 (Definition
3.3). We extend to them the definitions and results of Sections 3.2-3.4. A
first novelty is that the argument of the extension operation $ext$ will be
an SJ-forest, equivalently a set of SJ-trees, instead of a single SBJ-tree.\
We will need an algebra with two sorts, the sort of SJ-trees and that of
SJ-forests. A second difference consists in the use in monadic second-order
formulas of a finiteness predicate (cf.~Section 2).

\bigskip

\begin{defi}[Description schemes for SJ-trees]\label{D:3.23}
\leavevmode
\begin{enumerate}[label=(\alph*)]
\item A \emph{description scheme for an SJ-tree}, in short an \emph{SJ-scheme,}
is a 5-tuple $\Delta = (Q, D, w_{Ax},$ $(m_{q})_{q\in Q},(w_{d})_{d\in D})$ such
that $Q,D$ are sets, $w_{Ax}\in \mathcal{A}(Q)$, $w_{d}\in \mathcal{A}(Q)$
for each $d\in D$ and $m_{q}=(M_{q},lab_{q})$ is a $D$-labelled set (cf.\
Section 2) for each $q\in Q$. Without loss of generality, we will assume
that the domains $V_{Ax}$ and $V_{d}$ of the arrangements $w_{Ax},w_{d}$ and
the sets $M_{q}$ are pairwise disjoint, because these arrangements and
labelled sets will be used up to isomorphism. Informally, $M_{q}$ encodes
the different lines $U$ such that $\widehat{U}=x$ where $x$ is labelled by $%
q $, and each of these lines is defined, up to isomorphism, by the
arrangement $w_{d}$ where $d$ is its label in $D$, defined by $lab_{q}$.

We say that $\Delta$ is \emph{regular} if $Q\cup D$ is finite and the
arrangements $w_{Ax}$ and $w_{d}$ are regular. The finiteness of $D$ implies
that each $D$-labelled set $m_{q}$ is regular.

\item Let $J=(N,\leq ,\mathcal{U})$ be an SJ-tree with axis $A$; for each $%
x\in N$, we denote by $\mathcal{U}^{x}$ the set of lines $U\in \mathcal{U}$
such that $\widehat{U}=x$. In the example of Figure 3, we have $\mathcal{U}%
^{d}=\{U_{4}\}$. An SBJ-scheme $\Delta $ as in a) \emph{describes} $J$ if
there exist mappings $r:N\rightarrow Q$ and $\widetilde{r}:\mathcal{U}%
-\{A\}\rightarrow D$  such that:

\begin{enumerate}[label=(b.\arabic*)]
\item the arrangement $(A,\leq,r)$ over $Q$ is isomorphic to $w_{Ax}$,

\item for each $x\in N$, the $D$-labelled set\footnote{$\mathcal{U}%
^{x}$ is a set of subsets of $N$ and $\widetilde{r}$ replaces each set in $%
\mathcal{U}^{x}$ by some $d\in D$. Hence, $\widetilde{r}(\mathcal{U}%
^{x}\dots) $ is a multiset of elements of $D$.} $(\mathcal{U}^{x},\widetilde{r}%
)$ is isomorphic to $m_{r(x)}$,

\item for each $U\in\mathcal{U}-\{A\}$, the arrangement $%
(U,\leq,r) $ over $Q$ is isomorphic to $w_{\widetilde{r}(U)}$.
\end{enumerate}
\end{enumerate}
\noindent We will also say that $\Delta $ \emph{describes} the join-tree $\mathit{fgs}%
(J):=(N,\leq ),$ obtained from $J$ by forgetting the structuring.
\end{defi}

\begin{prop}\label{P:3.24}
  \leavevmode
  \begin{enumerate}
\item Every SJ-tree is described by some
SJ-scheme.

\item Every SJ-scheme $\Delta $ describes a unique SJ-tree $\mathit{Unf}(\Delta )$ where unicity is up to isomorphism.
\end{enumerate}
\end{prop}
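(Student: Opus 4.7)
The plan is to mimic the proof of Proposition~\ref{P:3.11}, adapting it to the two-sorted setting in which a new layer of $D$-labelled sets tracks the (possibly countable) multisets of lines attached below each node. For part~(1), given an SJ-tree $J=(N,\leq,\mathcal{U})$ with axis $A$, I would exhibit the \emph{standard} SJ-scheme
\[
\Delta(J):=\bigl(N,\,\mathcal{U}-\{A\},\,(A,\leq),\,((\mathcal{U}^{x},\mathrm{Id}))_{x\in N},\,((U,\leq))_{U\in\mathcal{U}-\{A\}}\bigr),
\]
and verify that $r:=\mathrm{Id}_{N}$ together with $\widetilde{r}:=\mathrm{Id}_{\mathcal{U}-\{A\}}$ fulfils conditions (b.1)--(b.3) of Definition~\ref{D:3.23}; this is tautological once the definitions are unpacked.

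For part~(2), I build $\mathit{Unf}(\Delta)$ by iterating the SBJ-scheme construction with an extra interleaved index that selects a specific line below a given node. Fix $\Delta=(Q,D,w_{Ax},(m_{q})_{q\in Q},(w_{d})_{d\in D})$ with $w_{Ax}=(V_{Ax},\preceq,lab_{Ax})$, $w_{d}=(V_{d},\preceq,lab_{d})$, $m_{q}=(M_{q},lab_{q})$, assuming the underlying sets pairwise disjoint. Declare the nodes of $\mathit{Unf}(\Delta)$ to be finite nonempty alternating sequences
\[
x=(v_{0},e_{1},v_{1},e_{2},\dots,e_{k},v_{k})
\]
with $v_{0}\in V_{Ax}$, each $e_{i}\in M_{q_{i}}$ where $q_{i}$ is the $Q$-label of $v_{i-1}$ (by $lab_{Ax}$ for $i=1$ and by $lab_{d_{i-1}}$ for $i>1$), and each $v_{i}\in V_{d_{i}}$ where $d_{i}:=lab_{q_{i}}(e_{i})$. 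Set $x\leq x'$ iff, writing $x'=(v_{0}',e_{1}',\dots,v_{j}')$ with $k\geq j$, the prefix $(v_{0},e_{1},\dots,v_{j-1},e_{j})$ coincides with $(v_{0}',e_{1}',\dots,v_{j-1}',e_{j}')$ and $v_{j}\preceq v_{j}'$ in the relevant arrangement ($w_{Ax}$ if $j=0$, $w_{d_{j}}$ otherwise). The axis is $\{(v):v\in V_{Ax}\}$, and for $x$ at depth $k\geq 1$ the line through $x$ is
\[
U(x):=\{(v_{0},e_{1},\dots,e_{k},v):v\in V_{d_{k}}\},
\]
with top $\widehat{U(x)}=(v_{0},e_{1},\dots,v_{k-1})$. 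Checking that this is an SJ-tree and that $\Delta$ describes it via $r(x):=lab_{d_{k}}(v_{k})$ (or $lab_{Ax}(v_{0})$ if $k=0$) and $\widetilde{r}(U(x)):=d_{k}$ amounts to unwinding the definitions.

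For uniqueness I would follow the level-by-level argument at the end of the proof of Proposition~\ref{P:3.11}: given two SJ-trees $J,J'$ described by $\Delta$ via runs $(r,\widetilde{r})$ and $(r',\widetilde{r}')$, build an isomorphism $h$ as the common extension of bijections $h_{k}:N_{k}\to N'_{k}$ between the sets of nodes of depth at most $k$, respecting $\leq$, the structurings and the runs. The base case $k=0$ follows from (b.1). For the inductive step, pick $x\in N_{k-1}$ with image $x'=h_{k-1}(x)$ and set $q:=r(x)=r'(x')$; by (b.2), the $D$-labelled sets $(\mathcal{U}^{x},\widetilde{r})$ and $(\mathcal{U}^{x'},\widetilde{r}')$ are both isomorphic to $m_{q}$, so there is a $D$-label-preserving bijection between them, and for each matched pair $(U,U')$ with $\widetilde{r}(U)=\widetilde{r}'(U')=d$, condition (b.3) furnishes an order-preserving bijection $U\to U'$ compatible with the runs; assembling these yields $h_{k}$. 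The main obstacle is precisely this bookkeeping step: unlike the SBJ case where each node tops at most one line, here each node can top a countable multiset of lines, so the matching between $\mathcal{U}^{x}$ and $\mathcal{U}^{x'}$ must be chosen using the isomorphism of $D$-labelled sets guaranteed by (b.2), and these local choices must be made coherently over all $x\in N_{k-1}$ so that the union of the local bijections gives a single well-defined $h_{k}$.
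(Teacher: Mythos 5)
Your proposal is correct and follows essentially the same route as the paper: the same standard scheme $\Delta(J)$ with identity runs for part (1), the same unfolding via alternating sequences interleaving elements of the $M_q$ with elements of the $V_d$ for existence in part (2), and the same level-by-level isomorphism construction (imported from Proposition~\ref{P:3.11}) for uniqueness. Your closing remark correctly pinpoints the one place where the adaptation is not purely mechanical --- choosing, coherently over all $x$, a $D$-label-preserving matching of the multisets $\mathcal{U}^{x}$ and $\mathcal{U}^{x'}$ via (b.2) before applying (b.3) line by line --- a detail the paper leaves implicit by simply citing the earlier proof.
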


\begin{proof}
 We extend the proof of Proposition \ref{P:3.11}.

 \begin{enumerate}
\item Each SJ-tree $J=(N,\leq ,\mathcal{U})$ has a \emph{standard} description
scheme $\Delta (J):= (N,\mathcal{U}-\{A\},$ $(A,\leq ),$ $((\mathcal{U}%
^{x},Id))_{x\in N},((U,\leq ))_{U\in \mathcal{U}-\{A\}})$. The identity
mappings $N\rightarrow N$ and $\mathcal{U}-\{A\}\rightarrow \mathcal{U}%
-\{A\}$ show that $\Delta (J)$ describes $J$.

\item Let $\Delta =(Q,D,w_{Ax},(m_{q})_{q\in Q},(w_{d})_{d\in D})$ be an
SJ-scheme, defined with arrangements $w_{Ax}=(V_{Ax},\preceq ,lab_{Ax})$ and 
$w_{d}=(V_{d},\preceq ,lab_{d})$, and labelled sets $m_{q}=(M_{q},lab_{q})$
such that the sets $V_{Ax}$, $V_{d}$ and $M_{q}$ are pairwise disjoint and
the same symbol $\preceq $ denotes the orders of the arrangements $w_{Ax}$
and $w_{d}.$ We construct $\mathit{Unf}(\Delta):=(N,\leq,\mathcal{U})$ as follows.

\begin{enumerate}
\item  $N$ is the set of finite nonempty sequences $%
(v_{0},s_{1},v_{1},s_{2},\dots,s_{k},v_{k})$ such that:

\begin{quote}
$v_{0}\in V_{Ax},v_{i}\in V_{d_{i}}$ and $s_{i}\in M_{q_{i-1}}$ for $1\leq
i\leq k$, where

\noindent $q_{0}=lab_{Ax}(v_{0})$, $d_{1}=lab_{q_{0}}(s_{1})$, $%
q_{1}=lab_{d_{1}}(v_{1})$, $d_{2}=lab_{q_{1}}(s_{2})$, \dots, $%
q_{i}=lab_{d_{i}}(v_{i})$, $d_{i+1}=lab_{q_{i}}(s_{i+1})$ for $1\leq i\leq
k-1$.
\end{quote}

\item $(v_{0},s_{1},v_{1},\dots,s_{k},v_{k})\leq(v_{0}^{\prime},s_{1}^{\prime
},v_{1}^{\prime},\dots,s_{j}^{\prime},v_{j}^{\prime})$ if and only if

\begin{quote}
$k\geq j$, $(v_{0},s_{1},v_{1},\dots,s_{j})=(v_{0}^{\prime},s_{1}^{%
\prime},v_{1}^{\prime},\dots,s_{j}^{\prime})$ and $v_{j}\preceq v_{j}^{\prime}$
($v_{j},v_{j}^{\prime}\in V_{d_{j}}).$
\end{quote}

\item the axis $A$ is the set of one-element sequences $(v)$ for $v\in V_{Ax}$
and, for $x=(v_{0},s_{1},v_{1},\dots,s_{k},v_{k})$, $U(x)$ is the set of
sequences in $N$ of the form $(v_{0},s_{1},v_{1},s_{2},\dots,$ $s_{k},v)$ for $%
v\in V_{d_{k}}$, so that $\widehat{U(x)}%
=(v_{0},s_{1},v_{1},\dots,s_{k-1},v_{k-1})$.
\end{enumerate}
Note that $(v_{0},s_{1},v_{1},\dots,v_{k})<(v_{0},s_{1},v_{1},\dots,v_{j})$ if 
$j<k$ and that $(v_{0},s_{1},v_{1},$ $\dots,s_{k},v_{k})$
$\leq(v_{0},s_{1},v_{1},\dots,s_{k},v)$  if and only if $v_{k}\preceq v$.
In order to prove that $\Delta$ describes $J,$ we define $r:N\rightarrow Q$
and $\widetilde{r}:\mathcal{U}-\{A\}\rightarrow D$ as follows:

\begin{itemize}
\item if $x\in A$, then $x=(v)$ for some $v\in V_{Ax}$ and $r(x):=lab_{Ax}(v);$

\item if $x\in N$ has depth $k\geq1$, then $%
x=(v_{0},s_{1},v_{1},\dots,s_{k},v_{k})$ for some $v_{0},s_{1},\dots,s_{k},v_{k}$
 and $r(x):=lab_{d_{k}}(v_{k})$;

\item if $U\in\mathcal{U}-\{A\},$ then $U=U(x)$ for some $%
x=(v_{0},s_{1},v_{1},\dots,s_{k},v_{k})$, $k\geq1$, and $\widetilde{r}%
(U):=d_{k}$.
\end{itemize}

\noindent We check the three conditions of \ref{D:3.23}(b). We have $(A,\leq
,r)\simeq w_{Ax}$, hence (b.1) holds. For checking (b.2), we consider $%
x=(v_{0},s_{1},v_{1},\dots,s_{k},v_{k})$  $\in N,k\geq1$. The sets $U$ in $%
\mathcal{U}^{x}$ are those of the form $%
\{(v_{0},s_{1},v_{1},\dots,s_{k},v_{k},s,v)\mid v\in V_{d_{k+1}}\}$ for all $%
s\in M_{q_{k}}$ where $q_{k}=lab_{d_{k}}(v_{k})=r(x)$, hence (b.2) holds.
For checking (b.3), we let $U=U(x)$ for some $%
x=(v_{0},s_{1},v_{1},\dots,s_{k},v_{k}),k\geq1$; it is the set of sequences $%
(v_{0},s_{1},v_{1},s_{2},\dots,s_{k},v)$ for $v\in V_{d_{k}}$ ordered by $%
\preceq$ on the last components. Hence, $(U,\leq,lab_{d_{k}})$ is
isomorphic to $w_{d_{k}}$, which proves the property since $\widetilde {r}%
(U):=d_{k}$.

Unicity is proved as in Proposition \ref{P:3.11}. \qedhere
\end{enumerate}
\end{proof}

\noindent As for SBJ-trees, every SJ-tree is described by a canonical SJ-scheme, that
is regular and has a minimum number of states if the SJ-tree is regular.\
The following proposition extends Proposition \ref{P:3.12}.

\begin{prop}\label{P:3.25} A join-tree is MS$_{\mathit{fin}}$-definable if
it is described by a regular SJ-scheme.
\end{prop}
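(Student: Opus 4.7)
My plan is to extend the proof of Proposition \ref{P:3.12}, replacing Proposition \ref{P:3.9} by Proposition \ref{P:3.7} and incorporating two ingredients specific to SJ-schemes: the $D$-labelled sets $m_{q}$ (in place of a single right-line at each node), and the finiteness predicate $\mathit{Fin}$ (needed because labelled sets are characterized only up to label multiplicities in $\mathbb{N}\cup\{\omega\}$). Let $\Delta = (Q,D,w_{Ax},(m_{q})_{q\in Q},(w_{d})_{d\in D})$ be a regular SJ-scheme describing $J=(N,\leq)=\mathit{fgs}(J')$ with $J' = \mathit{Unf}(\Delta)$. Since $Q,D$ are finite and the arrangements are regular, the theorem of Thomas supplies MS sentences $\psi_{Ax}$ and $\psi_{d}$ over alphabet $Q$ characterizing $w_{Ax}$ and $w_{d}$ up to isomorphism, and since each $m_{q}$ is a labelled set over the finite alphabet $D$, the remark at the end of Section 1 supplies an MS$_{\mathit{fin}}$ sentence $\xi_{q}$ characterizing $m_{q}$.

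I will then construct an MS$_{\mathit{fin}}$ sentence $\Phi$ whose models are, up to isomorphism, exactly $J$. The sentence $\Phi$ existentially quantifies over sets $N_{0},N_{1},(M_{q})_{q\in Q},(P_{d})_{d\in D}$ and asserts: (i) $(N,\leq,N_{0},N_{1}) = S(J'')$ for some SJ-tree $J'' = (N,\leq,\mathcal{U})$, via the formula $\varphi$ of Proposition \ref{P:3.7}(1), which also supplies the auxiliary formulas $\theta_{Ax}$ identifying the axis $A$ and $\theta(u,U,N_{0},N_{1})$ identifying the pairs $(u,U)$ with $U\in \mathcal{U}$ and $\widehat{U}=u$; (ii) $(M_{q})_{q}$ partitions $N$, inducing a run $r : N \to Q$; (iii) $(P_{d})_{d}$ partitions $N - A$ so that each line of $\mathcal{U} - \{A\}$ lies in a single $P_{d}$, making $\tilde{r}(U):=d$ well-defined; (iv) the labelled linear order $(A,\leq,r)$ satisfies $\psi_{Ax}$, and $(U,\leq,r)$ satisfies $\psi_{\tilde{r}(U)}$ for each line $U \neq A$, via relativized versions of these sentences; (v) for each $x$ with $r(x)=q$, the $D$-labelled multiset $(\mathcal{U}^{x},\tilde{r})$ satisfies $\xi_{q}$. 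By the characterization properties of $\psi_{Ax},\psi_{d},\xi_{q}$, conditions (i)--(v) together say exactly that $\Delta$ describes $J''$ via $r$ and $\tilde{r}$, so Proposition \ref{P:3.24}(2) forces $J'' \simeq J'$ and hence $\mathit{fgs}(J'') \simeq J$.

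The main technical obstacle is condition (v), since $\mathcal{U}^{x}$ is a set of lines rather than a set of elements of $N$, so $\xi_{q}$ cannot be applied to it directly. I will circumvent this by unpacking $\xi_{q}$ into its content: a conjunction of atomic cardinality statements ``the label class $d$ has size $n_{d}$'' for some $n_{d} \in \mathbb{N}\cup\{\omega\}$. The cardinality to be counted, for each $d$, is the number of $U \in \mathcal{U}$ with $\widehat{U}=x$ and $U \subseteq P_{d}$, and these $U$ are directly accessible via the formula $\theta$. A finite count ``$=n$'' is expressed as $\exists U_{1},\dots,U_{n}$ pairwise distinct, each a line with top $x$ contained in $P_{d}$, such that every further such line equals one of them. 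A countable count ``$=\omega$'' is expressed as the existence, using $\lnot\mathit{Fin}$, of an infinite set $Z \subseteq P_{d}$ of elements lying in lines with top $x$ and pairwise inequivalent under the relation ``lie in the same line''; this relation is MS-definable from $N_{0},N_{1}$, since two elements of $N_{i}$ are in the same line iff they are comparable and the closed interval between them lies in $N_{i}$. Substituting these MS$_{\mathit{fin}}$ expressions for the atomic cardinality clauses of each $\xi_{q}$ yields the desired sentence $\Phi$, completing the construction.
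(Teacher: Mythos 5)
Your proposal is correct and follows essentially the same route as the paper: the same existentially quantified data $(N_0,N_1)$, a $Q$-indexed partition encoding the run $r$ and a $D$-indexed partition encoding $\tilde r$, Thomas's theorem for the arrangements, Proposition \ref{P:3.7} for the structuring, and Proposition \ref{P:3.24}(2) for unicity. The only (immaterial) difference is in how the multiplicities $m_q^j$ are counted: the paper quantifies a transversal set $W$ containing one node per relevant line and constrains its cardinality (using $\mathit{Fin}$ when $m_q^j=\omega$), whereas you quantify the lines directly for finite counts and use an infinite set of pairwise ``different-line'' representatives for $\omega$ --- an equivalent encoding.
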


\begin{proof} Let $(N,\leq)$ be a join-tree $J$ (this property is
first-order expressible). Assume that $J=\mathit{fgs}(J^{\prime})$  where $%
J^{\prime}=(N,\leq,\mathcal{U})\simeq\mathit{Unf}(\Delta)$ for some regular
SJ-scheme $\Delta=(Q,D,w_{Ax},(m_{q})_{q\in Q},(w_{d})_{d\in D})$ such that 
$Q=\{1,\dots,m\}$ and $D=\{1,\dots,p\}$. Let $r$,$\widetilde{r}$ be the
corresponding mappings (cf. Definition \ref{D:3.23}(b)). For each $d\in D$, let $%
\psi_{d}$ be an MS sentence that characterizes $w_{d}$ up to isomorphism, by
the main result of \cite{Tho86}. Similarly, $\psi_{Ax}$ characterizes $%
w_{Ax} $.

A $D$-labelled set $m_{q}$ is described up to isomorphism by a $p$-tuple $%
(\smash{m_{q}^{1},\dots,m_{q}^{p}})$ where $\smash{m_{q}^{j}}$ is the number (possibly $%
\omega $) of elements having label $j$.
By Proposition \ref{P:3.7}, there is a bipartition $(N_{0},N_{1})$ of $N$ that
describes the structuring $\mathcal{U}$; from this bipartition, we can
define the axis $A$, the lines forming $\mathcal{U}$ and the node $\widehat{U%
}$ for each $U\in \mathcal{U}-\{A\}$ by MS formulas.
There is a partition $(Y_{1},\dots,Y_{m})$ of $N$ that describes $r$ by $%
Y_{q}:=r^{-1}(q)$. There is a partition $(Z_{1},\dots,Z_{p})$ where $Z_{j}$ is
the union of the lines $U\in \mathcal{U}-\{A\}$ such that $\widetilde{r}%
(U)=j.$

Consider a relational structure $(X,%
\leq,N_{0},N_{1},Y_{1},\dots,Y_{m},Z_{1},\dots,Z_{p}).$  By MS formulas, one
can express the following properties:
\begin{enumerate}[label=(\roman*)]
\item $(X,\leq,N_{0},N_{1})$ is $S(J^{\prime\prime})$ for some SJ-tree $%
J^{\prime\prime}=(X,\leq,\mathcal{U}^{\prime})$; its axis is denoted by $%
A^{\prime}$,

\item $(Y_{1},\dots,Y_{m})$ is a partition of $X$; we let $r(x):=q$ if and
only if $x\in Y_{q}$,

\item $(Z_{1},\dots,Z_{p})$ is a partition of $X$ such that each $Z_{j}$ is a
union of sets $U\in\mathcal{U}^{\prime}-\{A^{\prime}\}$ such that $%
(U,\leq,r)\simeq w_{j},$

\item $(A^{\prime},\leq,r)\simeq w_{Ax}$,

\item for each $q\in Q$ and $x\in Y_{q},$ the number of lines $U\in \mathcal{U}%
^{\prime x}$ that are contained in $Z_{j}$ is $m_{q}^{j}.$
\end{enumerate}

These formulas are constructed as follows: $\varphi (N_{0},N_{1})$ for (i)
is from Proposition \ref{P:3.7}. The formula for (ii) is standard. All other
formulas are constructed so as to express the desired properties when (i)
and (ii) do hold. For (iii), we use a suitable adaptation of $\psi _{i}$
and the fact from Proposition \ref{P:3.7} that, if (i) holds, we can define from $%
(N_{0},N_{1})$, by MS formulas, the axis $A^{\prime }$, the lines forming $%
\mathcal{U}^{\prime }$ and the node $\widehat{U}$ for each $U\in \mathcal{U}%
^{\prime }$. The mapping $r$ is given by $(Y_{1},\dots,Y_{m}).$ For (iv), we
do as for (iii) with $\psi _{Ax}$.

For (v), we do as follows. We write an MS\ formula $\gamma
(x,N_{0},N_{1},Z,W)$ expressing that $W$ consists of one node of each set $%
U\in \mathcal{U}^{\prime }-\{A^{\prime }\}$ that is contained in $Z$ and is
such that $\widehat{U}=x$. For any $x$ and $Z$, all sets $W$ satisfying $%
\gamma (x,N_{0},N_{1},Z,W)$ have same cardinality. Then, Property (v) holds
if and only if, for all $q=1,\dots,m$, $x\in Y_{q}$ and $j=1,\dots,p$, if $%
\gamma (x,N_{0},N_{1},Z_{j},W)$ holds, then $W$ has cardinality $m_{q}^{j}.$
If some number $m_{q}^{j}$ is $\omega $, we need the finiteness predicate $%
Fin(W)$ to express this condition\footnote{If the nodes of $J$ have degree at most $a\in \mathbb{N}$, then $%
m_{i}^{j}\leq a$ for all $i,j$ and the finiteness predicate is not needed,
hence, $J$ is MS definable.}.

Let $\beta(N_{0},N_{1},Y_{1},\dots,Y_{m},Z_{1},\dots,Z_{p})$ express conditions
(ii)-(v) in $(X,\leq)$. If a join-tree $(X,\leq)$ satisfies $\varphi
(N_{0},N_{1})\wedge\beta(N_{0},N_{1},Y_{1},\dots,Y_{m},Z_{1},\dots,Z_{p})$, it
has a structuring $\mathcal{U}^{\prime}$ described by $N_{0},N_{1}$: we let $%
J^{\prime\prime}:=(X,\leq,\mathcal{U}^{\prime})$. The sets $%
Y_{1},\dots,Y_{m},Z_{1},\dots,Z_{p}$ yield a scheme $\Delta$ that describes $%
J^{\prime\prime}$ (by Conditions (iii)-(v)), hence $J^{\prime\prime}$ is
isomorphic to $J^{\prime}$ by the unicity property of Proposition (3.24), \
and so, we have $(X,\leq)\simeq\mathit{fgs}(J^{\prime})=J$.

Hence, $J$ is (up to isomorphism) the unique model of the MS$_{\mathit{fin}}$
sentence:
\begin{equation}
\exists N_{0},N_{1}(\varphi(N_{0},N_{1})\wedge\exists
Y_{1},\dots,Y_{m},Z_{1},\dots,Z_{p}.%
\beta(N_{0},N_{1},Y_{1},\dots,Y_{m},Z_{1},\dots,Z_{p}))).
\tag*{\qEd}
\end{equation}
\def\popQED{}
\end{proof}
\noindent
Theorem \ref{T:3.30} will establish a converse.

\begin{defi}[Operations on SJ-trees and SJ-forests]\label{D:3.26}

We recall from Definition \ref{D:3.1} that a join-forest is the union of disjoint
join-trees. A structured join-forest (an SJ-forest, cf.~Definition \ref{D:3.3})
is the union of disjoint SJ-trees. It has no axis (each of its components
has an axis, but we do not single out any of them). We will use objects of
three types: join-trees, SJ-trees and SJ-forests, but a 2-sorted algebra
will suffice (similarly as above for $\mathbb{SBJT}$, we have not introduced
a separate sort for BJ-trees). The two sorts are $\boldsymbol{t}$ for
SJ-trees and $\boldsymbol{f}$ for SJ-forests.

\begin{itemize}

\item \emph{Concatenation of SJ-trees along axes.}
The \emph{concatenation} $J\bullet J^{\prime }$ of disjoint SJ-trees $J$ and 
$J^{\prime }$ is defined exactly as in Definition \ref{D:3.13} for SBJ-trees.

\item \emph{The empty SJ-tree }is denoted by the nullary symbol $\Omega _{%
\boldsymbol{t}}$.

\item \emph{Extension of an SJ-forest into an SJ-tree.}
Let $J=(N,\leq,\mathcal{U})$ be an SJ-forest and $u\notin N$. Then $%
ext_{u}(J)$ is an SJ-tree defined as in Definition \ref{D:3.13}. When handling
SJ-trees up to isomorphism, we will use the notation $ext(J)$ instead of $%
ext_{u}(J).$

\item \emph{The empty SJ-forest }is denoted by the nullary symbol $\Omega _{%
\boldsymbol{f}}$.

\item \emph{Making an SJ-tree into an SJ-forest. }

This is done by the unary operation $\mathit{mkf}$ that is actually
the identity on the triples that define SJ-trees.

\item \emph{The union of two disjoint SJ-forests} is denoted by $\uplus$.

\end{itemize}
\noindent
The types of these operations are thus:
\begin{align*}
\bullet&:\boldsymbol{t}\times\boldsymbol{t}\rightarrow\boldsymbol{t},
&
\Omega_{\boldsymbol{t}}&:\boldsymbol{t},
&
ext&:\boldsymbol{f}\rightarrow\boldsymbol{t},
\\
\uplus&:\boldsymbol{f}\times\boldsymbol{f}\rightarrow\boldsymbol{f},
&
\Omega_{\boldsymbol{f}}&:\boldsymbol{f},
&
\mathit{mkf}&:\boldsymbol{t}\rightarrow\boldsymbol{f}.
\end{align*}
In addition, we have, as in Definition \ref{D:3.13} the
\emph{Forgetting the structuring: }If $J$ is an SJ-tree, $\mathit{fgs}(J)$
is the underlying join-tree.
\end{defi}

\begin{defi}[The algebra $\mathbb{SJT}$]\label{D:3.27}

We let $F^{\prime}$ be the 2-sorted signature $\{\bullet,\uplus ,ext,\mathit{%
mkf},\Omega_{\boldsymbol{t}},$ $\Omega_{\boldsymbol{f}}\}$ where the types of
these six operations are as above. We obtain an $F^{\prime}$-algebra $%
\mathbb{SJT}$ whose domains are the sets of isomorphism classes of SJ-trees
and of SJ-forests. Concatenation is associative with neutral element $%
\Omega_{\boldsymbol{t}}$ and disjoint union is associative and commutative
with neutral element $\Omega_{\boldsymbol{f}}$.
\end{defi}

\begin{defi}[The value of a term]\label{D:3.28}

The definition is actually identical to that for SBJ-trees (Definition
\ref{D:3.15}). We recall it for the reader's convenience. The equivalence
relation $\approx$ is as in this definition.  The \emph{value} $%
val(t)=(N,\leq ,\mathcal{U})$ of $t\in T^{\infty}(F^{\prime})$ is defined as
follows:
\begin{itemize}
  \item $N:=\mathrm{Occ}(t,ext)$, the set of occurences in $t$ of $ext$,
  \item $u\leq v:\Longleftrightarrow u\leq_{t}w\leq_{lex}v$ for some $w\in
N$ such that $w\approx v,$
  \item $\mathcal{U}$ is the set of equivalence classes of $\approx.$
\end{itemize}
If $t$ has sort $\boldsymbol{t}$ (resp. $\boldsymbol{f}$) then $val(t)$ is
an SJ-tree (resp. an SJ-forest). It is clear that we have a value mapping: $%
T^{\infty}(F^{\prime})\rightarrow\mathbb{SJT}.$

For terms $t$ written with the operations $ext_{a}$, then $val(t):=(N,\leq ,%
\mathcal{U})$ where:

\begin{itemize}
  \item 
$N$ is the set of nodes $a$ such that $ext_{a}$ has an occurence in $t$,
actually a unique one, that we will denote by $u_{a}$,

\item $a\leq b:\Longleftrightarrow u_{a}\leq u_{b}$,

\item $a\approx b:\Longleftrightarrow u_{a}\approx u_{b}$, and

\item $\mathcal{U}$ is the set of equivalence classes of $\approx.$
\end{itemize}
\end{defi}

\begin{defi}[Regular join-trees]\label{D:3.29}

A join-tree (resp.~an SJ-tree) $T$ is \emph{regular} if it is denoted by $%
\mathit{fgs}(t)$ (resp.~by $t$) where $t$ is a regular term in $T^{\infty
}(F^{\prime})$ of sort $\boldsymbol{t}$.
\end{defi}

\begin{thm}\label{T:3.30} The following properties of a join-tree $J$ are
equivalent:
\begin{enumerate}
\item
$J$ is regular,

\item
 $J$ is described by a regular scheme,

\item
$J$ is MS$_{\mathit{fin}}$-definable.
\end{enumerate}
\end{thm}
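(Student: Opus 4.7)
The plan is to mirror the three-cycle proof of Theorem~\ref{T:3.21}, with the necessary adaptations to the two-sorted signature $F'$, the richer structure of SJ-schemes (featuring the multiset component $(m_q)_{q\in Q}$ and a separate set $D$ of line-labels), and the use of MS$_{\mathit{fin}}$ in place of MS.

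For (1)$\Rightarrow$(2), I start from a regular term $t\in T^{\infty}(F')$ of sort $\boldsymbol{t}$ with $\mathit{fgs}(val(t))\simeq J$, equipped with $h\colon Pos(t)\to Q$ and $\tau$ witnessing its regularity. The idea is to partition the relevant states into $Q':=h(\mathrm{Occ}(t,ext))$ (labels for the nodes of the join-tree) and $D:=h(\mathrm{Occ}(t,\mathit{mkf}))$ (labels for the non-axis lines). Then I set $w_{Ax}:=\overline{h}(W(t,\varepsilon))$; for each $d\in D$, I pick a representative $\mathit{mkf}$-occurrence $v$ with $h(v)=d$ and define $w_d:=\overline{h}(W(t,s(v)))$; for each $q\in Q'$, I pick a representative $ext$-occurrence $u$ with $h(u)=q$ and let $m_q$ be the $D$-labelled multiset formed by applying $h$ to the maximal $\mathit{mkf}$-occurrences of $t/s(u)$ reached from $s(u)$ by descending only through $\uplus$. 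Independence from representatives is exactly the claim in Theorem~\ref{T:3.21}: $h(u)=h(u')$ implies $t/u=t/u'$. Each of $w_{Ax}$ and $w_d$ is regular because it is obtained from a regular subterm by substituting $h$, and each $m_q$ is regular because $D$ is finite. The scheme $\Delta:=(Q',D,w_{Ax},(m_q)_q,(w_d)_d)$ then describes $J$ in the sense of Definition~\ref{D:3.23}(b), with runs $r$ and $\widetilde{r}$ given by the restrictions of $h$ to $ext$- and $\mathit{mkf}$-occurrences.

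Step (2)$\Rightarrow$(3) is Proposition~\ref{P:3.25}. For (3)$\Rightarrow$(1), by Definition~\ref{D:3.28} the mapping $\alpha\colon\lfloor t\rfloor\mapsto(N,\leq)=\mathit{fgs}(val(t))$ is an MS-transduction: the domain is cut out by $lab_{ext}$, and $\leq$ is first-order in $\leq_t$, the branch predicates, and the equivalence $\approx$ (itself first-order). If $J$ is the unique countable model (up to isomorphism) of an MS$_{\mathit{fin}}$-sentence $\beta$, Theorem~\ref{T:2.1} produces $\beta^{\mathcal{D}}\in MS_{\mathit{fin}}$ on the term signature such that, for $t$ of sort $\boldsymbol{t}$, $\alpha(\lfloor t\rfloor)\simeq J$ iff $\lfloor t\rfloor\models\beta^{\mathcal{D}}$. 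An analog of Proposition~\ref{P:3.19} for join-trees (every SJ-tree is the value of some term of sort $\boldsymbol{t}$) ensures that the set of such $t$ is non-empty, so Rabin's theorem will deliver a regular witness once $\beta^{\mathcal{D}}$ is turned into a plain MS sentence.

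The main obstacle is precisely this last conversion: Rabin's theorem applies to MS, not MS$_{\mathit{fin}}$. I dispose of it by the following K\"onig-lemma observation on term trees of bounded branching (bounded by $\rho(F')$): a set $X\subseteq Pos(t)$ is infinite if and only if its ancestor-closure $\{y:\exists x\in X,\ y\geq_t x\}$ (first-order definable in $\lfloor t\rfloor$) contains an infinite descending chain starting from the root $\varepsilon$, which is MS-expressible as the existence of a $\leq_t$-linearly ordered subset containing $\varepsilon$ and having no $\leq_t$-minimum. Hence $\mathit{Fin}(X)$ is MS-definable on $\lfloor t\rfloor$, $\beta^{\mathcal{D}}$ is equivalent on term structures to a plain MS sentence, and Rabin's theorem yields a regular term $t$ with $\alpha(\lfloor t\rfloor)\simeq J$, completing the cycle.
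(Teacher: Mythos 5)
Your proof follows the paper's argument essentially step for step: (1)$\Rightarrow$(2) by adapting the scheme-extraction of Theorem~\ref{T:3.21} to the two-sorted signature $F'$ (with $D$ coming from the $\mathit{mkf}$-occurrences), (2)$\Rightarrow$(3) by Proposition~\ref{P:3.25}, and (3)$\Rightarrow$(1) via the MS-transduction $\alpha$, backwards translation (Theorem~\ref{T:2.1}), and Rabin's theorem. The only point of divergence is how the finiteness predicate is eliminated on term structures so that Rabin's theorem applies: the paper invokes the MS-definable linear order on $\lfloor t\rfloor$ together with the Section~2 formula for finiteness of linearly ordered sets, whereas you give a direct K\"onig's-lemma characterization of infinite subsets via infinite branches of the ancestor-closure --- both are correct and interchangeable here.
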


\begin{proof}
  \begin{enumerate}
 \item $\!\!\Longrightarrow$(2). Similar to that of Theorem \ref{T:3.21}.

\item $\!\!\Longrightarrow(3)$ By Proposition \ref{P:3.25}.

\item $\!\!\Longrightarrow $(1) As in the proof of Theorem \ref{T:3.21}, the mapping 
$\alpha $ that transforms the relational structure $\left\lfloor
t\right\rfloor $ for $t$ in $T^{\infty }(F^{\prime })_{\boldsymbol{t}}$ (the
set of terms in $T^{\infty }(F^{\prime })$ of sort $\boldsymbol{t}$) into
the join-tree $J=(N,\leq )=\mathit{fgs}(val(t))$ is an MS-transduction. Let $%
J=(N,\leq )$ be an MS$_{\mathit{fin}}$-definable join-tree. It is, up to
isomorphism, the unique model of an MS$_{\mathit{fin}}$ sentence $\beta $.
The set $L$ of terms $t$ in $T^{\infty }(F^{\prime })_{\boldsymbol{t}}$
such that $\alpha (\left\lfloor t\right\rfloor )\models \beta $ is thus MS$_{%
\mathit{fin}}$-definable. However, since the relational structures $%
\left\lfloor t\right\rfloor $ have MS definable linear orderings, $L$ is
also MS definable (see Section 2), hence, it contains a regular term. This
term denotes $J$, hence $J$ is regular. 
\qedhere
  \end{enumerate}
\end{proof}
\noindent
The same proof as for Corollary \ref{C:3.22} yields:

\begin{cor}\label{C:3.31} The isomorphism problem for regular join-trees is
decidable.
\end{cor}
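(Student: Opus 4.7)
The plan is to mimic the proof of Corollary \ref{C:3.22}, with the extra care required because join-trees need MS$_{\mathit{fin}}$ (not just MS) for definability. First, I would invoke effectiveness of Theorem \ref{T:3.30}: the three equivalent specifications (a regular term of sort $\boldsymbol{t}$ in $T^\infty(F')$, a regular SJ-scheme, or an MS$_{\mathit{fin}}$ sentence having $J$ as unique model up to isomorphism) can be algorithmically converted into one another. So given two regular join-trees $J_{1}$ and $J_{2}$, I may assume $J_{1}$ is presented by a regular term $t\in T^\infty(F')_{\boldsymbol{t}}$ and $J_{2}$ by an MS$_{\mathit{fin}}$ sentence $\beta$.

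Next, recall from the proof of (3)$\Longrightarrow$(1) of Theorem \ref{T:3.30} the MS-transduction $\alpha$ that maps $\left\lfloor t\right\rfloor$ to the join-tree $\mathit{fgs}(val(t))$, together with its definition scheme $\mathcal{D}$. Then $J_{1}\simeq J_{2}$ if and only if $\alpha(\left\lfloor t\right\rfloor)\models\beta$. Applying the Backwards Translation Theorem \ref{T:2.1} to $\mathcal{D}$ and $\beta$, this is equivalent to $\left\lfloor t\right\rfloor \models \beta^{\mathcal{D}}$ where $\beta^{\mathcal{D}}\in MS_{\mathit{fin}}(\mathcal{R},\emptyset)$ for the signature $\mathcal{R}$ of term structures.

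The last step is to eliminate the finiteness predicate. Since $\left\lfloor t\right\rfloor$ carries its canonical tree order $\leq_{t}$ and, as noted in Section 2, has an MS-definable linear order $\sqsubseteq$ on its domain, every subset $X$ of positions is linearly ordered by the restriction of $\sqsubseteq$, so the formula of Section 2 item (b) expresses $Fin(X)$ by an MS formula. Replacing each occurrence of $Fin$ in $\beta^{\mathcal{D}}$ by this MS formula yields a plain MS sentence $\beta'$ with $\left\lfloor t\right\rfloor \models \beta^{\mathcal{D}} \iff \left\lfloor t\right\rfloor \models \beta'$. By Rabin's theorem \cite{Tho90,Rab}, testing $\left\lfloor t\right\rfloor \models \beta'$ for a regular term $t$ is decidable, which settles the corollary.

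The main obstacle, as in Corollary \ref{C:3.22}, is not any single step but making sure the finiteness predicate can indeed be discharged: here it is safe because we apply $\beta^{\mathcal{D}}$ only to term structures $\left\lfloor t\right\rfloor$, which always admit the MS-definable linear order on positions, so the standard expressibility of finiteness for linearly-ordered sets applies.
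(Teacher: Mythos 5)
Your proposal is correct and follows essentially the same route as the paper, which proves Corollary \ref{C:3.31} by repeating the argument of Corollary \ref{C:3.22}: convert one specification to a regular term $t$ and the other to a sentence $\beta$, reduce $J_1\simeq J_2$ to $\left\lfloor t\right\rfloor\models\beta^{\mathcal{D}}$ via Theorem \ref{T:2.1}, and decide this by Rabin's theorem. Your explicit discharge of the finiteness predicate using the MS-definable linear order on $\left\lfloor t\right\rfloor$ is exactly the device already used in the proof of (3)$\Longrightarrow$(1) of Theorem \ref{T:3.30}, so it is a welcome clarification rather than a deviation.
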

The rooted trees of unbounded degree, without order on the sets of sons of
their nodes are the join-trees defined by the terms in $T^{\infty}(F^{%
\prime }-\{\bullet\})_{\boldsymbol{t}}$. Theorem~\ref{T:3.30} and Corollary
\ref{C:3.31} hold for them.

\section{Ordered join-trees}

\begin{defi}[Ordered join-trees and join-hedges]\label{D:4.1}

Let $(N,\leq )$ be a join-forest. A direction relative to a node $x$ is a
maximal subset $C$ of $]-\infty ,x[$ such that $y\sqcup z<x$ for all $%
y,z\in C$ (cf.~Definition \ref{D:3.2}). The set of directions relative to $x$ is
denoted by $Dir(x)$. The notation $x\perp y$ means that $x$ and $y$ are
incomparable with respect to $\leq $, so that $x<x\sqcup y$ and $y<x\sqcup y$
if $x\perp y$ and $x\sqcup y$ is defined.

\bigskip

\begin{enumerate}[label=(\alph*)]
\item We say that a join-tree $J=(N,\leq)$ is \emph{ordered} (is an \emph{%
OJ-tree}) if each set $Dir(x)$ is equipped with a linear order $%
\sqsubseteq_{x}$. (In this way, we generalize the notion of an ordered tree,
cf.~Section 1.) From these orders, we define a single linear order $%
\sqsubseteq$ on $N$ as follows:
\begin{align*}
x\sqsubseteq y\text{ if and only if }&x\leq y\text{ or, }x\perp y\text{  and }\delta
  \sqsubset_{x\sqcup y}\delta^{\prime}
                \\
  &\text{ where }\delta,\delta^{\prime}\in
Dir(x\sqcup y), x\in\delta\text{ and }y\in\delta^{\prime}.
\end{align*}

\item The linear order $\sqsubseteq$ satisfies the following properties, for
all $x,y,x^{\prime},y^{\prime}$:

\begin{enumerate}[label={(\roman*)}]
\item $x\leq y$ implies $x\sqsubseteq y$,

\item if $x\leq y$, $x^{\prime}\leq y^{\prime}$ and $y\perp
y^{\prime}$, then $x\sqsubset x^{\prime}$ if and only if $y\sqsubset
y^{\prime}$.
\end{enumerate}

\begin{claim} If $J=(N,\leq)$ is a join-tree and $\sqsubseteq$\
is a linear order on $N$ satisfying conditions (i) and (ii), then $J$ is
ordered by the family of orders $(\sqsubseteq_{x})_{x\in N}$ such that, for
all $\delta,\delta^{\prime}$ in $Dir(x)$, we have $\delta\sqsubseteq_{x}%
\delta^{\prime}$ if and only if $\delta=\delta^{\prime}$ or $y\sqsubset
y^{\prime}$ for some $y\in\delta$ and $y^{\prime}\in\delta^{\prime}$ (if
and only if $\delta=\delta^{\prime}$ or $y\sqsubset y^{\prime}$ for all $%
y\in\delta$ and $y^{\prime}\in\delta^{\prime}).$
\end{claim}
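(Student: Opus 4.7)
The plan is to check that the family $(\sqsubseteq_x)_{x\in N}$ is well-defined and that each $\sqsubseteq_x$ is indeed a linear order on $Dir(x)$; that the global order produced by Definition~\ref{D:4.1}(a) from this family coincides with the given $\sqsubseteq$ will then be essentially immediate.

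First I would recall that directions relative to $x$ are the equivalence classes of the relation $\sim_x$ on $\mathopen]-\infty,x\mathclose[$ defined by $y\sim_x z\Leftrightarrow y\sqcup z<x$. Transitivity of $\sim_x$ follows from the join-tree property that $[z,+\infty\mathclose[$ is linearly ordered: if $y\sqcup z<x$ and $z\sqcup u<x$, these two elements are comparable, and the larger of them is an upper bound of $\{y,z,u\}$ strictly below $x$, so $y\sqcup u<x$. In particular every direction $\delta$ is closed under binary joins of its members (the result lies in $\mathopen]-\infty,x\mathclose[$ and is still $\sim_x$-equivalent to them).

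The crux is the well-definedness of $\sqsubseteq_x$, \emph{i.e.}\ that the ``some'' and ``all'' versions in the statement agree. Fix distinct directions $\delta,\delta'\in Dir(x)$ and arbitrary $y,y_1\in\delta$, $y',y_1'\in\delta'$. Set $z:=y\sqcup y_1\in\delta$ and $z':=y'\sqcup y_1'\in\delta'$; since $\delta\neq\delta'$ one has $z\perp z'$ (otherwise $z\leq z'$ would force $z\in\delta'$). Applying condition~(ii) twice, once to $y\leq z$, $y'\leq z'$ and once to $y_1\leq z$, $y_1'\leq z'$, yields
\[
y\sqsubset y'\ \Longleftrightarrow\ z\sqsubset z'\ \Longleftrightarrow\ y_1\sqsubset y_1',
\]
which is the desired equivalence.

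The remaining verifications are bookkeeping. Irreflexivity of $\sqsubset_x$ holds by the clause $\delta\neq\delta'$; totality on $Dir(x)$ follows from totality of $\sqsubseteq$ applied to any pair of representatives (if they are equal, the two directions coincide, a contradiction); antisymmetry and transitivity transfer from $\sqsubseteq$ via representatives, using the ``all'' version just established to rule out the degenerate case (e.g.\ in transitivity, $\delta=\delta''$ together with $\delta\sqsubset_x\delta'\sqsubset_x\delta$ would give, for a common witness $y''\in\delta$, both $y''\sqsubset y'$ and $y'\sqsubset y''$). Finally, the global linear order induced by the family $(\sqsubseteq_x)_{x\in N}$ coincides with $\sqsubseteq$: on comparable pairs this is condition~(i), and on incomparable $x\perp y$ with join $m=x\sqcup y$, the elements $x,y$ lie in distinct directions of $m$ and the defining clause for $\sqsubset_m$ matches $\sqsubseteq$ by construction.

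The main obstacle is the well-definedness step above; once condition~(ii) is exploited with the pairwise joins $z,z'$ in place of arbitrary representatives, the rest of the argument is routine.
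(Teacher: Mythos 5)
Your proof is correct and follows essentially the same route as the paper's own sketch: the key step in both is to pass to the joins $z=y\sqcup y_1$ and $z'=y'\sqcup y_1'$, observe $z\perp z'$ since the directions are distinct, and apply condition (ii) twice to transfer $y\sqsubset y'$ to $y_1\sqsubset y_1'$. The additional bookkeeping you supply (transitivity of $\sim_x$, linearity of each $\sqsubseteq_x$, recovery of $\sqsubseteq$ from the family) is exactly what the paper leaves implicit.
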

\begin{proof}[Proof Sketch] Consider different directions $\delta,\delta^{\prime}%
\in Dir(x)$ such that $y\sqsubset y^{\prime}$ for some $y\in\delta$ and $%
y^{\prime}\in\delta^{\prime}$. We have also $y_{1}\sqsubset y_{1}^{\prime}$
for any $y_{1}\in\delta$ and $y_{1}^{\prime}\in\delta^{\prime}$ because $%
(y\sqcup y_{1})<x,(y^{\prime}\sqcup y_{1}^{\prime})<x$ and $(y\sqcup
y_{1})\perp(y^{\prime}\sqcup y_{1}^{\prime})$, hence, Condition (ii) implies
that $y\sqcup y_{1}\sqsubset y^{\prime}\sqcup y_{1}^{\prime}$ and $%
y_{1}\sqsubset y_{1}^{\prime}$.

Hence, each relation $\sqsubseteq_{x}$ is a linear order on $Dir(x)$. It
is clear that $\sqsubseteq$ is derived from the relations $\sqsubseteq_{x}$
by (a).
\end{proof}

It follows that an ordered join-tree can be equivalently defined as a triple 
$(N,\leq,\sqsubseteq)$ such that $(N,\leq)$ is a join-tree and $\sqsubseteq$
is a linear order that satisfies Conditions (i) and (ii). These conditions
are first-order expressible.

\item We define a \emph{join-hedge} as a triple $H=(N,\leq,\sqsubseteq)$ such
that $(N,\leq)$ is a join-forest and $\sqsubseteq$ is a linear order that
satisfies Conditions (i) and (ii). Let $J_{s}$, for $s\in S,$ be the
join-trees composing $(N,\leq)$. Each of them is ordered by $\sqsubseteq$
according to the above claim, and the index set $S$ is linearly ordered by $%
\sqsubseteq_{S}$ such that $s\sqsubset_{S}s^{\prime}$ if and only if $s\neq
s^{\prime}$ and $x\sqsubset y$ for all nodes $x$ of $J_{s}$ and $y$ of $%
J_{s^{\prime}}$. Hence $H$ is also a simple arrangement of pairwise disjoint
join-trees.
\end{enumerate}
\end{defi}

\begin{defi}[Structured join-hedges and structured ordered join-trees]\label{D:4.2}
  \leavevmode
\begin{enumerate}[label=(\alph*)]

\item A \emph{structured join-hedge}, an \emph{SJ-hedge} in short, is a
4-tuple $J=(N,\leq ,\sqsubseteq ,$ $\mathcal{U})$ such that $(N,\leq
,\sqsubseteq )$ is a join-hedge and $\mathcal{U}$ is a structuring of the
join-forest $(N,\leq )$.

A structured ordered join-tree could be defined in the same way, as an
OJ-tree $(N,\leq ,\sqsubseteq )$ equipped with a structuring $\mathcal{U}$.
However, we will need a refinement in order to define the operations that
construct ordered join-trees and join-hedges (cf. Definition  \ref{D:4.8} 
and Remark \ref{R:4.12} below).

\item Let $J$ be an OJ-tree $(N,\leq,\sqsubseteq)$ and $\mathcal{U}$ be a
structuring of $(N,\leq).$ For each node $x$, the set $Dir(x)$ of its
directions consists of the following sets:

\begin{itemize}
\item the sets $\mathord\downarrow (U)$ for each line $U\in \mathcal{U}^{x}$ (we recall
that $\mathord\downarrow (U):=\{y\mid y\leq z$ for some $z\in U\}$),

\item the set $\mathord\downarrow (U_{-}(x))$ (cf.~Definition 3.3) if $%
U_{-}(x)$ is not empty; in this case we call it the \emph{central direction}
of $x$.
\end{itemize}

If $x$ is the smallest element of $U(x)$, it has no central direction but $%
\mathcal{U}^{x}$ may be nonempty. It is clear that $\mathord\downarrow (U)\cap
\mathord\downarrow (U^{\prime })=\emptyset $ if $U$ and $U^{\prime }$ are distinct
lines in $\mathcal{U}^{x}$. We get a linear order on $\mathcal{U}^{x}$
based on that on directions, that we also denote by $\sqsubseteq _{x}$: we
have $U\sqsubset _{x}U^{\prime }$ if and only if $y\sqsubset y^{\prime }$
for all $y\in U$ and $y^{\prime }\in U^{\prime }.$

\item A \emph{structured ordered join-tree} (an \emph{SOJ-tree}) is a tuple $%
(N,\leq ,\sqsubseteq ,A,\mathcal{U}^{-},$ $\mathcal{U}^{+})$ such that $%
(N,\leq ,\sqsubseteq )$ is an OJ-tree and $\mathcal{U}:=\{A\}\uplus \mathcal{%
U}^{-}\uplus \mathcal{U}^{+}$ is a structuring of $(N,\leq )$ with axis $A,$
such that, for each node $x$:
if $U\in\mathcal{U}^{x}\cap\mathcal{U}^{-}$ and $U^{\prime}\in\mathcal{U}%
^{x}\cap\mathcal{U}^{+}$, then $U\sqsubset_{x}U^{\prime}$ and furthermore,
if $x$ has a central direction $\delta$, then $U\sqsubset_{x}\delta\sqsubset
_{x}U^{\prime}$.
\end{enumerate}
We define then $Dir^{-}(x)$ as the set of directions $\mathord\downarrow(U)$ for $%
U\in\mathcal{U}^{x}\cap\mathcal{U}^{-}$ and, similarly, $Dir^{+}(x)$ with $%
U\in\mathcal{U}^{x}\cap\mathcal{U}^{+}.$

Let $U\in \mathcal{U}$ and $x\notin U$ be such that $[x,+\infty \mathclose\lbrack \cap
U\neq \emptyset $. By Condition (2) of Definition \ref{D:3.3}(a), there is a node $y_{i}$ in $U$ for some $i>0$ (we use the notation of that definition). We
say that $x$ is \emph{to the left} (resp.~\emph{to the right}) of $U$ if,
for some direction $\delta $ relative to $y_{i}$, we have $x\in \delta \in
Dir^{-}(y_{i})$ (resp. $x\in \delta \in Dir^{+}(y_{i}))$.

\end{defi}
\noindent
As in Propositions \ref{P:3.5} and \ref{P:3.9}, we have:

\begin{prop}\label{P:4.3} Every join-hedge and every ordered join-tree has
a structuring.
\end{prop}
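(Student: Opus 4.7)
The join-hedge case is essentially immediate from Proposition~\ref{P:3.5}. Given $H = (N, \leq, \sqsubseteq)$, the underlying join-forest $(N, \leq)$ admits a structuring $\mathcal{U}$ by that proposition, and the tuple $(N, \leq, \sqsubseteq, \mathcal{U})$ satisfies Definition~\ref{D:4.2}(a) without further argument, since an SJ-hedge is defined to impose no compatibility between $\sqsubseteq$ and $\mathcal{U}$.

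For an OJ-tree $(N, \leq, \sqsubseteq)$ the substance of the proposition lies in producing the refined SOJ-tree structure of Definition~\ref{D:4.2}(c). I would first invoke Proposition~\ref{P:3.5} on $(N, \leq)$ to obtain a structuring $\mathcal{U}$ with axis $A$, and then split $\mathcal{U} \setminus \{A\} = \mathcal{U}^{-} \uplus \mathcal{U}^{+}$ locally as follows. For each $U \in \mathcal{U} \setminus \{A\}$ with top $x := \widehat{U}$: if $U_{-}(x) \neq \emptyset$, so that $x$ has central direction $\delta_x := \mathord\downarrow(U_{-}(x))$, place $U$ in $\mathcal{U}^{-}$ when $\mathord\downarrow(U) \sqsubset_x \delta_x$ and in $\mathcal{U}^{+}$ otherwise; if $U_{-}(x) = \emptyset$, place $U$ in $\mathcal{U}^{-}$ by default (there is no central direction to straddle). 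The comparison is well-defined because, by the Claim in Definition~\ref{D:4.1}(b), the relation $\delta \sqsubset_x \delta'$ on $Dir(x)$ does not depend on the choice of witnesses in the directions.

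It then remains to verify the condition of Definition~\ref{D:4.2}(c): for each node $x$, every $U \in \mathcal{U}^x \cap \mathcal{U}^{-}$ precedes every $U' \in \mathcal{U}^x \cap \mathcal{U}^{+}$ in $\sqsubseteq_x$, with the central direction $\delta_x$ strictly in between when it exists. This is exactly what our local definition stipulates: when $\delta_x$ exists we have $U \sqsubset_x \delta_x \sqsubset_x U'$ by construction, and transitivity of $\sqsubseteq_x$ yields $U \sqsubset_x U'$; when $\delta_x$ does not exist, the default choice makes $\mathcal{U}^x \cap \mathcal{U}^{+} = \emptyset$ and the condition is vacuous. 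The main (mild) obstacle is recognizing that no backtracking on $\mathcal{U}$ itself is ever required: once any structuring of the underlying join-forest has been fixed by Proposition~\ref{P:3.5}, the order $\sqsubseteq$ alone dictates a consistent splitting into $\mathcal{U}^{-}$ and $\mathcal{U}^{+}$, so the two halves of the proof are essentially independent.
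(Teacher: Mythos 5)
Your proposal is correct and follows essentially the same route as the paper: take any structuring of the underlying join-forest (Proposition~\ref{P:3.5}) and, for the OJ-tree case, split each $\mathcal{U}^{x}$ according to whether $\mathord\downarrow(U)$ lies before or after the central direction in $\sqsubseteq_{x}$. The only (immaterial) difference is that when $x$ has no central direction you default all of $\mathcal{U}^{x}$ into $\mathcal{U}^{-}$ whereas the paper defaults it into $\mathcal{U}^{+}$; the paper's own footnote notes that any choice compatible with $\sqsubseteq_{x}$ works, so this is fine.
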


\begin{proof} For a join-hedge $(N,\leq,\sqsubseteq)$, we take any
structuring $\mathcal{U}$ of the join-forest $(N,\leq)$.
Let $(N,\leq ,\sqsubseteq )$ be an OJ-tree and $\mathcal{U}$ be any
structuring of the join-tree $(N,\leq )$. Let $A$ be its axis. In order to
define $\mathcal{U}^{-}$ and $\mathcal{U}^{+}$, we need only partition each
set $\mathcal{U}^{x}$ into two sets $\mathcal{U}^{x}\cap \mathcal{U}^{-}$
and $\mathcal{U}^{x}\cap \mathcal{U}^{+}.$
If $x$ has a central direction $\delta $, we let $\mathcal{U}^{x}\cap 
\mathcal{U}^{-}$ consist of the lines $U$ in $\mathcal{U}^{x}$ such that $%
\orddownarrow (U)\sqsubset _{x}\delta $, and $\mathcal{U}^{x}\cap \mathcal{U}%
^{+}$ consist of those such that $\delta \sqsubset _{x}\orddownarrow (U)$.
Otherwise, we let $\mathcal{U}^{+}$ contain\footnote{We might alternatively
partition $\mathcal{U}^{x}$ into any two sets $%
\mathcal{U}^{x}\cap \mathcal{U}^{-}$ and $\mathcal{U}^{x}\cap \mathcal{U}^{+}$
such that $\mathcal{U}^{x}\cap \mathcal{U}^{-} \sqsubset _{x}
\mathcal{U}^{x}\cap \mathcal{U}^{+}$.} $\mathcal{U}^{x}$ so that $\mathcal{U}%
^{x}\cap \mathcal{U}^{-}=\emptyset $. 
\end{proof}

We now establish the MS definability of these structurings. If $J=(N,\leq
,\sqsubseteq ,A,\mathcal{U}^{-},\mathcal{U}^{+})$  is an SOJ-tree, we
define $S(J)$ as the structure $(N,\leq ,\sqsubseteq
,A,N_{0}^{-},N_{0}^{+},N_{1}^{-},N_{1}^{+})$ such that $A$ is the axis, $%
N_{0}^{-}$ (resp.~$N_{0}^{+}$) is the union of the lines $U\in \mathcal{U}%
^{-}$ (resp.~$U\in \mathcal{U}^{+}$) of even depth and $N_{1}^{-}$ (resp.~$%
N_{1}^{+}$) is the union of the lines $U\in \mathcal{U}^{-}$ (resp.~$U\in 
\mathcal{U}^{+}$) of odd depth.

\begin{prop}\label{P:4.4}
\begin{enumerate}[label=(\arabic*)]
\item There is an MS\ formula $\varphi
(A,N_{0}^{-},N_{0}^{+},N_{1}^{-},N_{1}^{+})$ expressing that a structure
$S=(N,\leq ,\sqsubseteq ,A,N_{0}^{-},N_{0}^{+},N_{1}^{-},N_{1}^{+})$ is $S(J) $
for some SOJ-tree \\
$J=(N,\leq ,\sqsubseteq , A,\mathcal{U}^{-}, \mathcal{U}^{+}).$

\item There exists an MS\ formula $\theta
^{-}(u,U,N_{0}^{-},N_{0}^{+},N_{1}^{-},N_{1}^{+})$ expressing in a structure  \\
${(N,\leq ,\sqsubseteq ,A,N_{0}^{-},N_{0}^{+},N_{1}^{-},N_{1}^{+})} = S(N,\leq
,\sqsubseteq ,A,\mathcal{U}^{-},$  $\mathcal{U}^{+})$  that $U\in 
\mathcal{U}^{-}\wedge u=\widehat{U}$; similarly, there exists an MS\ formula 
$\theta ^{+}(u,U,N_{0}^{-},N_{0}^{+},N_{1}^{-},N_{1}^{+})$ expressing that $%
U\in \mathcal{U}^{+}\wedge u=\widehat{U}$.
\end{enumerate}
\end{prop}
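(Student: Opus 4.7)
The approach is to adapt the proof of Proposition~\ref{P:3.7} to the richer setting, reusing its MS formulas and adding conditions that capture the linear order $\sqsubseteq$ and the left/right splitting of the structuring.

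For (1), I would first set $N_{0}:=A\uplus N_{0}^{-}\uplus N_{0}^{+}$ and $N_{1}:=N_{1}^{-}\uplus N_{1}^{+}$, and assert via the formula $\varphi(N_{0},N_{1})$ of Proposition~\ref{P:3.7} that $(N,\leq,N_{0},N_{1})$ is $S(J')$ for some SJ-tree $J'=(N,\leq,\mathcal{U})$. Using the formula $\theta_{Ax}$ of that proposition, I would also require that $A$ is itself the axis of $\mathcal{U}$, namely the unique component of $N_{0}$ with no top. Each of $N_{0}^{-},N_{0}^{+},N_{1}^{-},N_{1}^{+}$ is then automatically a laminar union of lines of $\mathcal{U}$, so one recovers the partition $\mathcal{U}=\{A\}\uplus\mathcal{U}^{-}\uplus\mathcal{U}^{+}$, where $\mathcal{U}^{-}$ is the set of components of $N_{0}^{-}\cup N_{1}^{-}$ and $\mathcal{U}^{+}$ is the set of components of $N_{0}^{+}\cup N_{1}^{+}$. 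All these identifications use only the laminarity formula $\psi(U,X)$ from the proof of Proposition~\ref{P:3.7}.

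Next, I would conjoin the first-order conditions (i) and (ii) of Definition~\ref{D:4.1}(b), which, together with the fact that $\sqsubseteq$ is a linear order on $N$, ensure via the claim there that $(N,\leq,\sqsubseteq)$ is an OJ-tree. The final clause to add is the compatibility required by Definition~\ref{D:4.2}(c): using the formula $\theta(u,U,N_{0},N_{1})$ of Proposition~\ref{P:3.7} to identify each line with its top, I would express that for every node $x$ and every pair $U\in\mathcal{U}^{x}\cap\mathcal{U}^{-}$, $U'\in\mathcal{U}^{x}\cap\mathcal{U}^{+}$, all $y\in U$ and $y'\in U'$ satisfy $y\sqsubset y'$; and if $x$ admits a central direction $\delta=\orddownarrow(U_{-}(x))$ (an MS-expressible condition once $\mathcal{U}$ has been identified from $N_0, N_1$), then moreover $y\sqsubset z\sqsubset y'$ for all $z\in\delta$. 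The conjunction of these conditions yields the desired formula $\varphi$.

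For (2), the formulas $\theta^{-}$ and $\theta^{+}$ are obtained by relativizing $\theta$ of Proposition~\ref{P:3.7}: one additionally requires that $U$ be a component of the laminar set $N_{0}^{-}\cup N_{1}^{-}$ (respectively $N_{0}^{+}\cup N_{1}^{+}$), which is readily expressible using $\psi(U,X)$. The main technical point is the compatibility clause of Definition~\ref{D:4.2}(c), but once lines are MS-identified as components of the appropriate laminar sets and their tops become available through $\theta$, comparison of lines in $\sqsubseteq_{x}$ reduces to comparing arbitrary representatives in $\sqsubseteq$; the rest is straightforward bookkeeping on top of Proposition~\ref{P:3.7}.
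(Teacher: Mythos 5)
Your proposal is correct and is essentially the paper's own proof, which is stated only as ``Easy modification of the proof of Proposition~\ref{P:3.7}''; you supply exactly the intended modifications (reuse $\varphi(N_0,N_1)$ with $N_0:=A\uplus N_0^-\uplus N_0^+$, $N_1:=N_1^-\uplus N_1^+$, add the first-order OJ-tree conditions on $\sqsubseteq$ and the compatibility clause of Definition~\ref{D:4.2}(c), and relativize $\theta$ for part (2)). One small repair: recover $\mathcal{U}^-$ not as the components of $N_0^-\cup N_1^-$ but as the lines of $\mathcal{U}$ (already identified as components of $N_0$ and of $N_1$ separately) that are contained in $N_0^-\cup N_1^-$, since a line of even depth and one of odd depth, both in $\mathcal{U}^-$, can concatenate into a single maximal line of the union when the top of one is the minimum of the other --- the parity split exists precisely to prevent such merging.
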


\begin{proof} Easy modification of the proof of Proposition \ref{P:3.7}.
\end{proof}

\begin{defi}[Description schemes for SOJ-trees]\label{D:4.5}
  \begin{enumerate}[label=(\alph*)]

\item A \emph{description scheme for an SOJ-tree}, in short an \emph{%
SOJ-scheme,} is a 6-tuple $\Delta =(Q,D,w_{Ax},(w_{q}^{-})_{q\in
Q},(w_{q}^{+})_{q\in Q},(w_{d})_{d\in D})$ such that $Q,D$ are sets, called
respectively the set of \emph{states }and of \emph{directions}, $w_{Ax}\in 
\mathcal{A}(Q)$, $(w_{d})_{d\in D}$ is a family of arrangements over $Q$ and 
$(w_{q}^{-})_{q\in Q}$ and $(w_{q}^{+})_{q\in Q}$ are families of
arrangements over $D$. Without loss of generality, we will assume that the
domains of these arrangements are pairwise disjoint, and the same symbol $%
\preceq $ denotes their orders. Informally, $(w_{q}^{-})_{q\in Q}$ and $%
(w_{q}^{+})_{q\in Q}$ encodes the sets of lines, ordered by $\sqsubseteq
_{x} $ of the two sets $Dir^{-}(x)$ and $Dir^{+}(x)$ where $x$ is labelled
by $q$.

We say that $\Delta$ is \emph{regular} if $Q\cup D$ is finite and the
arrangements $w_{Ax}$, $w_{d}$, $w_{q}^{-}$ and $w_{q}^{+}$  are regular.

\item Let $J=(N,\leq,\sqsubseteq,A,\mathcal{U}^{-},\mathcal{U}^{+})$ be an
SOJ-tree. An SOJ-scheme $\Delta$ as in (a) \emph{describes} $J$ if there
exist mappings $r:N\rightarrow Q$ and $\widetilde{r}:\mathcal{U}^{-}\cup 
\mathcal{U}^{+}\rightarrow D$  such that:

\begin{enumerate}[label=(b.\arabic*)]
\item $(A,\leq,r)\simeq w_{Ax}$,

\item for each $x\in N$, the arrangement $(\mathcal{U}^{x}\cap%
\mathcal{U}^{-},\sqsubseteq_{x},\widetilde{r})$ over $D$ is isomorphic to $%
w_{r(x)}^{-}$,

\item for each $x\in N$, the arrangement $(\mathcal{U}^{x}\cap%
\mathcal{U}^{+},\sqsubseteq_{x},\widetilde{r})$ over $D$ is isomorphic to $%
w_{r(x)}^{+}$,

\item for each $U\in\mathcal{U}^{-}\cup\mathcal{U}^{+}$, the
arrangement $(U,\leq,r)$ over $Q$ is isomorphic to $w_{\widetilde{r}(U)}$.
\end{enumerate}

We also say that $\Delta $ \emph{describes} the OJ-tree $\mathit{fgs}%
(J):=(N,\leq ,\sqsubseteq )$ where $\mathit{fgs}$ \emph{forgets the
structuring}.

\end{enumerate}
\end{defi}

\begin{prop}\label{P:4.6}
  \begin{enumerate}
  \item Every SOJ-tree is described by some
SOJ-scheme.

\item Every SOJ-scheme describes an SOJ-tree that is unique up to isomorphism.
\end{enumerate}
\end{prop}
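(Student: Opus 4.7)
The plan is to proceed exactly as in Propositions \ref{P:3.11} and \ref{P:3.24}, adapting the two constructions to carry along the extra data of the linear order $\sqsubseteq$ and the partition $\mathcal{U} = \{A\} \uplus \mathcal{U}^- \uplus \mathcal{U}^+$. For part (1), given an SOJ-tree $J = (N, \leq, \sqsubseteq, A, \mathcal{U}^-, \mathcal{U}^+)$, I would take the standard scheme
\[
\Delta(J) := \bigl(N,\ \mathcal{U}^- \cup \mathcal{U}^+,\ (A, \leq),\ ((\mathcal{U}^x \cap \mathcal{U}^-, \sqsubseteq_x, Id))_{x \in N},\ ((\mathcal{U}^x \cap \mathcal{U}^+, \sqsubseteq_x, Id))_{x \in N},\ ((U, \leq))_{U \in \mathcal{U}^- \cup \mathcal{U}^+}\bigr),
\]
and verify that the identity mappings $r = Id_N$ and $\widetilde{r} = Id_{\mathcal{U}^- \cup \mathcal{U}^+}$ satisfy conditions (b.1)--(b.4) of Definition \ref{D:4.5} directly from the definitions.

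For part (2), given $\Delta = (Q, D, w_{Ax}, (w_q^-)_{q \in Q}, (w_q^+)_{q \in Q}, (w_d)_{d \in D})$ with $w_{Ax} = (V_{Ax}, \preceq, lab_{Ax})$, $w_d = (V_d, \preceq, lab_d)$, $w_q^\pm = (V_q^\pm, \preceq, lab_q^\pm)$, I construct $\mathit{Unf}(\Delta) = (N, \leq, \sqsubseteq, A, \mathcal{U}^-, \mathcal{U}^+)$ as in Proposition \ref{P:3.24}: $N$ is the set of sequences $(v_0, s_1, v_1, \ldots, s_k, v_k)$ where $v_0 \in V_{Ax}$, each $s_i$ lies in $V_{q_{i-1}}^- \uplus V_{q_{i-1}}^+$ (with $q_{i-1}$ determined by $lab_{d_{i-1}}(v_{i-1})$, or by $lab_{Ax}$ when $i=1$), $d_i = lab_{q_{i-1}}^{\epsilon_i}(s_i)$ with $\epsilon_i \in \{-,+\}$ the side on which $s_i$ lives, and $v_i \in V_{d_i}$. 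The partial order $\leq$ is defined exactly as in Proposition \ref{P:3.24} by prefix comparison with $\preceq$ on the last component; the axis $A$ is the set of singleton sequences, and a line $U(x) \in \mathcal{U}$ with $\widehat{U(x)} = (v_0,s_1,\ldots,v_{k-1})$ is placed in $\mathcal{U}^-$ if the last selector $s_k$ came from $V_{q_{k-1}}^-$, and in $\mathcal{U}^+$ otherwise.

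The only genuinely new point is the definition of $\sqsubseteq$ and the verification of Conditions (i)--(ii) of Definition \ref{D:4.1}(b). For two distinct nodes $x, y$ with $x \sqcup y = z$, let $j$ be the depth at which the sequences first differ after their common prefix; then $x$ and $y$ belong to distinct directions $\delta, \delta'$ relative to some ancestor $z'$ in $U(z)$, each direction being either a central one (continuation of the same line) or $\mathord{\downarrow}(U)$ for some $U \in \mathcal{U}^{z'} \cap (\mathcal{U}^- \cup \mathcal{U}^+)$. I declare $x \sqsubset y$ according to the ordering dictated by (a) the order $\preceq$ on $V_{d_{j-1}}$ if the divergence is within a single line, or (b) the ordering derived from $w_{r(z')}^-$ concatenated with the central direction followed by $w_{r(z')}^+$ when the divergence occurs in $\mathcal{U}^{z'}$. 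This yields a linear order satisfying (i) (inherited from $\leq$) and (ii) (since the comparison depends only on the pair of directions at the join, not on the specific representatives). One then checks (b.1)--(b.4) exactly as in the proof of Proposition \ref{P:3.24}, using runs $r(x) := lab_{d_k}(v_k)$ (with $r(x) := lab_{Ax}(v_0)$ at depth $0$) and $\widetilde{r}(U(x)) := d_k$.

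Uniqueness up to isomorphism follows, as in Propositions \ref{P:3.11} and \ref{P:3.24}, by induction on depth: given two SOJ-trees $J, J'$ both described by $\Delta$ via runs $(r,\widetilde{r})$ and $(r', \widetilde{r}')$, one builds compatible order-preserving bijections $h_k : N_k \to N_k'$ on the subsets of nodes of depth $\leq k$ that commute with $r, r'$. The axis case $k = 0$ comes from (b.1); the inductive step uses (b.2), (b.3) to match the sets $\mathcal{U}^x \cap \mathcal{U}^-$ and $\mathcal{U}^x \cap \mathcal{U}^+$ at corresponding nodes, and (b.4) to extend $h_{k-1}$ along each newly attached line via the isomorphism of the corresponding $w_d$'s. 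The main obstacle --- and the only difference from the earlier proofs --- is the bookkeeping that ensures the constructed bijections respect the linear order $\sqsubseteq$, which reduces to observing that $\sqsubseteq$ is entirely determined by the local $\sqsubseteq_x$ on $\mathcal{U}^x$ augmented with the central direction, and these are transported faithfully by the $w_q^-$ and $w_q^+$ components of $\Delta$.
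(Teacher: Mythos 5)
Your proof follows the paper's argument essentially verbatim: the same standard scheme for part (1), the same sequence-based unfolding for part (2) with each line assigned to $\mathcal{U}^{-}$ or $\mathcal{U}^{+}$ according to the side from which its selector $s_k$ was drawn, the order $\sqsubseteq$ on incomparable nodes determined by the left-lines/central/right-lines ordering of the directions at (an ancestor of) the join --- exactly the paper's cases e.1)--e.3) --- and uniqueness by induction on depth as in Propositions \ref{P:3.11} and \ref{P:3.24}. The only quibble is your clause (a) (comparing by $\preceq$ on $V_{d_{j-1}}$ when the sequences first differ at a $v$-component): two incomparable nodes always lie in distinct directions of their join, at most one of which is central, so clause (b) alone is the complete and correct rule, and a literal reading of (a) would give the wrong answer when the divergence hangs off into a line of $\mathcal{U}^{-}$; this is a harmless redundancy provided (b) is taken as governing.
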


\begin{proof}  
  \begin{enumerate}
\item The proof is similar to those of Propositions \ref{P:3.11} and  \ref{P:3.24}.

\item Let $\Delta =(Q,D,w_{Ax},(w_{q}^{-})_{q\in Q},(w_{q}^{+})_{q\in
Q},(w_{d})_{d\in D})$ be an SOJ-scheme, defined with arrangements $%
w_{Ax}=(V_{Ax},\preceq ,lab_{Ax}),$ $w_{d}=(V_{d},\preceq ,lab_{d})$, $%
w_{q}^{-}=(W_{q}^{-},\preceq ,lab_{q})$ and $w_{q}^{+}=(W_{q}^{+},\preceq
,lab_{q})$ such that the sets $V_{Ax},V_{d},W_{q}^{-}$ and $W_{q}^{+}$
are pairwise disjoint. Furthermore, we extend $\prec $ by letting $s\prec
s^{\prime }$ for all $s\in W_{q}^{-},s^{\prime }\in W_{q}^{+}$ and $q\in Q$.
We construct $J=\mathit{Unf}(\Delta )=(N,\leq ,\sqsubseteq ,A,\mathcal{U}%
^{-},\mathcal{U}^{+})$ as follows. Clauses a) to d) are essentially as in
Proposition \ref{P:3.24}.

\begin{enumerate}[label=\alph*)]
\item $N$ is the set of finite nonempty sequences $%
(v_{0},s_{1},v_{1},s_{2},\dots,s_{k},v_{k})$ such that:

\begin{quote}
$v_{0}\in V_{Ax},v_{i}\in V_{d_{i}}$ and $s_{i}\in W_{q_{i-1}}^{-}\cup
W_{q_{i-1}}^{+}$ for $1\leq i\leq k$, where

\noindent $q_{0}=lab_{Ax}(v_{0})$, $d_{1}=lab_{q_{0}}(s_{1})$, $%
q_{1}=lab_{d_{1}}(v_{1})$, $d_{2}=lab_{q_{1}}(s_{2})$, \dots, $%
q_{i}=lab_{d_{i}}(v_{i})$, $d_{i+1}=lab_{q_{i}}(s_{i+1})$ for $1\leq i\leq
k-1$.
\end{quote}

\item $(v_{0},s_{1},v_{1},\dots,s_{k},v_{k})\leq(v_{0}^{\prime},s_{1}^{\prime
},v_{1}^{\prime},\dots,s_{j}^{\prime},v_{j}^{\prime})$ if and only if:
\[
\text{$k\geq j$, $(v_{0},s_{1},v_{1},\dots,s_{j})=(v_{0}^{\prime
},s_{1}^{\prime },v_{1}^{\prime },\dots,s_{j}^{\prime })$ and $v_{j}\preceq
v_{j}^{\prime }$ ($v_{j},v_{j}^{\prime }\in V_{d_{j}}).$}
\]

\item The axis $A$ is the set of one-element sequences $(v)$ for $v\in V_{Ax}$.

\item If $x=(v_{0},s_{1},v_{1},\dots,s_{k},v_{k}),$ the line $U(x)$ is the set of
sequences $(v_{0},s_{1},$ $v_{1},s_{2},\dots,s_{k},$ $v)$  for $v\in V_{d_{k}}$%
; it belongs to $\mathcal{U}^{-}$ if $s_{k}\in W_{q_{k-1}}^{-}$ and to $%
\mathcal{U}^{+}$ if $s_{k}\in W_{q_{k-1}}^{+}$; in both cases, $\widehat {%
U(x)}=(v_{0},s_{1},v_{1},\dots,s_{k-1},v_{k-1})$.

\item $x=(v_{0},s_{1},v_{1},\dots,s_{k},v_{k})\sqsubseteq y=(v_{0}^{\prime
},s_{1}^{\prime },v_{1}^{\prime },\dots,s_{j}^{\prime },v_{j}^{\prime })$ if
and only if

either $x\leq y$ or, for some $\ell<\{j,k\},$ we have

\begin{enumerate}[label=e.\arabic*)]
\item $(v_{0},s_{1},v_{1},\dots,v_{\ell})=(v_{0}^{\prime},s_{1}^{\prime
},v_{1}^{\prime},\dots,v_{\ell}^{\prime})$ and $s_{\ell+1}\prec s_{\ell
+1}^{\prime},$

\item or $(v_{0},s_{1},v_{1},\dots,s_{\ell})=(v_{0}^{\prime},s_{1}^{%
\prime},v_{1}^{\prime},\dots,s_{\ell}^{\prime})$, $s_{\ell+1}\in
W_{q_{\ell}}^{-}$ and $v_{\ell}^{\prime}\prec v_{\ell}$,

\item or $(v_{0},s_{1},v_{1},\dots,s_{\ell})=(v_{0}^{\prime},s_{1}^{%
\prime},v_{1}^{\prime},\dots,s_{\ell}^{\prime})$, $s_{\ell+1}^{\prime}\in
W_{q_{\ell}}^{+}$ and $v_{\ell}\prec v_{\ell}^{\prime}.$
\end{enumerate}

In Case e.1), $x$ and $y$ are in different directions of $%
z:=(v_{0},s_{1},v_{1},\dots,v_{\ell})$ that are not its central direction; in
Case e.2), $x$ is to the left of the central direction $\delta$ of $z$ and $%
y\leq u$ where $u:=(v_{0},s_{1},v_{1},\dots,v_{\ell}^{\prime})$ is here below $%
z$ on $\delta$; in Case e.3), $y$ is to the right of the central direction $%
\delta^{\prime}$ of $u$ and $x\leq z$ where $z$ is below $u$ on $%
\delta^{\prime}.$

\end{enumerate}
In order to prove that $\Delta$ describes $J,$ we define $r:N\rightarrow Q$
and $\widetilde{r}:\mathcal{U}^{-}\cup\mathcal{U}^{+}\rightarrow D$ as
follows:

\begin{itemize}
\item if $x\in A$, then $x=(v)$ for some $v\in V_{Ax}$ and $r(x):=lab_{Ax}(v);$

\item if $x\in N$ has depth $k\geq1$, then $%
x=(v_{0},s_{1},v_{1},\dots,s_{k},v_{k})$ for some $v_{0},s_{1},\dots,s_{k},v_{k}$
and $r(x):=lab_{d_{k}}(v_{k})$;

\item if $U\in\mathcal{U}^{-}\cup\mathcal{U}^{+},$ then $U=U(x)$ for
some $x=(v_{0},s_{1},v_{1},\dots,s_{k},v_{k})$, and $\widetilde{r}(U):=d_{k}$.
\end{itemize}

In the last case, as $d_{k}=lab_{q_{k-1}}(s_{k})$, it depends only on $s_{k}$
and $v_{k-1}$ (via $q_{k-1}$). It follows that $\widetilde{r}(U)$ is the
same if we consider $U$ as $U(y)$ with $y=(v_{0},s_{1},v_{1},\dots,s_{k},v)$ \
hence, is well-defined.

We check the four conditions of Definition \ref{D:4.5}(b). We have $%
(A,\leq,r)\simeq w_{Ax}$, hence (b.1) holds. For (b.2) and (b.3), we
consider $x=(v_{0},s_{1},v_{1},\dots,s_{k},v_{k})\in N$. The sets $U$ in $%
\mathcal{U}^{x} $ are those of the form $%
\{(v_{0},s_{1},v_{1},\dots,s_{k},v_{k},s,v)\mid v\in V_{d_{k+1}}\}$ for all $%
s\in W_{q_{k}}^{-}\cup W_{q_{k}}^{+}$ where $q_{k}=lab_{d_{k}}(v_{k})=r(x)$,
hence (b.2) and (b.3) hold.

For checking (b.4), we let $U=U(x)$ for some $%
x=(v_{0},s_{1},v_{1},\dots,s_{k},v_{k}),k>0$; then $U$ is the set of
sequences $(v_{0},s_{1},v_{1},s_{2},\dots,s_{k},v)$ such that $v\in V_{d_{k}}$%
ordered by $\preceq$ on the last components. Hence, $(U,\leq,lab_{d_{k}})$
is isomorphic to $w_{d_{k}}$, which proves the property since $\widetilde{r}%
(U):=d_{k}$.

Unicity is proved as in Proposition \ref{P:3.11}.
\qedhere
\end{enumerate}
\end{proof}

\begin{prop}\label{P:4.7}
 An SOJ-tree is MS definable if it is described
by a regular SOJ-scheme.
\end{prop}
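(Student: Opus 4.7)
The plan is to mirror the proof of Proposition~\ref{P:3.25}, adapted to the richer structure, and to work inside plain MS: SOJ-trees come with the linear order $\sqsubseteq$, so finiteness of subsets is MS-expressible (as noted in Section~2), and the direction sets $(w_q^-)_q$, $(w_q^+)_q$ in an SOJ-scheme are arrangements rather than bare labelled sets, so Thomas's theorem applies directly without the finiteness predicate. Assume $J=\mathit{fgs}(J')$ where $J' = (N,\leq,\sqsubseteq,A,\mathcal{U}^-,\mathcal{U}^+)\simeq\mathit{Unf}(\Delta)$ for a regular SOJ-scheme $\Delta=(Q,D,w_{Ax},(w_q^-)_{q\in Q},(w_q^+)_{q\in Q},(w_d)_{d\in D})$ with $Q=\{1,\dots,m\}$ and $D=\{1,\dots,p\}$. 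The goal is to exhibit an MS sentence whose unique model, up to isomorphism, is the OJ-tree $(N,\leq,\sqsubseteq)=J$.

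The sentence takes the form $\exists A, N_0^-, N_0^+, N_1^-, N_1^+\, \exists Y_1,\dots,Y_m\, \exists Z_1,\dots,Z_p\, \exists W^-,W^+.\ \Psi$. The first block is constrained, via the formula of Proposition~\ref{P:4.4}(1), to encode a structuring $(A',\mathcal{U}'^-,\mathcal{U}'^+)$ of the input OJ-tree; the formulas $\theta^-,\theta^+$ of Proposition~\ref{P:4.4}(2) then isolate each line together with its top. The $(Y_q)$ partition the domain and encode a candidate run $r'$ with $r'(x)=q\iff x\in Y_q$; the $(Z_d)$ partition the domain so that each $Z_d$ is a union of whole lines of $\mathcal{U}'^-\cup\mathcal{U}'^+$, encoding a candidate direction labelling $\widetilde{r}'(U)=d\iff U\subseteq Z_d$. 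The transversals $W^-,W^+$ are required to contain exactly one element from each line of $\mathcal{U}'^-$, respectively $\mathcal{U}'^+$. All these partition and transversal properties are MS-expressible. The body $\Psi$ further conjoins ``$(X,\leq,\sqsubseteq)$ is an OJ-tree'' (first-order, by Definition~\ref{D:4.1}) with MS translations of Conditions (b.1)--(b.4) of Definition~\ref{D:4.5}.

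For (b.1) and (b.4), Thomas's theorem yields MS sentences $\psi_{Ax}$, $\psi_d$ characterizing $w_{Ax}$ and each $w_d$ up to isomorphism; I relativize them to assert that $(A',\leq,r')$ is isomorphic to $w_{Ax}$ and that, for every line $U\subseteq Z_d$, $(U,\leq,r')$ is isomorphic to $w_d$. The subtler step is (b.2) and (b.3), which require, for each $q$ and every $x\in Y_q$, that $(\mathcal{U}'^x\cap \mathcal{U}'^-,\sqsubseteq_x,\widetilde{r}')$ be isomorphic to $w_q^-$, and symmetrically on the right. By the compatibility in Definition~\ref{D:4.1}(b) and Definition~\ref{D:4.2}(b), the order $\sqsubseteq_x$ on $\mathcal{U}'^x\cap\mathcal{U}'^-$ agrees with the restriction of $\sqsubseteq$ to any transversal, so the arrangement to control is $(W^-\cap V_x^-,\sqsubseteq,\ell)$ where $V_x^-$ is the MS-definable union of the lines in $\mathcal{U}'^x\cap \mathcal{U}'^-$ (obtained from $\theta^-$) and $\ell(w)=d\iff w\in Z_d$. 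Thomas's sentence $\psi_q^-$ for $w_q^-$ is then relativized to this restricted domain and universally quantified over $x\in Y_q$; analogously with $\psi_q^+$, $W^+$, $V_x^+$.

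The main obstacle is precisely this uniform, $x$-parametric treatment of (b.2) and (b.3): we must assert, by a single MS formula, a regular language constraint on the arrangement of lines in each half-direction of each node. The transversal trick is what makes this uniform, converting the requirement into formulas of the shape $\forall x\in Y_q\,(\psi_q^\pm)^{W^\pm\cap V_x^\pm}$, which is a standard MS relativization. Once $\Psi$ is assembled, any model $(X,\leq,\sqsubseteq)$ of the full sentence is of the form $\mathit{fgs}(J'')$ for some SOJ-tree $J''$ described by $\Delta$, and the unicity clause of Proposition~\ref{P:4.6}(2) gives $J''\simeq J'\simeq \mathit{Unf}(\Delta)$, so $(X,\leq,\sqsubseteq)\simeq J$, establishing MS definability.
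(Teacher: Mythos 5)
Your proposal is correct and follows essentially the same route the paper intends: it is the adaptation of Propositions \ref{P:3.12} and \ref{P:3.25} announced in the paper's one-line proof, using Proposition \ref{P:4.4} to define the structuring, Thomas's theorem for the arrangements, a transversal set to handle conditions (b.2)--(b.3) (the analogue of the set $W$ in condition (v) of Proposition \ref{P:3.25}), and the unicity clause of Proposition \ref{P:4.6}. You also correctly identify the one point the paper singles out, namely that the finiteness predicate is unnecessary here because the $w_q^{\pm}$ are arrangements rather than labelled sets.
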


\begin{proof} Similar to the proofs of Propositions \ref{P:3.12} and \ref{P:3.25}.
\end{proof}

\bigskip

Note that, we need not the finiteness predicate as in Proposition \ref{P:3.25}
because we deal with arrangements that are linearly ordered structures, and
not with labelled sets. Next we define an algebra $\mathbb{SOJT}$ with two
sorts: $\boldsymbol{t}$ for SOJ-trees and $\boldsymbol{h}$ for SJ-hedges.

\bigskip

\begin{defi}[Operations on SOJ-trees and SJ-hedges]\label{D:4.8}
\leavevmode
\begin{itemize}
\item \emph{Concatenation of SOJ-trees along axes.}

\noindent Let $J_{1}=(N_{1},\leq_{1},\sqsubseteq_{1},A_{1},\mathcal{U}_{1}^{-},%
\mathcal{U}_{1}^{+})$ and $J_{2}=(N_{2},\leq_{2},\sqsubseteq_{2},A_{2},%
\mathcal{U}_{2}^{-},\mathcal{U}_{2}^{+})$ be disjoint SOJ-trees. We define
their concatenation as follows:
\begin{align*}
J_{1}\bullet J_{2} &:= (N_{1}\uplus N_{2},\leq,\sqsubseteq,A_{1}\uplus A_{2}, 
\mathcal{U}_{1}^{-}\uplus\mathcal{U}_{2}^{-},\mathcal{U}_{1}^{+}\uplus
\mathcal{U}_{2}^{+}) \text{ where } \\
x\leq y&:\Longleftrightarrow x\leq_{1}y\vee x\leq_{2}y\vee(x\in N_{1}\wedge
y\in A_{2}),\\
x\sqsubseteq y&:\Longleftrightarrow x\leq y\vee x\sqsubseteq_{1}y\vee
x\sqsubseteq_{2}y, \\
&\phantom{:\Longleftrightarrow~}\vee(x\bot y\wedge x\in N_{1}\wedge y\in N_{2}\wedge y\in U\in%
\mathcal{U}_{2}^{+}\cap\mathcal{U}_{2}^{x\sqcup y}) \\
&\phantom{:\Longleftrightarrow~}\vee(x\bot y\wedge x\in N_{2}\wedge y\in N_{1}\wedge x\in U\in%
\mathcal{U}_{2}^{-}\cap\mathcal{U}_{2}^{x\sqcup y}),\text{ for some }U.
\end{align*}
The relations $x\bot y$  and $x\sqcup y$ are relative to $\leq$.
It is clear that $J_{1}\bullet J_{2}$ is an SOJ-tree. Its axis is $%
A_{1}\uplus A_{2},\mathcal{U}^{+}=\mathcal{U}_{1}^{+}\uplus\mathcal{U}%
_{2}^{+}$ and $\mathcal{U}^{-}=\mathcal{U}_{1}^{-}\uplus\mathcal{U}_{2}^{-}. $
\emph{The empty SOJ-tree} is denoted by the nullary symbol $\Omega _{%
\boldsymbol{t}}$.

\item \emph{Extension of two SJ-hedges into a single SOJ-tree.}

Let $H_{1}=(N_{1},\leq_{1},\sqsubseteq_{1},\mathcal{U}_{1})$ and $%
H_{2}=(N_{2},\leq_{2},\sqsubseteq_{2},\mathcal{U}_{2})$ be disjoint
SJ-hedges and $u\notin N_{1}\uplus N_{2}$. Then:
\begin{align*}
ext_{u}(H_{1},H_{2})&:=(N_{1}\uplus N_{2}\uplus\{u\},\leq,\sqsubseteq ,\{u\},%
\mathcal{U}_{1},\mathcal{U}_{2}), \text{ where} \\
x\leq y&:\Longleftrightarrow x\leq_{1}y\vee x\leq_{2}y\vee y=u, \\
x\sqsubseteq y&:\Longleftrightarrow x\leq y\vee x\sqsubseteq_{1}y\vee
x\sqsubseteq_{2}y\vee(x\in N_{1}\wedge y\in N_{2}).
\end{align*}
Clearly, $ext_{u}(H_{1},H_{2})$ is an SOJ-tree, where $u$ has no central
direction. When handling SOJ-trees and SJ-hedges up to isomorphism, we
replace the notation  $ext_{u}(H_{1},H_{2})$ by $ext(H_{1},H_{2}).$

\item \emph{The empty SJ-hedge }is denoted by the nullary symbol $\Omega _{%
\boldsymbol{h}}$.

\item \emph{Making an SOJ-tree into an SJ-hedge.}

\noindent This is done by the unary operation $\mathit{mkh}$ such that, if $%
J=(N,\leq,\sqsubseteq,A,\mathcal{U}^{-},\mathcal{U}^{+})$ is an SOJ-tree,
then
\[
\mathit{mkh}(J):=(N,\leq,\sqsubseteq,\{A\}\uplus\mathcal{U}^{-}\uplus%
\mathcal{U}^{+}).
\]
Similarly as $\mathit{fgs}$, this operation forgets some information, here,
it merges three sets.
Note that in $\mathit{mkh}$($J)$, we distinguish neither $\mathcal{U}^{-}$
from $\mathcal{U}^{+}$ nor the axis $A$ from the other lines.

\item \emph{The concatenation of two disjoint SJ-hedges.}

  \noindent Let $H_{1}=(N_{1},\leq_{1},\sqsubseteq_{1},\mathcal{U}_{1})$ and
  $H_{2}=(N_{2},\leq_{2},\sqsubseteq_{2},\mathcal{U}_{2})$ be disjoint
SJ-hedges. Their \textqt{horizontal} concatenation is:
\begin{align*}
H_{1}\otimes H_{2}&:=(N_{1}\uplus N_{2},\leq_{1}\uplus\leq_{2},\sqsubseteq ,%
\mathcal{U}_{1}\uplus\mathcal{U}_{2})\text{ where }
\\
x\sqsubseteq y&:\Longleftrightarrow x\sqsubseteq_{1}y\vee
x\sqsubseteq_{2}y\vee(x\in N_{1}\wedge y\in N_{2}).
\end{align*}
We let $F^{\prime\prime}$ be the 2-sorted signature $\{\bullet,\otimes ,ext,%
\mathit{mkh},\Omega_{\boldsymbol{t}},\Omega_{\boldsymbol{h}}\}$ whose
operation types are:
\begin{align*}
\bullet&:\boldsymbol{t}\times\boldsymbol{t}\rightarrow\boldsymbol{t},
&
\Omega_{\boldsymbol{t}}&:\boldsymbol{t},
&
ext&:\boldsymbol{h}\times\boldsymbol{h}\rightarrow\boldsymbol{t},
\\
\otimes&:\boldsymbol{h}\times\boldsymbol{h}\rightarrow\boldsymbol{h},
&
\Omega_{\boldsymbol{h}}&:\boldsymbol{h},
&
\mathit{mkh}&:\boldsymbol{t}\rightarrow\boldsymbol{h}.
\end{align*}

\end{itemize}
In addition, we have, as in Definitions \ref{D:3.13} and \ref{D:3.26}:
\begin{itemize}
\item \emph{Forgetting the structuring:}
If $J=(N,\leq,\sqsubseteq,A,\mathcal{U}^{-},\mathcal{U}^{+})$ is an
SOJ-tree, then $\mathit{fgs}(J):=(N,\leq,\sqsubseteq)$ is the underlying\
OJ-tree.
\end{itemize}
\end{defi}

\begin{defi}[The value of a term]\label{D:4.9}

If $u$ is an occurrence of a binary symbol in a term $t$, we denote by $%
s_{1}(u)$ its first son and by $s_{2}(u)$ the second one (cf. Definition \ref{D:3.15}).
The value $val(t):=(N,\leq ,\sqsubseteq ,A,\mathcal{U}^{-},\mathcal{U}^{+})$
of a term $t\in T^{\infty }(F^{^{\prime \prime }})_{\boldsymbol{t}}$ is an
SOJ-tree defined in a similar way\footnote{Example \ref{E:4.10} will illustrate this definition.} as for $t\in T^{\infty
}(F^{\prime })_{\boldsymbol{t}}$, cf. Definitions \ref{D:3.15} and \ref{D:3.28}:

\begin{itemize}
\item $N:=\mathrm{Occ}(t,ext)$,

\item $x\leq y$ $:\Longleftrightarrow x\leq_{t}w\leq_{lex}y$ for some $%
w\in N$ such that $w\approx y,$

\item $A:=\mathrm{Max}(t,ext,\varepsilon)$,
\end{itemize}
where $\approx$ is the equivalence relation on $N$ defined as in Definition \ref{D:3.15}(a):
\begin{itemize}
\item $\mathcal{U}^{-}$ is the set of equivalence classes of $\approx$ of nodes
in $\mathrm{Max}(t,ext,s_{1}(u))$ for some occurrence $u$ of $ext$,

\item $\mathcal{U}^{+}$ is the set of equivalence classes of $\approx$
of nodes in $\mathrm{Max}(t,ext,s_{2}(u))$ for some occurrence $u$ of $ext$.
\end{itemize}
Hence, $U(x)\in\mathcal{U}^{-}$ if $x\leq_{t}s_{1}(\widehat{U(x)})$ and $%
U(x)\in\mathcal{U}^{+}$ if $x\leq_{t}s_{2}(\widehat{U(x)}).$

Next we define $x\sqsubseteq y:\Longleftrightarrow x\leq y$ or $x\bot y$ ($\bot$ is
relative to $\leq$, not to $\leq_{t}$) and we have one of the following
cases:

\begin{enumerate}[label=(\roman*)]
\item $x\sqcup_{t}y$ is an occurrence of $\otimes$ or $ext,$ $%
x\leq_{t}s_{1}(x\sqcup_{t}y)$ and $y\leq_{t}s_{2}(x\sqcup_{t}y),$

\item $x\sqcup_{t}y$ is an occurrence of $\bullet$, $x\leq_{t}s_{1}(x\sqcup
_{t}y)$ and $y\leq_{t}s_{2}(z)$ where $z$ is the unique maximal occurrence
of $ext$ such that $y<_{t}z\leq_{t}s_{2}(x\sqcup_{t}y)$,

\item $x\sqcup_{t}y$ is an occurrence of $\bullet$, $y\leq_{t}s_{1}(x%
\sqcup_{t}y)$ and $x\leq_{t}s_{1}(z)$ where $z$ is the unique maximal
occurrence of $ext$ such that $x<_{t}z\leq_{t}s_{2}(x\sqcup_{t}y)$.
\end{enumerate}
If $t\in T^{\infty}(F^{\prime\prime})_{\boldsymbol{h}}$ its value $val(t)$
is $(N,\leq,\sqsubseteq,\mathcal{U})$ with $(N,\leq,\sqsubseteq)$ defined as
above and $\mathcal{U}$ as in Definition \ref{D:3.28}.
\end{defi}

\begin{claim}
  \begin{enumerate}
 \item The mapping $val$ is a value mapping $T^{\infty
}(F^{^{\prime\prime}}):\rightarrow\mathbb{SOJT}$.

\item The transformation $\alpha$ of $\left\lfloor t\right\rfloor $ into $%
(N,\leq,\sqsubseteq)$ is an MS-transduction.
\end{enumerate}
\end{claim}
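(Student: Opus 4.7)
Part (1) amounts to verifying the homomorphism identity $val(f(t_1,\ldots,t_k)) = f_{\mathbb{SOJT}}(val(t_1),\ldots,val(t_k))$ for each operation $f$ of $F^{\prime\prime}$. The nullary cases $\Omega_{\boldsymbol{t}},\Omega_{\boldsymbol{h}}$ are immediate since $\mathrm{Occ}(\Omega,ext)=\emptyset$, and $\mathit{mkh}$ is a sort-change that forgets the axis and the $\mathcal{U}^-/\mathcal{U}^+$ split, matching its semantic definition. The substance lies in the three remaining operations, and the key observation is that the $\approx$-equivalence of Definition \ref{D:3.15}(a) merges positions only along paths of $\bullet$'s. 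Consequently, for $t=t_1\bullet t_2$ with root labelled $\bullet$, the axes $A_1,A_2$ of $val(t_1),val(t_2)$ fuse into the single axis $\mathrm{Max}(t,ext,\varepsilon)=A_1\uplus A_2$ of $val(t)$, while the remaining $\approx$-classes and their membership in $\mathcal{U}^-$ or $\mathcal{U}^+$ (which depends only on which son of some $ext$ occurrence the class lies under) are inherited unchanged; cases (i)--(iii) of $\sqsubseteq$ in Definition \ref{D:4.9} then deliver exactly the relations prescribed for $J_1\bullet J_2$ in Definition \ref{D:4.8}. For $t=ext(t_1,t_2)$ with root $u$ an occurrence of $ext$, the singleton $\{u\}$ is the new axis, every $\approx$-class coming from $t_1$ (resp.\ $t_2$) now lies below $s_1(u)$ (resp.\ $s_2(u)$) and so joins $\mathcal{U}^-$ (resp.\ $\mathcal{U}^+$), and $u$ becomes the top of every previous axis. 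For $t=t_1\otimes t_2$, no merging across the root occurs, so the hedges juxtapose, and case (i) of $\sqsubseteq$ places $val(t_1)$ entirely before $val(t_2)$.

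For (2), I build a definition scheme $\mathcal{D}=\langle\chi,\delta,\theta_\leq,\theta_\sqsubseteq\rangle$ transforming $\lfloor t\rfloor$ into $(N,\leq,\sqsubseteq)$. Let $\chi$ express that the input is isomorphic to $\lfloor t\rfloor$ for some $t\in T^\infty(F^{\prime\prime})_{\boldsymbol{t}}$, and take $\delta(x):=lab_{ext}(x)$. The least common ancestor $\sqcup_t$ and the son selectors $s_1,s_2$ are first-order definable from $\leq_t$ together with the predicates $br_1,br_2$, and the lexicographic order $\leq_{lex}$ was shown to be first-order definable in Section 2. The equivalence
\[
w\approx y \;\Longleftrightarrow\; \forall z\,\bigl[\bigl(w<_t z\leq_t w\sqcup_t y\;\vee\;y<_t z\leq_t w\sqcup_t y\bigr)\Rightarrow lab_{\bullet}(z)\bigr]
\]
is therefore first-order, and so is
\[
\theta_\leq(x,y):=\exists w\,\bigl[lab_{ext}(w)\wedge x\leq_t w\wedge w\leq_{lex} y\wedge w\approx y\bigr].
\]
For $\theta_\sqsubseteq$, I take $\theta_\leq(x,y)\vee\psi_1(x,y)\vee\psi_2(x,y)\vee\psi_3(x,y)$, where the $\psi_i$ formalise cases (i)--(iii) of Definition \ref{D:4.9}; each refers only to the label of $x\sqcup_t y$, to descendance from $s_1$ or $s_2$ of this position, and, in (ii) and (iii), to a unique maximal occurrence of $ext$ in an explicit $\leq_t$-interval, all first-order conditions.

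The hardest point is not depth but bookkeeping. For (1), one must carefully track how $\approx$-classes interact with each operation (merging occurs only across $\bullet$-nodes, never across $\otimes$ or $ext$) and how the classification into $\mathcal{U}^-,\mathcal{U}^+$ is preserved under the three binary operations; once this is correctly accounted for, the check reduces to matching the defining clauses of Definition \ref{D:4.9} against those of Definition \ref{D:4.8}. For (2), the real work is recognising that every primitive appearing in Definition \ref{D:4.9} ($\sqcup_t$, $\leq_{lex}$, $s_1$, $s_2$, the position-labels, and the notion of a maximal occurrence of $ext$ in a $\leq_t$-interval) is first-order expressible over the signature of $\lfloor t\rfloor$, so that the definition scheme $\mathcal{D}$ can be assembled by routine Boolean combinations.
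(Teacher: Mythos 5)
Your proposal is correct and takes essentially the same approach as the paper: the paper's own proof consists of the single assertion that (1) is ``clear from the definitions'' and that (2) holds because the conditions of Definition \ref{D:4.9} are expressible in $\left\lfloor t\right\rfloor$ by MS formulas. You merely supply the operation-by-operation verification of the homomorphism identity and the explicit definition scheme $\langle\chi,\delta,\theta_{\leq},\theta_{\sqsubseteq}\rangle$ that the paper leaves implicit, which is exactly the intended content.
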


\begin{proof} (1) is clear from the definitions and (2) holds because the conditions of Definition \ref{D:4.9} are expressible in $\left\lfloor t\right\rfloor $ by 
MS formulas.
\end{proof}

\begin{figure}
[ptb]
\begin{center}
\includegraphics[
width=2.7008in
]%
{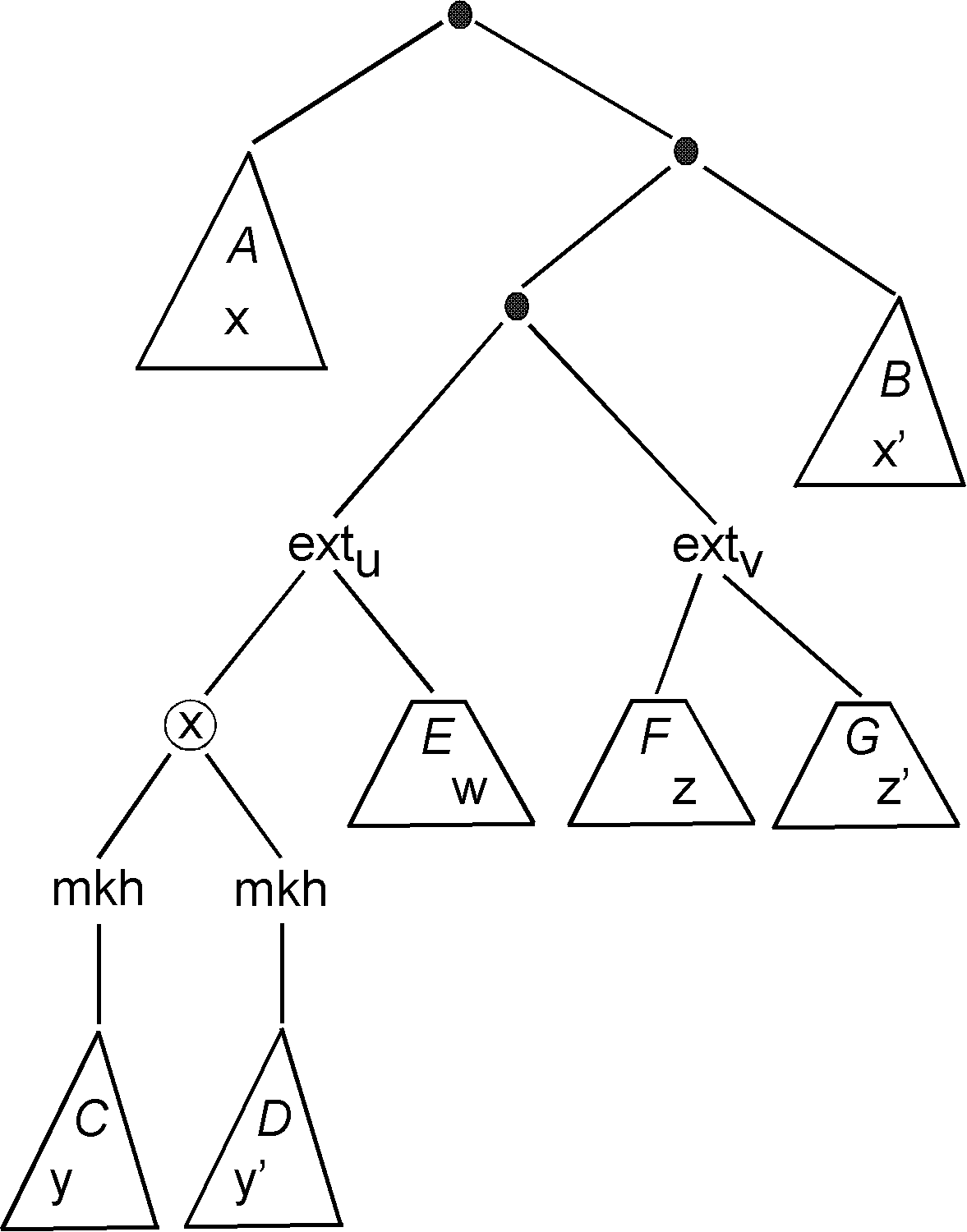}%
\caption{Term $T$ of Example (4.10).}%
\end{center}
\end{figure}

\bigskip

\begin{figure}
[ptb]
\begin{center}
\includegraphics[
width=3.5423in
]%
{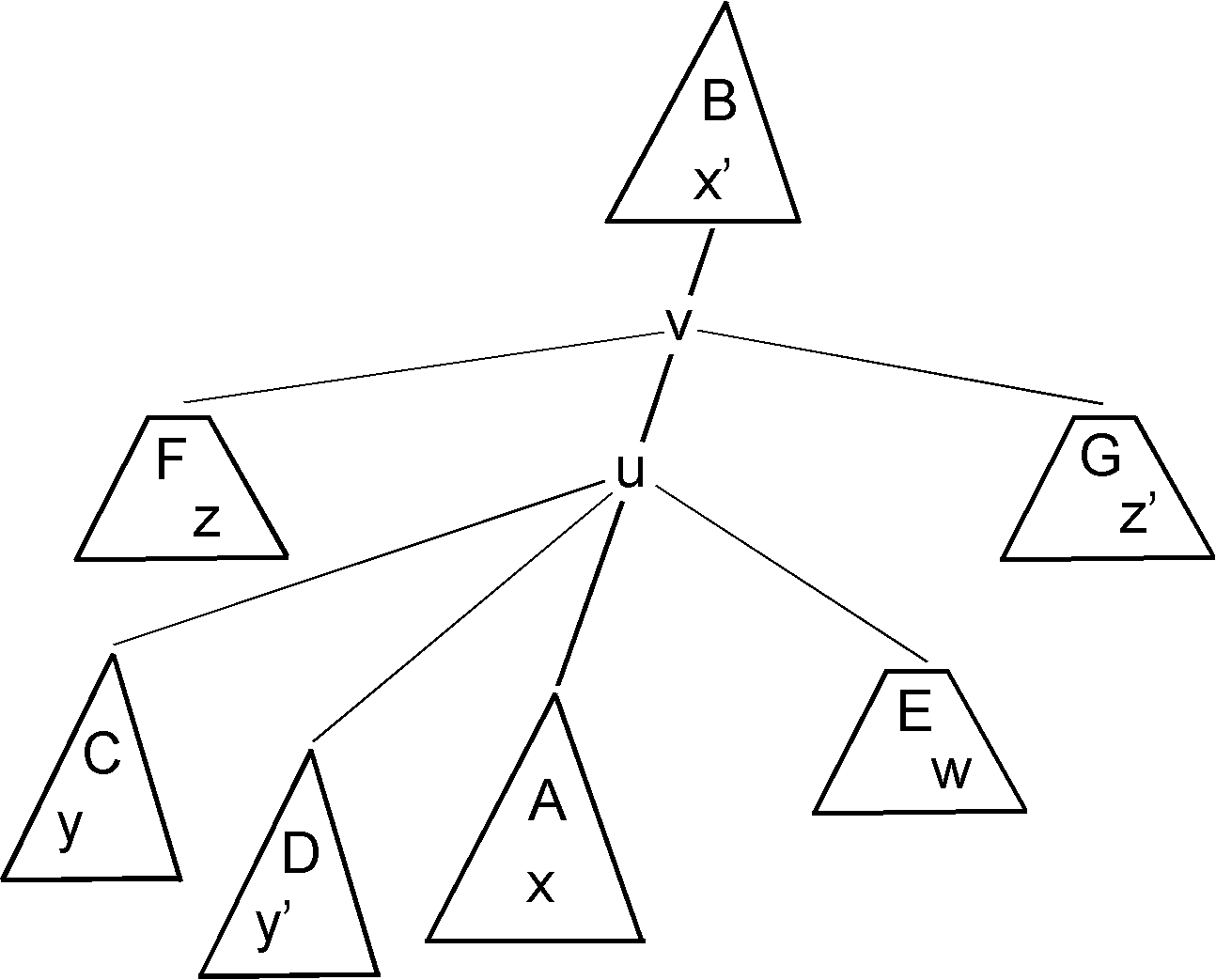}%
\caption{The OJ-tree $val(T)$ of Example (4.10).}%
\end{center}
\end{figure}

\begin{exa}\label{E:4.10} We now illustrate this definition. Figure 6 shows
a term $T$ where $A,B,C$ and $D$ are subterms of type $\boldsymbol{t}$ and $%
E,F$ and $G$ are subterms of type $\boldsymbol{h}$. They contain
occurrences of $ext$ that define nodes $x,x^{\prime},y,y^{\prime},w,z$ and $%
z^{\prime}$ of $val(T)$.

The OJ-tree $val(T)$ is shown on Figure 7, where we designate by $\mathrm{%
A,B,\dots,G}$ the trees and hedges defined by the terms $A,B,\dots,G$.
We have the following comparisons for $<$:
\begin{itemize}
\item $\{z,z^{\prime},u\}<v$, because $\{z,z^{\prime}\}<_{T}v$, $u<_{lex}v$ and $%
u\approx v$,

\item $\{y,y^{\prime},w\}<u$, because $\{y,y^{\prime},w\}<_{T}u$,

\item $x\leq\{u,v\}$ because $x\leq_{T}a<_{lex}\{u,v\}$ and $a\approx
u\approx v$ where $a$ is the root position of $A$,

\item $v<x^{\prime}$ if and only if $x^{\prime}$ is on $X$, the axis of $%
\mathrm{B}$, because in this case, $v\approx x^{\prime}$ and otherwise $v$
and $x^{\prime}$ are incomparable with respect to $\leq$; in all cases we
have $v<_{lex}x^{\prime}$.
\end{itemize}
For $\sqsubset$ we have:
$z\sqsubset y\sqsubset y^{\prime}\sqsubset x\sqsubset w\sqsubset u\sqsubset
z^{\prime}\sqsubset v$ and $x^{\prime}\sqsubset z$ if $x^{\prime}$ is to the left of $X$ ;
 otherwise $v\sqsubset x^{\prime}$.
All inequalities for $<$ yield the corresponding inequalities for $\sqsubset 
$. We now compare $z,y,y^{\prime },x,w,z^{\prime }$ that are pairwise
incomparable for $<$.
\begin{itemize}
\item By Case (i) of Definition \ref{D:4.9}, we get $\{y,y^{\prime}\}\sqsubset
w,y\sqsubset y^{\prime}$ and $z\sqsubset z^{\prime}.$

\item By Case (ii), we get $x\sqsubset w,\{x,w\}\prec z^{\prime}$ and $%
\{y,y^{\prime }\}\prec w$.

\item By Case(iii) we get $\{z,y,y^{\prime}\}\prec x$ and $z\prec
\{y,y^{\prime}\}.$
\end{itemize}

Finally, if $x^{\prime}$ is to the left of $X$, then Case (iii) gives $%
x^{\prime}\sqsubset z,$ and if it to its right, then Case (ii) gives $z\sqsubset x^{\prime}$.
\end{exa}

\begin{thm}\label{T:4.11} 
The following properties of an OJ-tree $J$ are
equivalent~:
\begin{enumerate}
\item  $J$ is regular,

\item $J$ is described by a regular SOJ-scheme,

\item  $J$ is MS definable.

\end{enumerate}
\end{thm}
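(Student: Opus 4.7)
The plan is to follow exactly the cyclic structure used in Theorems \ref{T:3.21} and \ref{T:3.30}, proving (1)$\Longrightarrow$(2)$\Longrightarrow$(3)$\Longrightarrow$(1). Implication (2)$\Longrightarrow$(3) is already in hand: it is precisely Proposition \ref{P:4.7}.

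For (1)$\Longrightarrow$(2), I would mimic the construction used in the proof of Theorem \ref{T:3.21}. Assume $J=\mathit{fgs}(J')$ where $J'=val(t)$ for a regular term $t\in T^{\infty}(F'')_{\boldsymbol{t}}$, and let $h\colon Pos(t)\to Q$ together with $\tau$ be its defining deterministic top-down labelling (with $h(Pos(t))=Q$ after shrinking $Q$). The key observation is the analogue of the claim inside the proof of Theorem \ref{T:3.21}: whenever $h(u)=h(u')$ we have $t/u = t/u'$, so any arrangement or labelled set extracted from positions below $u$ is, up to isomorphism, the same as the one extracted from below $u'$. Guided by Definition \ref{D:4.5}, I would then introduce a finite set $D$ of ``direction states'' (one per class of siblings of $ext$-occurrences below an occurrence of $ext$) and build an SOJ-scheme $\Delta=(Q,D,w_{Ax},(w_q^-)_{q\in Q},(w_q^+)_{q\in Q},(w_d)_{d\in D})$ whose components are obtained, respectively, as $\overline{h}$-images of: the axis $\mathrm{Max}(t,ext,\varepsilon)$ for $w_{Ax}$; the sets of maximal $ext$-occurrences inside $s_1(u)$ and $s_2(u)$ for $u$ with $h(u)=q$, labelled by their $D$-values, for $w_q^-$ and $w_q^+$; and the maximal $ext$-occurrences inside the son of an $ext$ encoded by $d$, for $w_d$. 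Regularity of all these arrangements follows from the regularity of $t$ exactly as for SBJ-trees, and the restriction of $h$ to $\mathrm{Occ}(t,ext)$ together with the $D$-labelling of the relevant subtrees provides the required runs $r$ and $\widetilde{r}$ witnessing that $\Delta$ describes $J'$, hence $J$.

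For (3)$\Longrightarrow$(1), I would argue exactly as in Theorem \ref{T:3.30}(3)$\Longrightarrow$(1), but noticing (as the paragraph after Proposition \ref{P:4.7} already emphasizes) that the finiteness predicate is no longer needed, so we land in plain MS. By the claim following Definition \ref{D:4.9}, the mapping $\alpha$ sending $\lfloor t\rfloor$ (for $t\in T^{\infty}(F'')_{\boldsymbol{t}}$) to the OJ-tree $\mathit{fgs}(val(t))=(N,\leq,\sqsubseteq)$ is an MS-transduction. If $J$ is MS definable by a sentence $\beta$, apply Theorem \ref{T:2.1} to get an MS sentence $\beta'$ such that $\lfloor t\rfloor\models \beta'$ iff $\alpha(\lfloor t\rfloor)\simeq J$. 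The set of such terms is MS definable, and by Rabin's theorem \cite{Rab,Tho90} any nonempty MS-definable set of terms contains a regular one; that regular term denotes $J$, so $J$ is regular.

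The only step that requires real care is (1)$\Longrightarrow$(2): the SOJ-scheme has more components than the SBJ- or SJ-schemes treated earlier (two families $w_q^-,w_q^+$ indexed by $Q$ and one family $w_d$ indexed by $D$, with an auxiliary labelling $\widetilde{r}$), and one must verify that all four conditions (b.1)--(b.4) of Definition \ref{D:4.5} hold simultaneously, and in particular that the ``left/right'' split of $\mathcal{U}^x$ induced by $t$ via $s_1$ vs.\ $s_2$ matches the order $\sqsubseteq_x$ defined in Definition \ref{D:4.9}. Everything else is a bookkeeping adaptation of the already-proved binary and unordered cases; no new logical tool beyond Theorem \ref{T:2.1} and Rabin's theorem is required.
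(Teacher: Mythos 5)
Your proposal follows the same cyclic scheme as the paper, which itself only says ``similar to Theorem \ref{T:3.21}'' and records the two pointers you also use: Proposition \ref{P:4.7} for (2)$\Longrightarrow$(3) and the claim after Definition \ref{D:4.9} (that $\alpha$ is an MS-transduction) combined with Rabin's theorem for (3)$\Longrightarrow$(1). Your more detailed sketch of (1)$\Longrightarrow$(2), adapting the automaton labelling $h$ and adding the direction states $D$ to fit the six components of an SOJ-scheme, is exactly the adaptation the paper leaves implicit, so the proposal is correct and essentially identical in approach.
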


\begin{proof} The proof is similar to that of Theorem \ref{T:3.21}. We only
indicate some differences.
\begin{enumerate}
\item$\!\!\Longrightarrow $(3): Follows from Proposition \ref{P:4.7}.

\item$\!\!\Longrightarrow $(1): As observed in Definition \ref{D:4.9} (cf. the claim), the mapping $\alpha $
that transforms the relational structure $\left\lfloor t\right\rfloor $ for $%
t$ in $T^{\infty }(F^{^{\prime \prime }})_{\boldsymbol{t}}$ into the OJ-tree 
$(N,\leq ,\sqsubseteq )=\mathit{fgs}(val(t))$ is an MS-transduction. Let $%
J=(N,\leq ,\sqsubseteq )$ be an MS definable OJ-tree. It is, up to
isomorphism, the unique model of an MS sentence $\beta $. The set of terms $%
t$ in $T^{\infty }(F^{\prime \prime })_{\boldsymbol{t}}$ such that $\alpha
(\left\lfloor t\right\rfloor )\models \beta $ is thus MS definable, hence,
it contains a regular term. This term denotes $J$, hence $J$ is regular.
\qedhere
\end{enumerate}
\end{proof}

As in Corollaries \ref{C:3.22} and \ref{C:3.31},  we deduce that the isomorphism problem
for regular OJ-trees is decidable.

\begin{rem}[An alternative notion of SOJ-tree]\label{R:4.12} 

We present a variant of Definition \ref{D:4.2}. If $J=(N,\leq ,\sqsubseteq ,A,%
\mathcal{U}^{-},\mathcal{U}^{+})$ is an SOJ-tree, Definition \ref{D:4.2}(c) shows
that, for each $x\in N$, the partition $(\mathcal{U}^{x}\cap \mathcal{U}^{-},%
\mathcal{U}^{x}\cap \mathcal{U}^{+})$ of $\mathcal{U}^{x}$ is defined in a
unique way from $\sqsubseteq $ and the structuring $\mathcal{U}:=\{A\}\uplus 
\mathcal{U}^{-}\uplus \mathcal{U}^{+}$ of $(N,\leq ),$ except if $x$ has no
central direction (cf.~Proposition \ref{P:4.3}). This partition is useful only
when $x$ is the minimal element of $A$, denoted by $\min (A)$ when it
exists. To see that, we consider $J$ and another structuring of the same
OJ-tree, $J^{\prime }=(N,\leq ,\sqsubseteq ,A,\mathcal{U}^{\prime -},%
\mathcal{U}^{\prime +}),$ such that $\mathcal{U}^{\prime x}=\mathcal{U}^{x}$
for each node $x\neq \min (A)$ and $\mathcal{U}^{x}\cap \mathcal{U}^{-}\neq 
\mathcal{U}^{\prime x}\cap \mathcal{U}^{-}$ if $x=\min (A)$. Let $K$ be a
nonempty SOJ-tree. Then $K\bullet J$ is not equal to $K\bullet J^{\prime }$.

We could thus define an SOJ-tree as a tuple $S=(N,\leq ,\sqsubseteq ,%
\mathcal{U}^{\prime },\mathcal{U}_{Ax}^{-},\mathcal{U}_{Ax}^{+})$ such that $%
(N,\leq ,\sqsubseteq )$ is an OJ-tree, $\mathcal{U}:=\mathcal{U}^{\prime
}\uplus \mathcal{U}_{Ax}^{-}\uplus \mathcal{U}_{Ax}^{+}$ is a structuring of 
$(N,\leq )$ with axis $A$ (belonging to $\mathcal{U}^{\prime }$), $\mathcal{U%
}_{Ax}^{-}\uplus \mathcal{U}_{Ax}^{+}=\emptyset $ if $A$ has no minimal
element, and, otherwise, $\mathcal{U}_{Ax}^{-}\uplus \mathcal{U}_{Ax}^{+}=%
\mathcal{U}^{\min (A)}$ and $U\sqsubset U^{\prime }$ for all $U\in \mathcal{%
U}_{Ax}^{-}$ and $U^{\prime }\in \mathcal{U}_{Ax}^{+}.$ Then, the structure $%
S$ corresponding to an SOJ-tree  $(N,\leq ,\sqsubseteq ,A,\mathcal{U}^{-},%
\mathcal{U}^{+})$ is $(N,\leq ,\sqsubseteq ,\mathcal{U}^{\prime },\mathcal{U}%
_{Ax}^{-},\mathcal{U}_{Ax}^{+})$ where:
\begin{itemize}

\item $\mathcal{U}_{Ax}^{-}:=\mathcal{U}^{-}\cap \mathcal{U}^{\min (A)}$ and $%
\mathcal{U}_{Ax}^{+}:=\mathcal{U}^{+}\cap \mathcal{U}^{\min (A)}$ if $A$ has
a minimal element, 

\item $\mathcal{U}_{Ax}^{-}:=\mathcal{U}_{Ax}^{+}:=\emptyset $ otherwise, and, in
both cases,

\item $\mathcal{U}^{\prime }:=\{A\}\cup \mathcal{U}^{-}\cup \mathcal{U}^{+}-(%
\mathcal{U}_{Ax}^{-}\cup \mathcal{U}_{Ax}^{+}).$
\end{itemize}

It is not difficult conversely to construct $(A,\mathcal{U}^{-},\mathcal{U}%
^{+})$ from $(N,\leq ,\sqsubseteq ,\mathcal{U}^{\prime },\mathcal{U}%
_{Ax}^{-},\mathcal{U}_{Ax}^{+})$ and to redefine the operations of
Definition \ref{D:4.8} in terms of the structures $S$ as above. This alternative
definition of SOJ-trees contains no redundant information. However, we found
the initial definition easier to handle in our logical setting.  
\end{rem}

\section{Quasi-trees}
Quasi-trees can be viewed intuitively as ``undirected join-trees''. As in 
\cite{Cou14}, we define them in terms of a ternary betweenness relation.
Their use for defining rank-width is reviewed at the end of the section.

\begin{defi}[Betweenness]\label{D:5.1} \leavevmode
  \begin{enumerate}[label=(\alph*),beginpenalty=99]

\item Let $L=(X,\leq)$ be a linear order. Its \emph{betweenness relation} is
the ternary relation on $X$ such that $B_{L}(x,y,z)$ holds if and only if $%
x<y<z$ or $z<y<x.$ It is empty if $X$ has less than 3 elements.

\item If $T$ is a tree, its \emph{betweenness relation} is the ternary
relation on $N_{T}$, such that $B_{T}(x,y,z)$ holds if and only if $x,y,z$
are pairwise distinct and $y$ is on the unique path between $x$ and $z$. If 
$R$ is a rooted tree and $T=\mathit{Und}(R)$ is the tree obtained from $R$
by forgetting its root and edge directions, then
%
$B_{T}(x,y,z): \Longleftrightarrow x,y,z$ are pairwise distinct and, either $%
x<_{R}y\leq_{R}x\sqcup_{R}z$ or $z<_{R}y\leq_{R}x\sqcup_{R}z$.

\item If $B$ is a ternary relation on a set $X$, and $x,y\in X$, then $%
[x,y]_{B}:=\{x,y\}\cup \{z\in X\mid B(x,z,y)\}.$ This set is finite if $%
B=B_{T}$ for some tree $T$.
\end{enumerate}
\end{defi}

\begin{prop}[cf.\ \cite{Cou14}]\label{P:5.2}\leavevmode\begin{enumerate}\item The betweenness relation $B$ of
a linear order $(X,\leq)$ satisfies the following properties for all $x,y,z,u \in X$.

\begin{enumerate}[label=A\arabic*:]
\item $B(x,y,z)\Rightarrow x\neq y\neq z\neq x.$

\item $B(x,y,z)\Rightarrow B(z,y,x).$

\item $B(x,y,z)\Rightarrow\lnot B(x,z,y).$

\item $B(x,y,z)\wedge B(y,z,u)\Rightarrow B(x,y,u)\wedge B(x,z,u).$

\item $B(x,y,z)\wedge B(x,u,y)\Rightarrow B(x,u,z)\wedge B(u,y,z).$

\item $B(x,y,z)\wedge B(x,u,z)\Rightarrow$
$y=u\vee \lbrack B(x,u,y)\wedge B(u,y,z)]\vee
\lbrack B(x,y,u)\wedge B(y,u,z)].$

\item[A7'\rlap{:}] $x\neq y\neq z\neq x\Rightarrow B(x,y,z)\vee B(x,z,y)\vee
B(y,x,z).$
\end{enumerate}

\item The betweenness relation $B$ of a tree $T$ satisfies the properties
A1-A6 for all $x,y,z,u$ in $N_{T}$ together with the following weakening of
A7':
\begin{enumerate}
\item[A7:] $x\neq y\neq z\neq x \Rightarrow
  \begin{array}[t]{l} B(x,y,z)\vee B(x,z,y)\vee B(y,x,z)\,\vee
  \\
  \exists w.(B(x,w,y)\wedge B(y,w,z)\wedge B(x,w,z))
  \end{array}.$
\end{enumerate}
\end{enumerate}

\end{prop}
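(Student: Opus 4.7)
For part (1), each of A1--A6 and A7' reduces mechanically to the defining disjunction $B(x,y,z) \Leftrightarrow (x<y<z) \lor (z<y<x)$. A1 and A2 are immediate; A3 follows because $x<y<z$ already pins down the relative order of each pair. For A4 and A5 I would first apply A2 to align the two hypotheses to a common orientation, after which they merge into a single monotone chain of four elements from which both conclusions read off. A6 is the trichotomy of $<$ applied to $y$ and $u$ inside the common interval, and A7' is trichotomy on three distinct elements. None of these deserves a detailed calculation.

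For part (2), I would root $T$ arbitrarily, writing $T = \mathit{Und}(R)$, and work with the rooted-tree formulation of betweenness from Definition~\ref{D:5.1}(b):
\[
B_T(x,y,z) \iff x,y,z \text{ pairwise distinct and } \bigl(x<_R y \le_R x\sqcup_R z \,\lor\, z<_R y \le_R x\sqcup_R z\bigr).
\]
Properties A1--A6 then reduce to inequality-chasing inside $\le_R$, exploiting that each set $[u,+\infty\lbrack$ is linearly ordered in a rooted tree. For each axiom the pattern is uniform: unfold every $B$-hypothesis into a shape in $R$ (which of the two endpoints lies below the relevant join, and how $y$ sits above it), and then read off the conclusion. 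The only bookkeeping is the $x\leftrightarrow z$ symmetry built into the definition of $B_T$; beyond that these verifications are essentially those of part~(1) applied to the linearly ordered intervals $[a,b]_R$.

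The main obstacle is A7. Here the cleanest route is to re-root $T$ at $x$ itself and set $w := y \sqcup_x z$, the \emph{median} of the three nodes. I would then case split on where $w$ lies. If $w = y$ then $z <_x y$, so $y$ lies on the $x$-to-$z$ path and we obtain $B(x,y,z)$; symmetrically $w=z$ yields $B(x,z,y)$. If $w = x$ then $y$ and $z$ sit in distinct directions below $x$, so $x$ itself lies on the path from $y$ to $z$, giving $B(y,x,z)$. Finally if $w \notin \{x,y,z\}$, then $y,z <_x w <_x x$ with $y$ and $z$ incomparable, and unpacking the definition yields simultaneously $B(x,w,y)$, $B(x,w,z)$ and $B(y,w,z)$, which is the fourth disjunct of A7. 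The only real subtlety is that $T$ may be countably infinite, but $w$ is well-defined because every rooted tree in the sense of Section~1 is a join-tree and the intervals $[u,+\infty\lbrack$ are finite; these are exactly the properties used in the case analysis.
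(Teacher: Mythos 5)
The paper does not prove Proposition~\ref{P:5.2} at all: it is stated with a citation to \cite{Cou14} and no argument is given, so there is no in-paper proof to compare against. Your verification is correct and is the standard one. Part (1) is indeed a mechanical unfolding of $B(x,y,z)\Leftrightarrow (x<y<z)\vee(z<y<x)$, and your treatment of A7 via the median $w:=y\sqcup_x z$ in the tree re-rooted at $x$ is exactly the right idea; the four-way case split $w=y$, $w=z$, $w=x$, $w\notin\{x,y,z\}$ matches the four disjuncts of A7, and your appeal to the fact that a rooted tree in the sense of Section~1 has all joins (each $[u,+\infty\mathclose[$ being finite and linearly ordered) correctly disposes of the existence of $w$ in the infinite case. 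The one place where your sketch is thinner than it should be is the claim that A1--A6 for trees are \textquotedblleft essentially those of part (1) applied to the linearly ordered intervals $[a,b]_R$\textquotedblright: for A4--A6 the hypotheses involve four nodes that are not a priori collinear, so before the linear-order argument applies one must first show that all four lie on a common path (e.g.\ for A4, that $y$ and $z$ both lie on $P(x,u)$ in the order $x,y,z,u$), which requires a small median/join computation of the same flavour as your A7 argument rather than a direct quotation of part (1). This is routine and does not affect correctness, but it is a genuine step, not a pure restriction to a line.
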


\begin{prop}\label{P:5.3} Let $B$ be a ternary relation on a set $X$ that
satisfies properties A1-A7' for all $x,y,z,u$ in $X$. Let $a$ and $b$ be
distinct elements of $X$. There is a unique linear order $L=(X,\leq )$ such
that $a<b$ and $B_{L}=B$. It is quantifier-free definable in the logical
structure $(X,B,a,b).$
\end{prop}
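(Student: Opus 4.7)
The plan is to exhibit a quantifier-free formula $\phi(x,y)$ in the signature $(X,B,a,b,=)$ whose reflexive closure $\le$ is the unique linear order with $a<b$ and $B_{\le}=B$. Uniqueness will come for free from this explicit definition: any linear order $L'$ on $X$ with $a<'b$ and $B_{L'}=B$ must, by the same definability argument, satisfy $x<'y\Leftrightarrow \phi(x,y)$.

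The starting point is to prove that, for every $x\in X\setminus\{a,b\}$, exactly one of $B(x,a,b)$, $B(a,x,b)$, $B(a,b,x)$ holds. Axiom A7' applied to the triple $(x,a,b)$ gives at least one; pairwise exclusion comes from A2 (endpoint swap) together with A3 (antisymmetry of the middle position), since, for example, $B(x,a,b)\wedge B(a,x,b)$ would give, via A2 on the first, $B(b,a,x)$, and then A3 applied to $B(b,a,x)$ contradicts $B(a,x,b)$ (again through A2). This partitions $X\setminus\{a,b\}$ into three zones $L$, $M$, $R$. I would then take $\phi(x,y)$ to be the finite disjunction of clauses forcing $x$ to precede $y$ along the natural ordering $L<\{a\}<M<\{b\}<R$ of zones, together with three within-zone clauses: $x,y\in L$ with $B(x,y,a)$; $x,y\in M$ with $B(a,x,y)$; $x,y\in R$ with $B(b,x,y)$. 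Each clause is visibly quantifier-free, so $\phi$ is as well.

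It remains to verify that the induced $\le$ is a linear order and satisfies $B_{\le}=B$. Reflexivity, antisymmetry, the stipulation $a<b$, and totality all follow from A1 and the zone trichotomy, reducing within each zone to the fact that the restriction of $B$ together with the relevant anchor point is the betweenness of a linear order. Transitivity and the equality $B_{\le}=B$ both reduce to a case analysis on the zones of the three points involved: configurations with strictly increasing zones are trivial, while the same-zone cases become assertions such as $B(x,y,a)\wedge B(y,z,a)\Rightarrow B(x,z,a)$, which follows from A5 applied after reorienting twice with A2 (and A4, A6 handle the analogous shapes for $B_{\le}=B$). The main obstacle is simply the bulk of this bookkeeping --- nominally on the order of $5^3$ zone configurations for a triple --- and I would organise it by proving two auxiliary lemmas first: that on each zone augmented with $\{a,b\}$ the restricted betweenness is the betweenness of the intended local linear order, and that these local orders glue consistently across zone boundaries. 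Given those two facts, both transitivity and $B_{\le}=B$ fall out from a routine five-way split, completing the proof.
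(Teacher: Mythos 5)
Your proposal is correct in outline but takes a genuinely different route from the paper. The paper first \emph{constructs} the order: it enumerates $X$ starting with $a,b$ and shows by induction (via two claims resting on A6) that each new element has a unique admissible insertion point in the finite linear order built so far; the union of these finite orders is $\leq_{a,b}$, and uniqueness falls out of the forced insertions. Only afterwards does it write down a quantifier-free formula $Z(x,y)$ --- organized, much like yours, by the position of $x$ relative to $a$ and $b$ --- and verifies it \emph{semantically} against the already-constructed order, so both directions of the equivalence reduce to easy reasoning about relative positions in an actual linear order. You instead define the order directly by the formula and must verify the linear-order axioms and $B_{\leq}=B$ \emph{syntactically} from A1--A7'. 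Your uniqueness argument (any competing order with $a<'b$ is forced to agree with $\phi$ zone by zone) is clean and arguably tidier than the paper's; your route also avoids the countability of $X$ that the paper's enumeration uses.

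The price is that your auxiliary lemma (1) is where all the real content now sits, and as stated it flirts with circularity: ``the restriction of $B$ to a zone plus its anchor is the betweenness of a linear order'' is a smaller instance of the very proposition being proved, so it cannot simply be invoked --- it must be derived from the axioms. Concretely, within-zone totality is not a consequence of the trichotomy alone: for $x,y\in L$ you must \emph{rule out} $B(x,a,y)$ given $B(x,a,b)\wedge B(y,a,b)$, and this needs a genuine derivation (for instance, split on A7' applied to $(y,b,x)$ and use A5 or A6 in each branch to contradict zone exclusivity). Likewise the cross-zone instances of $B_{\leq}=B$, such as deriving $B(x,y,z)$ when $x\in L$, $y\in M$, $z\in R$, require chaining A4 and A5. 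All of these derivations do go through, so the plan is sound, but they are the heart of the proof rather than ``routine bookkeeping,'' and a complete write-up must present them rather than defer them.
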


\begin{proof}  Let $X,B,a,b$ be as in the statement. Let us enumerate $X$
as $x_{1}=a$, $x_{2}=b$, $x_{3},\dots,x_{n},\dots$ Let $X_{n}:=\{x_{1},\dots,x_{n}\}$\
for $n\geq 3$. Observe that $B\upharpoonright X_{n}$ satisfies properties
A1-A7'. We prove by induction on $n$ the existence and unicity of a linear
order $L_{n}=(X_{n},\leq )$ such that $a<b$ and $B_{L_{n}}=B\upharpoonright
X_{n}$.
\begin{itemize}
\item \emph{Basis}: $n=3.$ The conclusion follows from A7'.

\item  \emph{Induction case}: We assume the conclusion true for $n$.

\begin{claim} If $B(x_{i},x_{n+1},x_{j})$ holds for some $i<j\leq n$,
then, there is a unique pair $k,l$ such that $k<l\leq n$, $%
B(x_{k},x_{n+1},x_{l})$ holds and $[x_{k},x_{l}]_{L}\cap
X_{n+1}=\{x_{k},x_{n+1},x_{l}\}.$
\end{claim}

\begin{proof} Assume that $x_{m}\in \lbrack x_{i},x_{j}]_{L}\cap
X_{n+1}-\{x_{i},x_{n+1},x_{j}\},$ which implies $m\leq n$. Then, by A6, we
have $B(x_{i},x_{n+1},x_{m})$ or $B(x_{m},x_{n+1},x_{j})$ and we can repeat
the argument with $(x_{i},x_{m})$ or $(x_{m},x_{j})$ instead of $(i,j)$.
Furthermore, the considered set, $[x_{i},x_{m}]_{L}\cap X_{n+1}$ or $%
[x_{m},x_{j}]_{L}\cap X_{n+1}$ has less elements than $[x_{i},x_{j}]_{L}\cap
X_{n+1}. $Hence, we must obtain $k,l$ such that $[x_{k},x_{l}]_{L}\cap
X_{n+1}=\{x_{k},x_{n+1},x_{l}\}$ as desired. 
\end{proof}
In this case, there is a unique way to extend $L_{n}$ into $L_{n+1}$: we put 
$x_{n+1}$ between $x_{k}$ and $x_{l}.$ There is another case.

\begin{claim}If $B(x_{i},x_{n+1},x_{j})$ holds for no $i<j\leq n$, then
there is a unique $k$ such that $k\leq n$, $B(x_{l},x_{k},x_{n+1})$ holds
for some $l\leq n$, and $[x_{k},x_{n+1}]_{L}\cap X_{n+1}=\{x_{k},x_{n+1}\}.$
The element $x_{k}$ is \emph{extremal} in $L_{n}$, that is, either maximal
or minimal. \qed
\end{claim}

The proof is similar. In this case, there is a unique way to extend $L_{n}$
into $L_{n+1}$: we put $x_{n+1}$ after $x_{k}$ if it is maximal in $L_{n}$
or before it if it is minimal.
By taking the union of all orders $L_{n}$, we get the desired and unique
linear order on $X$, that we will denote by $\leq _{a,b}.$ We now define it
by a first-order formula.

We first observe a particularly simple case. If there are no $u,v\in X$
such that $B(u,b,v)$ holds, we have $x\leq _{a,b}y\Longleftrightarrow $ $%
x=y\vee y=b\vee B(x,y,b))$. A similar description can be given if there are
no $u,v$ such that $B(u,a,v)$ holds.
From $(X,B,a,b)$ as in the statement, we define the following binary
relation:
\begin{multline*}
Z(x,y):\Longleftrightarrow x\neq y \wedge {}
                             \\
\lbrack (B(x,a,b)\wedge \lnot B(y,x,a))\vee (x=a\wedge \lnot
B(y,a,b))\vee (B(a,x,b)\wedge \lnot B(y,x,b))\vee  {}
  \\
  \qquad(x=b\wedge B(a,b,y))\vee (B(a,b,x)\wedge B(b,x,y))].
\end{multline*}
It is easy to see that $x<_{a,b}y$ implies $Z(x,y)$. In particular, $Z(a,b)$
holds by the clause $x=a\wedge \lnot B(y,a,b)$ with $y=b$.
For the converse, assume that $Z(x,y)$ holds and $x<_{a,b}y$ does not.\
Then, we have $y<_{a,b}x$ because $\leq _{a,b}$ is a linear order. By
looking at the different relative positions of $x,y,a$ and $b$, we get a
contradiction. Hence $x\leq _{a,b}y$ if and only if $x=y\vee Z(x,y)$, which
is expressed by a quantifier-free formula $\xi (a,b,x,y).$
\qedhere
\end{itemize}
\end{proof}

\begin{rem} If there are no $u,v\in X$ such that $B(u,b,v)$ holds, then:
\begin{align*}
 Z(x,y)\Longleftrightarrow x\neq y \wedge \lbrack (B(x,a,b)\wedge
\lnot &B(y,x,a))\vee (x=a\wedge \lnot B(y,a,b))\vee 
\\
&(B(a,x,b)\wedge \lnot B(y,x,b))]
\end{align*}
which is equivalent to $y=b\vee B(x,y,b))$ as one can (painfully)  check by
using axioms A1-A7'.
\end{rem}

\begin{defi}[Quasi-trees]\label{D:5.4}
  \leavevmode
  \begin{enumerate}[label=(\alph*)]

\item A \emph{quasi-tree} is a structure $S=(N,B)$ such that $B$ is a ternary
relation on $N$, called the set of \emph{nodes}, that satisfies conditions
A1-A7 (a definition from \cite{Cou14}). To avoid uninteresting special
cases, we also require that $N$ has at least 3 nodes.

Lemma 11 of \cite{Cou14} shows that in a quasi-tree, the four cases of the
conclusion of A7 are exclusive and that, in the fourth one, there is a
unique node $w$ satisfying $B(x,w,y)\wedge B(y,w,z)\wedge B(x,w,z)$, which
we denote by $M_{S}(x,y,z)$.

A \emph{leaf} (of $S)$ is a node $z$ such that $B(x,z,y)$ holds for no $x,y.$
A \emph{line} is set of nodes $L$ such that $[x,y]_{B}\subseteq L$ if $%
x,y\in L$ and $(L,B\upharpoonright L)$ satisfies A7'. We say that $S$ is 
\emph{discrete} if each set $[x,y]_{B}$ is finite. A quasi-tree $S=(N,B)$ is 
\emph{a subquasi-tree of} a quasi-tree $S^{\prime }=(N^{\prime },B^{\prime
}) $, which we denote by $S\subseteq S^{\prime }$, if $N\subseteq N^{\prime
} $ and $B=B^{\prime }\upharpoonright N$. This condition implies that $%
M_{S}=M_{S^{\prime }}\upharpoonright N$.

\item From a join-tree $J=(N,\leq),$ we define a ternary relation $B_{J}$ on $%
N $ by:
\[
\text{$B_{J}(x,y,z):\Longleftrightarrow x\neq y\neq z\neq x$ and $(x<y\leq x\sqcup
z)\vee(z<y\leq x\sqcup z)$.}
\]
\end{enumerate}
\end{defi}

\begin{prop}\label{P:5.5} \leavevmode
  \begin{enumerate}
    \item The structure $\mathit{qt}(J):=(N,B_{J})$
associated with a join-tree $J=(N,\leq )$ having at least 3 nodes is a
quasi-tree. Every line of $J$ is a line of $\mathit{qt}(J)$. If $J$ is a
subjoin-tree of $J^{\prime }$, then $\mathit{qt}(J)$ is a subquasi-tree of $%
\mathit{qt}(J^{\prime })$.

\item Every quasi-tree $S$ is $\mathit{qt}(J)$ for some join-tree $J$.

\item A quasi-tree is discrete if and only if it is $\mathit{qt}(J)$ for some
rooted tree $J$.
\end{enumerate}
\end{prop}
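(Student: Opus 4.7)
The plan is to prove the three claims in order, with parts (1) and (3) amounting to bookkeeping and part (2) forming the heart of the statement.

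\smallskip

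\noindent\emph{Part (1).} For A1--A6 I would proceed by straightforward case analysis on the positions of the relevant nodes relative to their pairwise joins, exploiting the linear ordering of each upset $[u,+\infty[$ in the join-tree $J$. For A7 on distinct $x,y,z$, I would set $w:=(x\sqcup y)\sqcup z$ and invoke the standard join-tree fact that the three pairwise joins $x\sqcup y,\,y\sqcup z,\,x\sqcup z$ are totally comparable and that two of them coincide with $w$. If $w\in\{x,y,z\}$, then that node lies above the other two and one of the three linear disjuncts $B_J(x,y,z),B_J(x,z,y),B_J(y,x,z)$ holds; otherwise $w\notin\{x,y,z\}$ is a strict upper bound of all three, and the predicates $B_J(x,w,y),B_J(y,w,z),B_J(x,w,z)$ hold simultaneously by definition, witnessing the fourth disjunct. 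For the line claim, convexity of $L$ forces $x\sqcup y\in\{x,y\}\subseteq L$ whenever $x,y\in L$, so $B_J\upharpoonright L$ reduces to the betweenness of the linear order $(L,\leq\upharpoonright L)$ and A7' follows. The subquasi-tree assertion is immediate since $B_J$ depends only on $\leq$ and $\sqcup$, which restrict correctly by hypothesis.

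\smallskip

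\noindent\emph{Part (2).} I would fix any node $r\in N$ to play the role of a maximum and define
\[
  x\leq y \;\Longleftrightarrow\; x=y \;\lor\; y=r \;\lor\; B(x,y,r).
\]
My claim is that $J:=(N,\leq)$ is then a join-tree with $\mathit{qt}(J)=S$. Reflexivity is by definition; antisymmetry follows because $B(x,y,r)\wedge B(y,x,r)$ would yield, via A2, both $B(r,y,x)$ and $B(r,x,y)$, contradicting A3; transitivity comes from applying A5 to the A2-transposes $B(r,y,x),B(r,z,y)$ of the hypotheses $B(x,y,r),B(y,z,r)$, producing $B(r,z,x)$ and hence $x\leq z$. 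I would then show that each upset $[x,+\infty[\,=\{x,r\}\cup\{y:B(x,y,r)\}$ is linearly ordered by applying A6 to any pair $B(x,y,r),B(x,z,r)$. For joins, I would apply A7 to the triple $(y,z,r)$: the three linear cases place $y\sqcup z$ inside $\{y,z,r\}$, and in the meeting-point case the witness $w$ supplied by A7 is a common upper bound whose minimality requires the argument sketched below. Finally, the equality $B_J=B$ would be verified by unfolding both sides on a triple $(x,y,z)$ and matching cases via A6 together with the upset description just established.

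\smallskip

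\noindent\emph{Part (3) and the main obstacle.} Part (3) is then a short corollary. If $J$ is a rooted tree, then $[x,y]_{B_J}\subseteq[x,x\sqcup y]\cup[y,x\sqcup y]$, and both chains are finite by the characterization of rooted trees recalled in Section 1.1, so $\mathit{qt}(J)$ is discrete. Conversely, if $S$ is discrete, the join-tree obtained in part (2) has $r$ as largest element and each interval $[x,r]$ equals $\{x,r\}\cup[x,r]_B$, hence is finite; the same Section 1.1 characterization then forces $J$ to be a rooted tree. The main obstacle lies in part (2), specifically the minimality of the meeting point $w$. Given a competing common upper bound $u\neq w$ with $B(y,u,r)$ and $B(z,u,r)$, applying A6 to $B(y,u,r)\wedge B(y,w,r)$ yields either the desired conclusion $w\leq u$ or the dangerous branch $B(y,u,w)\wedge B(u,w,r)$; in that branch I would combine $B(y,w,z)$ with $B(z,u,w)$ through A5 to extract $B(y,w,u)$, which contradicts A3 against $B(y,u,w)$. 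Keeping the symmetry classes of betweenness triples straight throughout this A5/A6 shuffle is the most error-prone step of the whole proof.
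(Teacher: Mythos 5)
Your parts (2) and (3) follow the paper's route. The order you define, $x\leq y\Leftrightarrow x=y\vee y=r\vee B(x,y,r)$, is exactly the paper's $\leq_r$ given by $y\in[x,r]_B$; the paper delegates the verification that this yields a join-tree with $\mathit{qt}(J)=S$ to Lemma 14 of \cite{Cou14}, whereas you carry it out, and your A2/A3/A5/A6 manipulations for antisymmetry, transitivity and linearity of upsets are sound. Your derivation of (3) from the characterization of rooted trees recalled in Section 1 (largest element plus finite linearly ordered intervals $[x,\max]$, which here equal $[x,r]_B$) is if anything more self-contained than the paper's appeal to Proposition 17 of \cite{Cou14}. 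For part (1) the paper argues quite differently: it closes $\{x,y,z,u\}$ under $\sqcup$ inside a subjoin-tree of at most 7 elements, identifies finite join-trees with finite rooted trees, and invokes Proposition \ref{P:5.2}(b), thereby avoiding any case analysis. You verify the axioms directly, which is legitimate in principle but is where the problem lies.

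Your A7 step contains a genuine error: you take $w:=(x\sqcup y)\sqcup z$, the \emph{maximum} of the three pairwise joins, and claim the linear disjuncts hold exactly when $w\in\{x,y,z\}$, with $w$ the A7 witness otherwise. Both halves fail. Take $J$ with root $y$, a node $v<y$, and incomparable $x,z$ with $x\sqcup z=v<y$. Then $w=y\in\{x,y,z\}$, yet none of $B_J(x,y,z)$, $B_J(x,z,y)$, $B_J(y,x,z)$ holds (the first would need $y\leq x\sqcup z=v$); the fourth disjunct holds with witness $v\notin\{x,y,z\}$. Dually, if $x\sqcup y=v$ and $z\perp v$, then $w=v\sqcup z\notin\{x,y,z\}$ but $B_J(x,w,y)$ is false because $w>x\sqcup y$, so $w$ is not a valid witness. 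The correct pivot is the \emph{minimum} $m$ of the three pairwise joins: when $m\in\{x,y,z\}$ the linear disjunct with $m$ in the middle holds (since the other two pairwise joins lie above $m$), and when $m\notin\{x,y,z\}$ it is $m$, not $w$, that witnesses the fourth disjunct. With that replacement your direct verification goes through; alternatively, the paper's reduction to finite rooted trees sidesteps the issue entirely.
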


\begin{proof} \leavevmode
  \begin{enumerate}[beginpenalty=99]
    \item Let $J=(N,\leq)$ be a join-tree with at least 3 nodes.

If it is finite, then it is $(N_{T},\leq _{T})$ for a finite rooted tree $T$, and $\mathit{qt}(J)$ is a finite quasi-tree by Proposition \ref{P:5.2}(b).

Otherwise consider distinct elements $x,y,z,u$ of $N$. We want to prove
that they satisfy A1-A7. There is a set $N^{\prime }\subseteq N$ of
cardinality at most 7 that contains $x,y,z,u$ and is closed under $\sqcup $.
Then $J^{\prime }=(N^{\prime },\leq \upharpoonright N^{\prime })$ is a
finite join-tree, $J^{\prime }\subseteq J$ and $\mathit{qt}(J^{\prime
})=(N^{\prime },B_{J}\upharpoonright N^{\prime })$ is a quasi-tree by the
initial observation, so that $x,y,z,u$ satisfy A1-A7 for $B=B_{J^{\prime }}$
hence, also for $B_{J}$. (The node $w$ that may be necessary to satisfy A7
may have to be chosen in the set $\{x\sqcup y,x\sqcup z,x\sqcup u,\dots\}$).
As $x,y,z,u$ are arbitrary, A1-A7 hold for $B_{J}$ and all $x,y,z,u\in N.$
Hence, $(N,B_{J})$ is a quasi-tree.

That every line of $J$ is a line of $\mathit{qt}(J)$ follows from the
definitions. (The converse does not hold.) The assertion about
subjoin-trees is also easy to prove.

\item Let $S=(N,B)$ be a quasi-tree and $r$ be any element of $N$. We define  $x\leq _{r}y:\Longleftrightarrow y\in \lbrack x,r]_{B}$. Then $(N,\leq _{r})$ is a join-tree $J$ with root $r$ and $S=\mathit{qt}(J)$
by Lemma~14 of \cite{Cou14}.

\item A quasi-tree $S=\mathit{qt}(J)$ is discrete if $J$ is rooted tree.
Conversely, if $S$ is a discrete quasi-tree, then $S=\mathit{qt}(T)$ for
some tree $T$ by Proposition 17 of \cite{Cou14}. By choosing any node as a
root, one makes $T$ into a rooted tree, and its betweenness relation is
that of~$T$. 
\qedhere
\end{enumerate}
\end{proof}

We say that a quasi-tree $S$ is \emph{described} by an
SJ-scheme if this scheme describes a join-tree $J$ such that $\mathit{qt}(J)=S$. It is \emph{regular} if it is $\mathit{qt}(J)$ for some regular
join-tree $J$.

\bigskip

\begin{prop}\label{P:5.6} 
A quasi-tree is MS$_{\mathit{fin}}$-definable if
it is described by a regular SJ-scheme.

\end{prop}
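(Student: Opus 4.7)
The plan is to apply Theorem~\ref{T:2.1} (backwards translation of MS$_{\mathit{fin}}$ formulas) to convert the MS$_{\mathit{fin}}$-definability of $J$ into that of the quasi-tree $S$. By hypothesis, a regular SJ-scheme $\Delta$ describes a join-tree $J$ with $\mathit{qt}(J)=S$, and Proposition~\ref{P:3.25} yields an MS$_{\mathit{fin}}$ sentence $\beta$ whose unique finite or countable model up to isomorphism is $J$. The bridge back will be a definition scheme $\mathcal{D}$ whose input enriches $(N,B)$ with a subset $A\subseteq N$ and two distinct elements $a_0,a_1\in A$ (designating a candidate axis and a direction on it) and whose output is a binary relation $\leq_{B,A,a_0,a_1}$ intended to be the join-tree order underlying $(N,B)$ with axis $A$.

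The formula $\theta_{\leq}$ defining $\leq_{B,A,a_0,a_1}$ is built in two stages. First, Proposition~\ref{P:5.3} supplies a quantifier-free formula defining the linear order $\leq_A$ on $A$ from $B\upharpoonright A$ and the direction $a_0<_A a_1$. Second, for $a\in A$ and $x\in N$, the predicate ``$a\geq x$ in the intended join-tree'' is captured by the first-order condition
\[
\forall a'\in A.\,\bigl(a'>_A a\;\Longrightarrow\;a\in[x,a']_B\bigr),
\]
as one verifies by case analysis of the position of $a$ on the $B$-path $[x,a']$, the ``downward'' alternative $a\geq a'$ being ruled out by $a'>_A a$ together with the upwards closure of the axis. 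Setting $x\leq y$ iff $x=y$ or there exists $a\in A$ with $a\geq x$ (in the above sense) and $y\in[x,a]_B$ then produces a first-order formula $\theta_{\leq}(x,y,A,a_0,a_1)$; one checks that when $(A,a_0,a_1)$ genuinely arise from the axis and a direction of a valid structuring of some join-tree on $(N,B)$, this $\leq$ is the corresponding join-tree order and its betweenness is $B$.

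With $\mathcal{D}=\langle\chi,\delta,\theta_{\leq}\rangle$ in hand (where $\chi$ asserts ``$(N,B)$ is a quasi-tree with at least three nodes, $A$ is a line, and $a_0\neq a_1$ lie in $A$'' and $\delta$ is trivially true), Theorem~\ref{T:2.1} produces an MS$_{\mathit{fin}}$ formula $\beta^{\mathcal{D}}(A,a_0,a_1)$, and we set
\[
\beta'\;:=\;\chi_0\wedge\bigl(\exists A\,\exists a_0\,\exists a_1.\,\beta^{\mathcal{D}}(A,a_0,a_1)\;\vee\;\exists r.\,\beta^{\mathcal{D}_r}(r)\bigr),
\]
where $\chi_0$ expresses that $(N,B)$ is a quasi-tree with at least three nodes and the second disjunct handles the degenerate case where the axis $A_J$ of $J$ is a singleton via the simpler scheme $\mathcal{D}_r$ sending $(N,B,r)$ to $(N,\leq_r)$ with $x\leq_r y\iff y\in[x,r]_B$, available by Proposition~\ref{P:5.5}(2). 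The forward direction is immediate: a witnessing tuple makes the constructed $(N,\leq)\simeq J$, whence $(N,B)=\mathit{qt}(N,\leq)\simeq\mathit{qt}(J)=S$. For the backward direction, an isomorphism $f\colon S\to(N,B)$ allows one to pull back the axis $A_J$ of the structuring of $J$ specified by $\Delta$ and any two distinct points of $A_J$ (or the root $r$ of $J$, in the degenerate case) through $f$, making the appropriate disjunct true.

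The principal obstacle is justifying the correctness of $\theta_{\leq}$: one must verify that the universal characterization of ``$a\geq x$'' coincides with the join-tree relation whenever $(A,a_0,a_1)$ are valid axis data, and, conversely, that when $\beta^{\mathcal{D}}$ holds the resulting $\leq$ does have betweenness exactly~$B$. Both reduce to a careful but routine analysis of paths in join-trees, using upwards closure of the axis and the fact that the upward path from any node enters the axis at a unique entry point. All other steps are straightforward assemblies of Proposition~\ref{P:3.25} and Theorem~\ref{T:2.1}.
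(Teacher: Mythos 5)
Your overall strategy is the paper's: use Proposition~\ref{P:3.25} to get an MS$_{\mathit{fin}}$ sentence $\beta$ characterizing the join-tree $J$ with $\mathit{qt}(J)=S$, define the order of $J$ inside $(N,B)$ by a first-order formula with parameters (a line $L$ and two points on it, ordered via Proposition~\ref{P:5.3}), treat the rooted/degenerate case separately with $x\leq_r y\Leftrightarrow y\in[x,r]_B$, substitute into $\beta$ and existentially quantify the parameters. Your first-order definition of $\leq$ differs cosmetically from the paper's $\mu_2$ (the paper uses $u<v\Leftrightarrow\exists x,y\in L[x<y\wedge B(u,v,x)\wedge B(v,x,y)]$ rather than your universal characterization of \textqt{$a\geq x$}), but both work for the intended parameters.

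There is, however, one genuine gap. Your sentence $\beta'$ asserts only that \emph{some} choice of $(A,a_0,a_1)$ makes the defined order $\leq$ isomorphic to $J$; it does not assert that $B$ is the betweenness relation of that order. Your \textqt{immediate} step $(N,B)=\mathit{qt}(N,\leq)$ is exactly what is missing: a structure $(N,B)$ could a priori satisfy $\beta^{\mathcal{D}}(A,a_0,a_1)$ for some unintended witness --- a line $A$ that is not the upwards-closed axis of any join-tree structure compatible with $B$ --- for which the defined $\leq$ happens to be isomorphic to $J$ while $B$ differs from $B_{(N,\leq)}$; in that case $(N,B)\not\simeq S$ yet $(N,B)\models\beta'$. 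Your proposed verification (\textqt{using upwards closure of the axis and the fact that the upward path from any node enters the axis at a unique entry point}) presupposes precisely the properties that an arbitrary witness need not have, so it does not discharge this obligation; note that for the rooted disjunct the identity $B=B_{\leq_r}$ \emph{is} automatic by Proposition~\ref{P:5.5}(2), but no such general fact is available for an arbitrary line $A$ without a $<_A$-maximum. The paper closes this hole by conjoining to the sentence the explicit first-order condition \textqt{$B$ is the betweenness relation of the order $\leq$ defined by $\mu'$} (expressible since $\sqcup$ and hence $B_{\leq}$ are first-order definable from $\theta_{\leq}$), after which the backward direction really is immediate. You should add this conjunct; with it, your argument goes through and coincides in substance with the paper's proof.
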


\begin{proof} We first prove a technical result.

\begin{claim} There exists a first-order formula $\mu (L,a,b,u,v)$ such
that, for every join-tree $J=(N,\leq ),$ if $S=\mathit{qt}(J)=(N,B),$ then
there is a subset $L$ of $N$ and elements $a,b$ of $N$ such that, for every $%
u,v\in N$, $(N,B)\models \mu (L,a,b,u,v)$ if and only if $u\leq v$.
\end{claim}

\begin{proof}[Proof of the claim] The formula $\mu (L,a,b,u,v)$ will be defined as $u=v\vee \mu
_{1}(L,a,b,u,v)\vee \mu _{2}(L,a,b,u,v)$ so as to handle two exclusive cases
relative to $J=(N,\leq )$.
\begin{itemize}
\item Case $J=(N,\leq )$ has a root $r$. Then, $u<v$ if and only if $%
v=r$ or $B(u,v,r)$. Hence, we define $\mu _{1}(L,a,b,u,v)$ as $L=\emptyset
\wedge a=b\wedge (v=b\vee B(u,v,b)).$

\item Case $J$ has no root. It has a line $L$ that is \emph{upwards
closed}, i.e., such that $y\in L$ if $x\leq y$ and $x\in L$. This
line has no maximal element (otherwise its maximal element would be a root
of $J$) and is infinite. Moreover, for every $u\in N$, we have $u<x$ for
some $x\in L$ (to prove that, consider $u\sqcup y$ where $y$ is any element
of $L$).
For all $u,v\in N$ we have:
\[
\text{$u<v\Longleftrightarrow \exists x,y\in L[x<y\wedge B(u,v,x)\wedge B(v,x,y)].$}
\]
If $u<v$ we have $u<v<x<y$ for some $x,y$ in $L$. Hence, we have $%
B(u,v,x)\wedge B(v,x,y).$
Assume for the converse that $x<y\wedge B(u,v,x)\wedge B(v,x,y)$ for some $%
x,y$ in $L$. We first prove that $u,v<x$. Since $B=B_{J}$, $%
B(v,x,y)\Longleftrightarrow $ $v<x\leq v\sqcup y\vee y<x\leq y\sqcup v$. As $%
x<y$, we must have $v<x$. Axiom A4 gives $B(u,x,y),$ from which we get
similarly $u<x$. From $B(u,v,x)$ we get $u<v\leq u\sqcup x$ or $x<v\leq
x\sqcup v$. As $v<x$, we have $u<v$.
Let $a,b\in L$ such that $a<b$. Proposition \ref{P:5.3} is applicable to $%
(L,B\upharpoonright L)$ that satisfies Conditions A1-A7'. Hence the
quantifier-free formula $\xi (a,b,x,y)$ defines $x<_{a,b}y$  for $x,y\in L$.
We define $\mu _{2}(L,a,b,u,v)$ as $\exists x,y\in L[a,b\in L\wedge \xi
(a,b,x,y)\wedge B(u,v,x)\wedge B(v,x,y)].$

\end{itemize}

We now complete the proof. If $J=(N,\leq )$ has a root $r$, we choose $%
L=\emptyset \wedge a=b=r$, $\mu _{2}(L,a,b,u,v)$ is false and $\mu
_{1}(L,a,b,u,v)$ is equivalent to $u<v$. If $J$ has no root, we let $L$ be
an upwards closed line, and $a,b\in L$ such that $a<b$. Then $\mu
_{1}(L,a,b,u,v)$ is false and $\mu _{2}(L,a,b,u,v)$ is equivalent to $u<v$.
\end{proof}
We let  $\mu ^{\prime }(L,a,b,u,v)$ be $u=v\vee \mu (L,a,b,u,v)$.
For proving the main assertion, we let $S=\mathit{qt}(J)$ be a quasi-tree
defined from a regular join-tree $J$ and $\psi $ be the MS$_{fin}$ sentence
expressing that a structure $(N,\leq )$ is a join-tree isomorphic to $J$; this
sentence exists by Theorem \ref{T:3.30}. Let $\varphi $ be the MS$_{fin}$
sentence expressing that a structure $(D,B)$ is a quasi-tree that satisfies $%
\exists L,a,b(\psi ^{\prime }\wedge $\textqt{$B$ is the betweenness relation of
  the order relation $\leq $ defined by $\mu ^{\prime }$}$)$ where $\psi
^{\prime } $ is obtained from $\psi $ by replacing each atomic formula $x\leq y$
by $%
\mu ^{\prime }(L,a,b,x,y)$.

Then $\mathit{qt}(J)$ satisfies $\varphi $ by the claim. If conversely, $%
(D,B)$ satisfies $\varphi $ for some $L,a$ and $b$, then $(D,\leq )$ is a
join-tree $J^{\prime }$, where $x\leq y$ is defined in $(D,B)$ by $\mu
^{\prime }(L,a,b,x,y)$ (because $\psi ^{\prime }$ holds), $(D,B)=\mathit{qt}%
(J^{\prime })$ (because $B$ is the betweenness relation of $\leq $) and $%
J^{\prime }\simeq J$ (because of $\psi ^{\prime }$). Hence, $(D,B)\simeq 
\mathit{qt}(J)$. Hence $\mathit{qt}(J)$ is characterized by $\varphi $ up to
isomorphism. 
\end{proof}

The next theorem establishes a converse. As algebra for quasi-trees, we take
the algebra $\mathbb{SJT}$ of join-trees together with the (external) \emph{operation} $\mathit{qt}$ (similar to $\mathit{fgs}$) that makes a join-tree
into a quasi-tree.

\begin{thm}\label{T:5.7} The following properties of a quasi-tree $S$ are
equivalent:
\begin{enumerate}
\item  $S$ is regular,

\item  $S$ is described by a regular SJ-scheme,

\item $S$ is MS$_{\mathit{fin}}$ definable.
\end{enumerate}
Furthermore, the isomorphism problem of regular quasi-trees is decidable.
\end{thm}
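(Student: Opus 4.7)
The plan is to follow the same three-step cycle as in Theorems \ref{T:3.21} and \ref{T:3.30}, with (1)$\Rightarrow$(2) and (2)$\Rightarrow$(3) being essentially direct consequences of what has already been established, and only (3)$\Rightarrow$(1) carrying the real content. For (1)$\Rightarrow$(2), if $S$ is regular then, by the definition recalled just before the theorem, $S=\mathit{qt}(J)$ for some regular join-tree $J$; Theorem \ref{T:3.30} yields a regular SJ-scheme $\Delta$ describing $J$, and by definition $\Delta$ then describes $S$. The implication (2)$\Rightarrow$(3) is exactly Proposition \ref{P:5.6}.

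For (3)$\Rightarrow$(1), I would consider the map $\alpha$ sending a term $t\in T^{\infty}(F^{\prime })_{\boldsymbol{t}}$ to the quasi-tree $\mathit{qt}(\mathit{fgs}(val(t)))$. This map is an MS-transduction because it factors through $t\mapsto \mathit{fgs}(val(t))=(N,\leq )$, which is an MS-transduction by Definition \ref{D:3.28} (exactly as in the proof of Theorem \ref{T:3.30}), and the betweenness relation $B_{J}$ is first-order definable from $\leq $ via Definition \ref{D:5.4}(b), since $\sqcup $ is itself first-order definable in any join-tree. Now let $S$ be MS$_{\mathit{fin}}$-definable, the unique model up to isomorphism of a sentence $\beta $. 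By Theorem \ref{T:2.1}, the property $\alpha (\lfloor t\rfloor )\models \beta $ is expressed on $\lfloor t\rfloor $ by some MS$_{\mathit{fin}}$ sentence $\beta ^{\mathcal{D}}$; and since the structures $\lfloor t\rfloor $ carry an MS-definable linear order (recalled in Section~2), the finiteness predicate can be eliminated in favour of an MS formula. The resulting MS-definable set of terms contains a regular term $t_{0}$ by Rabin's theorem, and then $J:=\mathit{fgs}(val(t_{0}))$ is a regular join-tree with $\mathit{qt}(J)\simeq S$, so $S$ is regular.

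For the isomorphism problem, the argument parallels Corollaries \ref{C:3.22} and \ref{C:3.31}: by the effective character of the three equivalences, any regular quasi-tree can be presented either by a regular term (for its underlying join-tree), by a regular SJ-scheme, or by a characterizing MS$_{\mathit{fin}}$ sentence; to test $S_{1}\simeq S_{2}$ one takes a characterizing sentence $\beta $ for $S_{1}$ and a regular term $t$ with $S_{2}\simeq \mathit{qt}(\mathit{fgs}(val(t)))$, and then checks $\lfloor t\rfloor \models \beta ^{\mathcal{D}}$, which is decidable by Rabin's theorem. The main obstacle, as in the previous two theorems, is setting up the MS-transduction $\alpha $ correctly; once this is in place every other step reduces to invoking already-established results. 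The only genuinely new verification required is that the betweenness relation is first-order definable from the join-tree order, which is immediate from Definition \ref{D:5.4}(b).
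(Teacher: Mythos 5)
Your proposal is correct and follows essentially the same route as the paper: (2)$\Rightarrow$(3) is Proposition~\ref{P:5.6}, (3)$\Rightarrow$(1) uses the MS-transduction $t\mapsto \mathit{qt}(\mathit{fgs}(val(t)))$ (with $B_J$ first-order definable from $\leq$ by Definition~\ref{D:5.4}(b)), the elimination of $Fin$ via the MS-definable linear order on $\lfloor t\rfloor$, and Rabin's theorem, exactly as in the paper's reduction to the argument of Theorems~\ref{T:3.21} and~\ref{T:3.30}. Your (1)$\Rightarrow$(2), obtained by unwinding the definitions of \textqt{regular} and \textqt{described by a scheme} for quasi-trees and invoking Theorem~\ref{T:3.30}, is just a more explicit rendering of the paper's \textqt{similar to Theorem~\ref{T:3.21}}, and the decidability argument matches Corollary~\ref{C:3.22}.
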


\begin{proof} \leavevmode %
\begin{enumerate}[]
\item[(1)]$\!\!\Rightarrow $(2): The proof is similar to that of
Theorem \ref{T:3.21}.

\item[(2)]$\!\!\Rightarrow$(3): By Proposition \ref{P:5.6}.

\item[(3)]$\!\!\Rightarrow $(1): The mapping $\alpha $ that transforms the
relational structure $\left\lfloor t\right\rfloor $ for $t$ in $T^{\infty
}(F^{\prime })_{\boldsymbol{t}}$ into the quasi-tree $S=\mathit{qt}(\mathit{%
fgs}(val(t)))$ is an MS-transduction by Definitions  \ref{D:4.9} (cf. the claim) and \ref{D:5.4}(b). The proof continues as in Theorem \ref{T:3.21}.
\end{enumerate}
The decidability of the isomorphism problem is as in Corollary \ref{C:3.22}. 
\end{proof}

We make these results more precise for subcubic quasi-trees: they are useful
for defining the rank-width of countable graphs, as we will recall.

\begin{defi}[Directions]\label{D:5.8}

Let $S=(N,B)$ be a quasi-tree and $x$ a node of $S$.
\begin{enumerate}[label=(\alph*)]

\item We say that $y,z\in N-\{x\}$ are the \emph{same direction relative to} $x
$ (or \emph{of} $x$) if, either $y=z$ or $B(y,z,x)$ or $B(z,y,x)$ or $%
B(y,u,x)\wedge B(z,u,x)$ for some node $u$ (a definition from \cite%
{Cou14}). Equivalently, $y\sqcup _{x}z<_{x}x$ ($<_{x}$ is as in Proposition  \ref{P:5.5}(2)). Hence, if $B(y,x,z)$ holds, then $y$ and $z$ are in different
directions relative to $x$. This relation is an equivalence, denoted by $%
y\sim _{x}z$, and its classes are the \emph{directions of} $x$.

\item The \emph{degree} of $x$ is the number of classes of $\sim _{x}$. A node
has degree 1 if and only if it is a leaf. We say that $S$ is \emph{subcubic}
if its nodes have degree at most 3. If $S=\mathit{qt}(T)$ for a tree $T$,
then a direction of $x$ is associated with each neighbour $y$ of $x$ and is
the set of nodes of the connected component of $T-\{x\}$ that contains $y$.

\item If $S=\mathit{qt}(J)$ for a join tree $J=(N,\leq )$, then, the
directions of $x$ in $S$ are those of $x$ in $J$ together with the set $%
N- \mathopen]-\infty ,x]$ if it is not empty. It follows that $S$ is subcubic if $J$\
is a BJ-tree.

\end{enumerate}
\end{defi}

\begin{lem}\label{L:5.9} Every subcubic quasi-tree is $\mathit{qt}(\mathit{fgs}%
(J))$ for some SBJ-tree $J$.
\end{lem}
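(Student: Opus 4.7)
The plan is to construct a BJ-tree $J'$ such that $\mathit{qt}(J') = S$ and then invoke Proposition~\ref{P:3.9}(1) to equip $J'$ with a structuring, producing the required SBJ-tree~$J$. By Definition~\ref{D:5.8}(c), if $S = \mathit{qt}(J')$, then every non-root node of $J'$ has one more direction in $S$ than in $J'$ (the \textqt{upward} direction $N \setminus \mathopen]{-\infty}, x\mathclose]$), while the root's $S$-degree equals its $J'$-degree. To make $J'$ binary I therefore need $J'$ to be either rooted at a node of $S$-degree at most $2$, or else rootless; the proof splits accordingly.

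\emph{Case 1: some $r \in N$ has $S$-degree at most $2$.} Take $J' := (N, \leq_r)$ via Proposition~\ref{P:5.5}(2), so $\mathit{qt}(J') = S$. Every $x \neq r$ has $J'$-degree equal to its $S$-degree minus one, hence at most $2$, and $r$ itself has $J'$-degree equal to its $S$-degree, also at most $2$. Thus $J'$ is a BJ-tree and Proposition~\ref{P:3.9}(1) supplies the structuring.

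\emph{Case 2: every node of $S$ has degree exactly $3$.} I would construct a doubly-unbounded line in $S$ to serve as the axis of a rootless BJ-tree. Starting from any distinct $r_0, r_1 \in N$, I inductively pick $r_{n+1}$ in one of the two directions of $r_n$ that do not contain $r_{n-1}$ (such directions exist since $r_n$ has three directions in $S$). Axiom~A4 of Proposition~\ref{P:5.2} applied iteratively to the consecutive betweenness triples propagates $B(r_i, r_j, r_k)$ to all $i < j < k$, so $A := \{r_i : i \in \mathbb{Z}\}$ (extended symmetrically for negative indices) is a line in~$S$. Define $x \leq y$ iff $x = y$ or $B(x, y, r_i)$ holds for all sufficiently large~$i$. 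Transitivity follows from the implication $B(x, y, r_i) \wedge B(y, z, r_i) \Rightarrow B(x, z, r_i)$, obtained by converting both premises to the form $B(r_i, \cdot, \cdot)$ via axiom~A2 and then applying~A5; antisymmetry uses~A3. The upward set $\mathopen[x, {+\infty}\mathclose[_{\leq}$ is then the $S$-line from $x$ to its branching point on~$A$ followed by the positive ray of~$A$, hence linearly ordered, and any two nodes join at the first common point reached along these paths. Since $A$ has no maximum, $(N, \leq)$ is a rootless join-tree in which the \textqt{positive} direction (toward $+\infty$ along~$A$) accounts for one of each node's three $S$-directions, leaving at most $2$ in~$J'$.

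In both cases, checking $\mathit{qt}(J') = S$ amounts to showing $B_{J'} = B$, which I would verify using axioms A4--A6 applied to triples of the form $B(x, y, r_i)$, $B(x, z, r_i)$, etc.; then Proposition~\ref{P:3.9}(1) yields the SBJ-tree~$J$. I expect Case~2 to be the main obstacle: rigorously establishing that $(N, \leq)$ is a join-tree (existence and uniqueness of joins, linearity of upward sets) and that its derived betweenness coincides with $B$ requires careful case-analysis of the relative positions of $x, y$ and their branching points on~$A$, relying throughout on the quasi-tree axioms of Proposition~\ref{P:5.2}.
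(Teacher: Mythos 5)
Your overall strategy --- reorient $S$ along a low-degree node or a two-way unbounded line so that the induced join-tree is binary, then invoke Proposition~\ref{P:3.9}(1) --- is the same idea as the paper's, and your Case~1 is complete and correct. The genuine gap is in Case~2, which is the only hard case: you defer exactly the content that constitutes the proof there, namely that the relation \textqt{$B(x,y,r_i)$ for all sufficiently large $i$} is a join-tree order (in particular, that every node of $S$ projects onto your ray, so that upward sets are nonempty chains and any two nodes have a join) and that the betweenness relation derived from this order is $B$. Your verifications of transitivity, antisymmetry and linearity of upward sets via A2--A6 are fine as far as they go, but existence of joins and $B_{J'}=B$ are the substance of the lemma in the rootless case and are not routine. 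A smaller but real slip: $A:=\{r_i\mid i\in\mathbb{Z}\}$ is not a line in the sense of Definition~\ref{D:5.4}(a), since lines must satisfy $[x,y]_B\subseteq L$; you need $\bigcup_i [r_i,r_{i+1}]_B$.

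The paper avoids rebuilding the order from scratch. It first obtains \emph{some} join-tree $J$ with $\mathit{qt}(J)=S$ from Proposition~\ref{P:5.5}(2) (which rests on Lemma~14 of \cite{Cou14}), so the identity between $B$ and the derived betweenness never has to be reproved; it then chooses a \emph{maximal} line $L$ of $J$ with $a<b$ in $L$, uses the formula $\mu'$ from the claim inside Proposition~\ref{P:5.6} to realize the orientation in which $L$ is upwards closed, and structures the result by the method of Proposition~\ref{P:3.5} with $U_0:=L$. Maximality of $L$ is what replaces your case distinction: an extremal element of a maximal line is forced to have degree at most~$1$ (compare the remark following Proposition~\ref{P:3.5}), so every node except possibly such a leaf has one of its at most three $S$-directions absorbed by its upward set (Definition~\ref{D:5.8}(c)), and binarity follows uniformly without treating the all-degree-$3$ situation separately. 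To repair your write-up, either carry out the postponed verifications in Case~2 in full, or reduce Case~2 to the already-proved Propositions~\ref{P:5.5} and~\ref{P:5.6} as the paper does.
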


\begin{proof} Let $S$ be a subcubic quasi-tree. Then $S=\mathit{qt}(J)$
for some join-tree $J$. We choose a maximal line $L$ of $J$ and $a,b\in L$
such that $a<b.$ By Proposition \ref{P:5.6}, the partial order $\leq $ of $J$ is
defined by $\mu ^{\prime }(L,a,b,x,y)$. The method of Proposition \ref{P:3.5} with 
$U_{0}:=L$, gives structuring $K$ of $J$, making it into an SBJ-tree as
defined in Definition \ref{D:3.8}.
\end{proof}

\begin{thm}\label{T:5.10} The following properties of a subcubic quasi-tree $%
S $ are equivalent~:
\begin{enumerate}
\item   $S$ is regular,

\item  $S$ is described by a regular SBJ-scheme,

\item  $S$ is MS definable.
\end{enumerate}
\end{thm}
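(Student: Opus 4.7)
The plan is to follow the proof of Theorem~\ref{T:5.7}, substituting Theorem~\ref{T:3.21} for Theorem~\ref{T:3.30}, SBJ-schemes for SJ-schemes, and BJ-trees for join-trees, with ordinary MS replacing MS$_{\mathit{fin}}$ throughout. The new ingredient is Lemma~\ref{L:5.9}, which ensures that every subcubic quasi-tree arises from an SBJ-tree. I would establish the cycle $(2)\Rightarrow(3)\Rightarrow(1)\Rightarrow(2)$. For $(2)\Rightarrow(3)$: given a regular SBJ-scheme $\Delta$ describing a BJ-tree $J$ with $\mathit{qt}(J)=S$, Theorem~\ref{T:3.21} supplies an ordinary MS sentence $\psi$ characterizing $J$ up to isomorphism (no finiteness predicate, because BJ-trees have degree at most $2$). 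I would then repeat verbatim the construction in the proof of Proposition~\ref{P:5.6}: guess existentially a line $L$ with endpoints $a,b$ such that the first-order formula $\mu'(L,a,b,u,v)$ built from the betweenness $B$ defines an order $\leq$ on the domain, require $\psi$ with each atom $x\leq y$ replaced by $\mu'(L,a,b,x,y)$ to hold, and require $B$ to be the betweenness of this $\leq$. Because $\psi$ is plain MS, the resulting sentence defining $S$ is also plain MS.

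For $(3)\Rightarrow(1)$ I mirror the corresponding step of Theorem~\ref{T:3.21}. Letting $F:=\{\bullet,ext,\Omega\}$ be the SBJ-tree signature of Definition~\ref{D:3.14}, the map $\alpha$ sending $\lfloor t\rfloor$ for $t\in T^{\infty}(F)$ to $\mathit{qt}(\mathit{fgs}(val(t)))$ is an MS-transduction by Definitions~\ref{D:3.15} and \ref{D:5.4}(b): $\leq$ is MS-definable from $\lfloor t\rfloor$ and $B$ is first-order definable from $\leq$. Proposition~\ref{P:3.19} combined with Lemma~\ref{L:5.9} shows that every subcubic quasi-tree lies in the image of $\alpha$. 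If $S$ is MS-definable by a sentence $\beta$, Theorem~\ref{T:2.1} yields an MS formula defining the nonempty set of $t\in T^{\infty}(F)$ with $\alpha(\lfloor t\rfloor)\simeq S$; Rabin's theorem then produces a regular term $t$ in this set, so $S=\mathit{qt}(J)$ for the regular BJ-tree $J:=\mathit{fgs}(val(t))$, whence $S$ is a regular quasi-tree.

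For $(1)\Rightarrow(2)$ I route through $(3)$. If $S=\mathit{qt}(J')$ for a regular join-tree $J'$, Definition~\ref{D:5.8}(c) shows that every node of $J'$ has degree at most $3$ in $J'$, the extra direction appearing in $S$ being $N-\mathopen{]}-\infty,x]$. Hence, by the footnote to Proposition~\ref{P:3.25}, the sentence from Theorem~\ref{T:3.30} characterizing $J'$ can be taken to be plain MS, and running the argument of $(2)\Rightarrow(3)$ with $J'$ replacing $J$ shows $S$ is MS-definable, i.e., $(3)$ holds. Invoking $(3)\Rightarrow(1)$ above now produces a regular BJ-tree $J$ with $\mathit{qt}(J)=S$, and Theorem~\ref{T:3.21}$(1)\Rightarrow(2)$ describes $J$ by a regular SBJ-scheme, which by definition describes $S$. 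The main obstacle is the bookkeeping needed to check that the finiteness predicate is dispensable throughout: specifically, that the translation in the proof of Proposition~\ref{P:5.6} preserves the MS vs.\ MS$_{\mathit{fin}}$ distinction when the underlying join-tree sentence is plain MS, and that the bounded-degree remark in $(1)\Rightarrow(3)$ indeed legitimizes the appeal to Proposition~\ref{P:3.25}'s footnote.
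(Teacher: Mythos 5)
Your proposal is correct, and for the implications $(2)\Rightarrow(3)$ and $(3)\Rightarrow(1)$ it is essentially the paper's own argument: the paper likewise relays the corresponding implications of Theorem~\ref{T:3.21} through the construction of Proposition~\ref{P:5.6} (guessing $L,a,b$ and rewriting $\leq$ via $\mu'$) and through the MS-transduction $\left\lfloor t\right\rfloor \mapsto \mathit{qt}(\mathit{fgs}(val(t)))$ on $T^{\infty}(F)$ combined with Lemma~\ref{L:5.9} and Proposition~\ref{P:3.19}, introducing an auxiliary property (1') (``$S$ comes from a regular term over $F$'') exactly where you silently pass from a regular BJ-tree to a regular join-tree. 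Where you genuinely diverge is in reconnecting property (1) --- regularity witnessed by an arbitrary regular join-tree $J'$ over $F'$ --- to the other two. The paper invokes Proposition~\ref{P:5.6} to get MS$_{\mathit{fin}}$-definability of $S$ and then discharges the finiteness predicate by arguing that $S$ inherits an MS-definable linear order from the term structure; you instead note that subcubicity forces every node of $J'$ to have degree at most $3$ (Definition~\ref{D:5.8}(c)), so the footnote to Proposition~\ref{P:3.25} already gives a plain MS sentence for $J'$, which the first-order formula $\mu'$ then transports to a plain MS sentence for $S$. Your route buys a cleaner separation: the MS versus MS$_{\mathit{fin}}$ bookkeeping is settled entirely inside the join-tree world by the bounded-degree observation, with no need to transfer a linear order to the quasi-tree. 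The only point you leave implicit is the trivial implication the paper records as $(1')\Rightarrow(1)$, namely that a regular term over $F$ yields a regular term over $F'$ (via the regularity-preserving translation $ext(s)\mapsto ext(\mathit{mkf}(s))$, $\Omega\mapsto\Omega_{\boldsymbol{t}}$); stating it would make your $(3)\Rightarrow(1)$ step fully explicit.
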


\begin{proof} 
By Lemma \ref{L:5.9} and Proposition \ref{P:3.19}, every subcubic
quasi-tree $S$ is $\mathit{qt}(\mathit{fgs}(val(t)))$ for some term $t\in
T^{\infty}(F)$.

Property (1) means that $S=\mathit{qt}(\mathit{fgs}(val(t)))$ for some
regular term in $T^{\infty}(F^{\prime})_{\boldsymbol{t}}$. Let (1') mean
that $S=\mathit{qt}(\mathit{fgs}(val(t)))$ for some regular term in $%
T^{\infty}(F)$. Then (1')$\Longrightarrow$(2) by the similar implication in
Theorem \ref{T:3.21}.

(2)$\Longrightarrow$(3) by the similar implication in Theorem \ref{T:3.21} and the
observation that, in a quasi-tree $S$, the SBJ-trees $J$ such that $S=%
\mathit{qt}(\mathit{fgs}(J))$ can be specified by MS formulas in terms of a
5-tuple $(A,N_{0},N_{1},a,b)$ satisfying the formula $\varphi^{%
\prime}(A,N_{0},N_{1},a,b)$ of the proof of 
Proposition \ref{P:5.6}.

(3)$\Longrightarrow$(1') by the observation that the mapping $\alpha$ that
transforms the relational structure $\left\lfloor t\right\rfloor $ for $t$
in $T^{\infty}(F)$ into the subcubic quasi-tree $\mathit{qt}(\mathit{fgs}%
(val(t)))$ is an MS-transduction. The proof goes then as in Theorem \ref{T:3.21}.

The implication (1')$\Longrightarrow $(1) is trivial and (1) implies that $S$
is MS$_{\mathit{fin}}$ definable by Proposition \ref{P:5.6}. But a term $t\in
T^{\infty }(F)$ that defines $S$ is MS definable, and the relational
structure representing a term has an MS definable linear order. It follows
that $S$ has an MS definable linear order, hence that $S$ is MS definable by
the facts recalled in Section 2. 
\end{proof}

\bigskip

We now review the use of quasi-trees for rank-width \cite{Cou14}, a width
measure first defined and investigated in \cite{Oum} and \cite{OumSey} for
finite graphs.

\bigskip 

\begin{defi}[Rank-width for countable graphs]\label{D:5.11}

We consider finite or countable, loop-free, undirected graphs without
parallel edges. The \emph{adjacency matrix} of such a graph $G$ is $%
M_{G}:V_{G}\times V_{G}\rightarrow \{0,1\}$ with $M_{G}[x,y]=1$ if and only
if $x$ and $y$ are adjacent. If $U$ and $W$ are disjoint sets of vertices, $%
M_{G}[U,W]$ is the matrix that is the restriction of $M_{G}$ to $U\times W$%
. The \emph{rank } (over $GF(2)$) of $M_{G}[U,W]$ is the maximum
cardinality of an independent set of rows (equivalently, of columns) and is
denoted by $rk(M_{G}[U,W]);$ it belongs to $\mathbb{N}\cup \{\omega \}$. We
take $rk(M_{G}[\emptyset ,W])=rk(M_{G}[U,\emptyset ]):=0.$ If $X\uplus Y$ is
infinite, then $rk(M_{G}[X,Y])=\sup \{rk(M_{G}[U,W])\mid U\subseteq
X,W\subseteq Y$ and, $U$ and $W$ are finite\}.

A \emph{discrete layout} of a graph $G$ is an unrooted tree $T$ of maximal
degree 3 whose set of leaves is $V_{G}$. Its \emph{rank} is the least
upper-bound of the ranks $rk(M_{G}[X\cap V_{G},X^{c}\cap V_{G}])$ such that $%
X$ and $X^{c}:=N_{T}-X$ are the two connected components of $T$ minus one
edge. The \emph{discrete rank-width} of $G,$ denoted by $rwd^{dis}(G),$ is
the smallest rank of a discrete layout. If $G$ is finite, this value is the
rank-width defined in \cite{Oum}. By using for countable graphs $G$
quasi-trees with nodes of maximal degree 3 instead of trees, one obtains
their \emph{rank-with} $rwd(G)$ (cf.~\cite{Cou14} for details). We have $%
rwd(G)\leq rwd^{dis}(G)$. The notation $G\subseteq _{i}H$ means that $G$ is
an induced subgraph of $H$.

\end{defi}

\begin{thm}[cf.\ \cite{Cou14}]\label{T:5.12} For every graph $G$:
\begin{enumerate}[beginpenalty=99]
\item  if $H\subseteq_{i}G$, then $rwd(H)\leq rwd(G)$ and $rwd^{dis}(H)\leq
rwd^{dis}(G)$,

\item    \emph{Compactness}: $rwd(G)=\mathrm{Sup}\{rwd(H)\mid H\subseteq_{i}G$
and $H$ is finite$\},$

\item    \emph{Compactness with gap }: $rwd^{dis}(G)\leq2\cdot\mathrm{Sup}%
\{rwd(H)\mid H\subseteq_{i}G$ and $H$ is finite$\}.$
\end{enumerate}

\end{thm}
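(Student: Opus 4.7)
The plan is to establish the three parts in order, essentially following the argument of \cite{Cou14}.

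For (1), restriction is the common tool. Given a discrete or quasi-tree layout of $G$, I restrict to $V_H$: in the discrete case, delete the leaves outside $V_H$ and smooth the resulting degree-two internal nodes; in the quasi-tree case, take the sub-quasi-tree supported on $V_H$, which is well-defined by Proposition \ref{P:5.5}. Every cut of the restricted layout is the trace on $V_H$ of a cut of the original, and $M_H[U,W]$ is a submatrix of $M_G[U,W]$, so the rank cannot increase. This yields both monotonicity statements.

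For (2), the inequality $\sup\{rwd(H)\mid H\subseteq_i G, H \text{ finite}\}\leq rwd(G)$ is immediate from (1). The substantial direction is the reverse: assuming $k := \sup\{rwd(H)\mid H\subseteq_i G, H \text{ finite}\}<\infty$, I build a subcubic quasi-tree layout of $G$ of rank at most $k$. Enumerate $V_G=\{v_1,v_2,\dots\}$ and let $G_n:=G[\{v_1,\dots,v_n\}]$. Each $G_n$ has a subcubic discrete layout of rank at most $k$, and there are only finitely many such layouts up to isomorphism. By the restriction procedure of (1), every layout of $G_{n+1}$ of rank at most $k$ reduces to a layout of $G_n$ of rank at most $k$. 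Arrange these layouts into a finitely branching rooted tree whose level-$n$ vertices are the $G_n$-layouts; by K\"onig's lemma there is an infinite branch $T_1,T_2,\dots$, in which $T_{n+1}$ is obtained from $T_n$ by adding the leaf $v_{n+1}$ either at a new node or by subdividing an existing edge---precisely the scenario described in the introduction. The direct limit of this chain is a quasi-tree $Q$ with leaves $V_G$; every cut of $Q$ between finite sets $U$ and $W$ is inherited from some $T_n$ and therefore has rank at most $k$, and the formula $rk(M_G[X,Y])=\sup\{rk(M_G[U,W])\mid U\subseteq X,\ W\subseteq Y \text{ finite}\}$ extends the bound to all cuts. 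Hence $rwd(G)\leq k$.

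For (3), I would convert a subcubic quasi-tree layout $Q$ of rank $k$ into a discrete layout $T$ of rank at most $2k$. The tree $T$ is obtained by discretizing a structuring of $Q$: each maximal line of $Q$ is replaced in $T$ by a single edge, breaking dense order types. A cut in $T$ along such an edge separates $V_G$ into $X$ and $X^c$ which in $Q$ decomposes as $X=X_0\uplus X_1$ along the two boundaries of the collapsed line, each $(X_i, V_G\setminus X_i)$ being a quasi-tree cut of rank at most $k$. Since $rk(M_G[X,X^c])\leq rk(M_G[X_0,X_0^c])+rk(M_G[X_1,X_1^c])$ by subadditivity of matrix rank, the rank of $T$ is at most $2k$. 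I expect the main obstacle to lie in the compactness argument of (2): one must verify that the K\"onig's-lemma limit of the chain $T_1\subseteq T_2\subseteq\cdots$ is a bona fide quasi-tree in the sense of Definition \ref{D:5.4}, and that the supremum definition of $rk(M_G[X,Y])$ on infinite parts correctly controls the rank of cuts that appear only in the limit, possibly at dense limit positions of a line that carries no single vertex.
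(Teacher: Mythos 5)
Your plan matches the approach the paper itself indicates: Theorem~\ref{T:5.12} is quoted from \cite{Cou14} and is not proved here, the only indication given being the remark that part (2) follows by K\"onig's Lemma applied to an increasing sequence of finite layouts of finite induced subgraphs whose union is in general a quasi-tree rather than a tree --- which is exactly your argument for (2), and your sketches of (1) by restriction and of (3) by discretizing a structuring with the factor $2$ coming from rank subadditivity are the standard arguments of \cite{Cou14}. The obstacles you flag at the end (realizing the isomorphism-level branch as an actual increasing chain, verifying that the limit satisfies A1--A7, and controlling cuts that appear only at dense limit positions via the supremum formula for infinite rank) are precisely the points the cited paper works out in detail, so your proposal is a faithful outline rather than a complete proof.
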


The \emph{gap function} in (3) is $n\mapsto 2n$, showing a weak form of
compactness. The proof of (2) uses Koenig's Lemma, and consists in taking $G$
as the union of an increasing sequence of finite induced subgraphs. The
desired layout of $G$ is obtained from an increasing sequence of finite
layouts of finite induced subgraphs where nodes are successively added. The
union of these layouts is in general a quasi-tree and not a tree.

\bigskip

\section{Conclusion}

We have defined regular join-trees of different kinds and regular
quasi-trees from regular terms. These terms have finitary descriptions.
Other infinite terms have finitary descriptions: the algebraic ones \cite{Cou83} and more generally, those of Caucal's hierarchy \cite{Blu+}. Such
terms also yield effective (algorithmically usable) notions of join-trees
and quasi-trees. It is unclear whether the corresponding isomorphism
problems are decidable\footnote{Z. \'{E}sik proved in \cite{E} that the isomorphism of the lexicographic
orderings of two context-free languages is undecidable. As algebraic linear
orders are defined from \emph{deterministic} context-free languages \cite{BE11}, deciding their isomorphism might be nevertheless possible.}.

The article \cite{Car} establishes that a \emph{set of arrangements} is
recognizable if and only if it is MS definable. One might wish to extend
this result to sets of join-trees and quasi-trees. An appropriate notion of
recognizability must be defined.

\bigskip

\end{document}